\newcommand{\openone}{\leavevmode\hbox{\small1\normalsize\kern-.33em1}}
\def\UrlSpecials{\do\~{\kern -.15em\lower .7ex\hbox{~}\kern .04em}} \catcode`~=13
\newcommand{\nn}{\nonumber}
\newcommand{\calA}{\mathcal{A}}
\newcommand{\calB}{\mathcal{B}}
\newcommand{\calC}{\mathcal{C}}
\newcommand{\calD}{\mathcal{D}}
\newcommand{\calF}{\mathcal{F}}
\newcommand{\calH}{\mathcal{H}}
\newcommand{\calM}{\mathcal{M}}
\newcommand{\calP}{\mathcal{P}}
\newcommand{\calR}{\mathcal{R}}
\newcommand{\calS}{\mathcal{S}}
\newcommand{\calT}{\mathcal{T}}
\newcommand{\calW}{\mathcal{W}}
\newcommand{\calX}{\mathcal{X}}
\newcommand{\bP}{\mathbf{P}}
\newcommand{\bQ}{\mathbf{Q}}
\newcommand{\bx}{\mathbf{x}}
\newcommand{\bX}{\mathbf{X}}
\newcommand{\rmA}{\mathrm{A}}
\newcommand{\rmG}{\mathrm{G}}
\newcommand{\rmH}{\mathrm{H}}
\newcommand{\rmL}{\mathrm{L}}
\newcommand{\rmN}{\mathrm{N}}
\newcommand{\rmr}{\mathrm{r}}
\newcommand{\rmS}{\mathrm{S}}
\newcommand{\bbN}{\mathbb{N}}
\newcommand{\bbP}{\mathbb{P}}
\newcommand{\bbR}{\mathbb{R}}
\DeclareMathAlphabet{\mathbsf}{OT1}{cmss}{bx}{n}
\DeclareMathAlphabet{\mathssf}{OT1}{cmss}{m}{sl}
\DeclareSymbolFont{bsfletters}{OT1}{cmss}{bx}{n}
\DeclareSymbolFont{ssfletters}{OT1}{cmss}{m}{n}
\DeclareMathSymbol{\bsfGamma}{0}{bsfletters}{'000}
\DeclareMathSymbol{\ssfGamma}{0}{ssfletters}{'000}
\DeclareMathSymbol{\bsfDelta}{0}{bsfletters}{'001}
\DeclareMathSymbol{\ssfDelta}{0}{ssfletters}{'001}
\DeclareMathSymbol{\bsfTheta}{0}{bsfletters}{'002}
\DeclareMathSymbol{\ssfTheta}{0}{ssfletters}{'002}
\DeclareMathSymbol{\bsfLambda}{0}{bsfletters}{'003}
\DeclareMathSymbol{\ssfLambda}{0}{ssfletters}{'003}
\DeclareMathSymbol{\bsfXi}{0}{bsfletters}{'004}
\DeclareMathSymbol{\ssfXi}{0}{ssfletters}{'004}
\DeclareMathSymbol{\bsfPi}{0}{bsfletters}{'005}
\DeclareMathSymbol{\ssfPi}{0}{ssfletters}{'005}
\DeclareMathSymbol{\bsfSigma}{0}{bsfletters}{'006}
\DeclareMathSymbol{\ssfSigma}{0}{ssfletters}{'006}
\DeclareMathSymbol{\bsfUpsilon}{0}{bsfletters}{'007}
\DeclareMathSymbol{\ssfUpsilon}{0}{ssfletters}{'007}
\DeclareMathSymbol{\bsfPhi}{0}{bsfletters}{'010}
\DeclareMathSymbol{\ssfPhi}{0}{ssfletters}{'010}
\DeclareMathSymbol{\bsfPsi}{0}{bsfletters}{'011}
\DeclareMathSymbol{\ssfPsi}{0}{ssfletters}{'011}
\DeclareMathSymbol{\bsfOmega}{0}{bsfletters}{'012}
\DeclareMathSymbol{\ssfOmega}{0}{ssfletters}{'012}
\newcommand{\tilk}{\tilde{k}}
\newcommand{\tilP}{\tilde{P}}
\newcommand{\hatT}{\hat{T}}
\newcommand{\tilt}{\tilde{t}}
\newcommand{\bari}{\bar{i}}
\DeclareMathOperator*{\argmin}{arg\,min}
\newtheorem{theorem}{Theorem}
\newtheorem{lemma}[theorem]{Lemma}
\newtheorem{definition}{Definition}
\newcommand{\subsubsubsection}[1]{\paragraph{#1}\mbox{}}
\definecolor{Dyellow}{RGB}{254,152,0}
\definecolor{Dgreen}{RGB}{0,176,80}
\newcommand{\Rmnum}[1]{\expandafter\@slowromancap\romannumeral #1@}
\def\BibTeX{{\rm B\kern-.05em{\sc i\kern-.025em b}\kern-.08em
T\kern-.1667em\lower.7ex\hbox{E}\kern-.125emX}}
\begin{document}

\title{Sequential Outlier Hypothesis Testing under Universality Constraints
\thanks{}
}
\author{Jun Diao and Lin Zhou
\thanks{This paper was partially presented at ITW 2024~\cite{diao2024sequentialOHT}.}
\thanks{The authors are with School of Cyber Science and Technology, Beihang University, Beijing, China (100191). Emails: \{jundiao,~lzhou\}@buaa.edu.cn.
}
}

\maketitle

\begin{abstract}
We revisit sequential outlier hypothesis testing and derive bounds on achievable exponents when both the nominal and anomalous distributions are \emph{unknown}. The task of outlier hypothesis testing is to identify the set of outliers that are generated from an anomalous distribution among all observed sequences where the rest majority are generated from a nominal distribution. In the sequential setting, one obtains a symbol from each sequence per unit time until a reliable decision could be made. For the case with exactly one outlier, our exponent bounds are tight, providing exact large deviations characterization of sequential tests and strengthening a previous result of Li, Nitinawarat and Veeravalli (2017). In particular, the average sample size of our sequential test is bounded universally under any pair of nominal and anomalous distributions and our sequential test achieves larger Bayesian exponent than the fixed-length test, which could not be guaranteed by the sequential test of Li, Nitinawarat and Veeravalli (2017). For the case with at most one outlier, we propose a threshold-based test that has bounded expected stopping time under mild conditions and we bound the exponential decay rate of error probabilities under each non-null hypothesis and the null hypothesis. Our sequential test resolves the tradeoff among the exponential decay rates of misclassification, false reject and false alarm probabilities for the fixed-length test of Zhou, Wei and Hero (TIT 2022). Finally, with a further step towards practical applications, we generalize our results to the cases of multiple outliers and show that there is a penalty in the error exponents when the number of outliers is unknown.
\end{abstract}

\begin{IEEEkeywords}
Error Exponent, Large Deviations, Anomaly Detection, Universal Test, Method of Types
\end{IEEEkeywords}

\section{Introduction}

Outlier hypothesis testing is a statistical inference problem~\cite{li2014,li2017universal,zhou2022second,bu2019linear}, where one is asked to identify a set of outliers among a given number $M$ of observed sequences. The majority of sequences, named nominal samples, are generated i.i.d. from a nominal distribution and the rest sequences, named outliers, are generated i.i.d. from an anomalous distribution that is different from the nominal distribution.  Both the nominal and anomalous distributions are \emph{unknown}. The number of outliers can be either known or unknown. When the number of outliers is known, the problem is relatively simpler and generalizes statistical classification~\cite{gutman1989asymptotically,zhou2020second}. When the number of outliers is unknown, one could estimate the number of outliers and subsequently identify the set of outliers using a test for the known number case. As a compromise, one could consider the case of at most $T$ outliers, where an upper bound $0<T\le\lceil\frac{M}{2}-1\rceil$ on the number of outliers is known. Under this case, the null hypothesis indicates that there is no outlier while each non-null hypothesis specifies the indices of the outliers. In this paper, we consider both cases of exactly $T$ and at most $T$ outliers. For ease of understanding, we first study the case of $T=1$ and subsequently generalize results to the case of $T>1$. When $T=1$, the cases with known and unknown number of outliers are termed exactly one outlier and at most one outlier, respectively~\cite{li2014,zhou2022second}.

Depending on the test design, a test can be fixed-length or sequential. When the sample size of each observed sequence is fixed, the corresponding test is a fixed-length test. When the sample size is a random variable depending on particular realization of observed sequences, the corresponding test is a sequential test. In a sequential test, one obtains a new sample from each sequence per unit time until one is confident to make a reliable decision. The expected value of the sample size is also known as the expected stopping time. Since the generating distributions of sequences are unknown, for sequential tests, naturally, one can put a universal constraint either on the error probability or the average stopping time~\cite[Def. 2 and 3]{Ihwang2022sequential}. Specifically, for any pair of nominal and anomalous distributions, the error probability universality constraint requires the test to have the error probability bounded by a tolerable value $\beta\in(0,1)$ under each hypothesis while the expected stopping time universality constraint requires that the expected stopping time under each hypothesis is bounded. Correspondingly, for fixed-length tests, only error probability universality constraint is valid since the sample size is fixed a-priori.

For the case with known number of outliers, Li, Nitinawarat and Veeravalli~\cite[Theorems 3 and 10]{li2014} proposed generalized likelihood (GL) fixed-length tests and proved the optimality of the test by having largest exponential decay rates of error probabilities when the number $M$ of observed sequences tends to infinity. Subsequently, Li, Nitinawarat and Veeravalli~\cite[Theorems 3.2 and 4.2]{li2017universal}  generalized the above results to the sequential setting for the case with unknown number of outliers under the error probability universality constraint, which can be specialized to the case with known number of outliers. However, there are several limitations for the results in \cite{li2017universal}. Firstly, only achievability results under the error probability universality constraint were derived. Without a matching converse result, the optimality of error exponents could not be guaranteed. Secondly, the expected stopping time constraint was not considered, which leads to the undesired fact that the sequential tests might stop at very large sample sizes. Thirdly, although the null hypothesis with no outlier is considered, the exponents were only derived for error probabilities under non-null hypotheses. Finally, it was only numerically shown that the sequential test outperforms the fixed-length test under the non-null hypotheses when the average stopping time is relatively large~\cite[Figs. 1 and 2]{li2017universal}. Without a theoretical guarantee, the benefit of the sequential test is not fully uncovered.
Our first contribution addresses all above limitations by considering a slightly easier setting of exactly one outlier and exactly multiple outliers. Furthermore, for all limitations but the first one, our second contribution, stated below, addresses the case with at most $T$ outliers.

For the case with unknown number of outliers, Li, Nitinawarat and Veeravalli~\cite[Theorem 5]{li2014}  proposed another fixed-length test, characterized the exponential decay rate of maximal error probability under non-null hypotheses and proved that the error probability under the null hypothesis vanishes as the sample size increases. Subsequently, Zhou, Wei and Hero~\cite[Theorem 3 and 6]{zhou2022second} proposed an optimal fixed-length test under the generalized Neyman-Pearson criterion and characterized the tradeoff between the exponential decay rates of error probabilities under non-null and null hypotheses. Li, Nitinawarat and Veeravalli~\cite[Theorem 3.2 and 4.2]{li2017universal} generalized the results in \cite{li2014} to the sequential setting and derived the exponential decay rate of maximal error probability under non-null hypotheses. However, for the case with unknown number of outliers, there are several limitations for the sequential tests in \cite{li2017universal}. Firstly, the authors did not characterize the exponential decay rate of error probability under the null hypothesis. Furthermore, there is no theoretical comparison between error exponents of the sequential and the fixed-length tests. One might wonder whether it is possible to design a sequential test that has guaranteed better theoretical performance than the fixed-length test and also resolves tradeoff of error probabilities of the fixed-length test under non-null and null hypotheses. Our second contribution sheds lights on the positive answer for this question for at most one outlier and at most multiple outliers. In the following subsection, we clarify our main contributions.

\subsection{Main Contributions}
Our main contribution is error exponent analyses for sequential tests under universality constraints when both the nominal and anomalous distributions are unknown and when the number of outliers is either known or unknown.

For the case of exactly one outlier, we refine the result in~\cite[Theorem 3.2]{li2017universal} by deriving a matching converse result and re-proving a simpler achievability part under the error probability universality constraint. Furthermore, we derive the exact error exponents under the expected stopping time constraint. In particular, in the achievability part, we propose a sequential test that has bounded average stopping time under any pair of nominal and anomalous distributions and analytically show that the test could have strictly better performance than the fixed-length test in~\cite[Eq. (13)]{li2014}. Through numerical examples, we show that in certain scenarios, the error exponent of an optimal test under the expected stopping time universality is larger than the error exponent of an optimal test under the error probability universality constraint. We also generalize our results to the case of exactly multiple outliers, derive tight bounds on error exponents and analytically demonstrate that our test has strictly better performance than the fixed-length test in~\cite[Eq. (37)]{li2014}.

For the case of at most one outlier, we strengthen \cite[Theorem 3.2]{li2017universal} by proposing a threshold-based test that satisfies the expected stopping time constraint under mild conditions and by characterizing the achievable error exponents under each non-null and the null hypotheses. Our sequential test improves the performance of the fixed-length test in~\cite[Eq. (5)]{zhou2022second} by resolving the tradeoff among the error probabilities under non-null and null hypotheses and having a larger Bayesian error exponent. Furthermore, we generalize our results to the case of at most multiple outliers and derive achievable error exponents. We demonstrate that when the upper bound $T$ on the number of outliers increases, the performance in terms of Bayesian error exponent decreases. For the case of at most $T$ outliers, even when the true number of outliers is one, the performance is inferior compared with the performance for the case of at most one outlier. This is because we need to consider the additional error event where the number of outliers is incorrectly estimated. Finally, comparing with the cases when the number of outliers is known, we theoretically show that there is a penalty in the error exponents when the number of outliers is unknown.

The contributions of our paper go beyond \cite{li2017universal,Ihwang2022sequential}, although our proof techniques are similar to them, using the method of types~\cite{csiszar1998mt} extensively. Specifically, our main contributions beyond \cite{li2017universal} are summarized in the Table \ref{contribution}. Firstly, we prove tight large deviations results for optimal sequential results under the error probability universality constraint while only achievability results were proved in \cite{li2017universal}. Secondly, we propose sequential tests for OHT under the expected stopping time universality constraint and analyze the performance of such tests, which has not been previously addressed. Thirdly, we theoretically verify the superiority of sequential tests for OHT while only numerical results were provided in \cite{li2017universal}. On the other hand, compared with \cite{Ihwang2022sequential} where the authors studied sequential tests for binary classification, we consider the considerably more difficult OHT problem in this paper. Specifically, in binary classification, there are two training sequences and one testing sequence, and the task is to determine the generating distribution of the testing sequence. In contrast, in OHT studied in this paper, one is given a number of sequences and the task is to determine whether there exists outliers and identify the set of outliers if yes. Thus, the contributions of this paper lie in the new theoretical contributions of OHT instead of proposing new proof techniques.
\begin{table}
\centering
\caption{Our contributions beyond prior work under different settings}
\begin{tabular}{ScScSc}
\toprule
\diagbox{Contributions}{Tests} & Existing sequential test~\cite{li2017universal}  & Our sequential test \\
\midrule
\multirow{2}*{Results under error probability universality}& \multirow{2}*{Only achievability results} & Re-proving a simper achievability \\
~& ~& Providing a matching converse\\
\hline
Results under expected stooping time universality  & No result & Deriving exact large deviations\\
\hline
The superiority over fixed-length tests~\cite{li2014,zhou2022second}  & {Only numerical examples}  & {Theoretical verification and numerical illustration} \\
\hline
{Analyses of the null hypothesis}  & \multirow{2}*{No result} & Characterizing error probabilities under the null hypothesis\\
when the number of outliers is unknown &~ & Resolving the tradeoff among three error probabilities\\
\bottomrule
\end{tabular}
\label{contribution}
\end{table}

\subsection{Other Related Works}
We briefly recall other (non-exhausting) related works on outlier hypothesis testing. For both cases with known and unknown number of outliers, Bu, Zou and Veeravalli~\cite{bu2019linear} proposed a low-complexity test for outlier hypothesis testing and showed that the test ensures exponential decay of error probabilities.
Zou~\emph{et al.}~\cite{zou2017nonparametric} used the maximum mean discrepancy metric to design a fixed-length test for outlier hypothesis testing of continuous sequences and showed that the test is exponentially consistent. Recently, Zhu and Zhou~\cite{zhu2024exponentially} generalized the results of \cite{zou2017nonparametric} to derive error exponents of distribution free fixed-length and sequential tests for continuous observed sequences.

Outlier hypothesis testing is also closely related with statistical classification where one is asked to determine whether the testing sequences is generated from one of $M$ unknown distributions using training sequences. Gutman~\cite{gutman1989asymptotically} proposed an optimal test under the generalized Neyman-Pearson criterion and characterized its error exponents. Haghifam, Tan and Khisti~\cite{mahdi2021sequential} generalized Gutman's result to the sequential setting by proposing a sequential test under the error probability universal constraint and deriving achievable error exponents. The results for the binary case of \cite{mahdi2021sequential} were further generalized by Hsu, Li and Wang~\cite{Ihwang2022sequential}, who proposed the sequential tests under both the expected stopping time and error probability universality constraints and derived corresponding tight error exponent bounds. Diao, Zhou and Bai~\cite{diao2023classification,zhou2023achievable} proposed a two-phase test for statistical classification and showed that their test achieves performance close to an optimal sequential test with design complexity propositional to a fixed-length test.

\subsection*{Notation}
We use $\bbR$, $\bbR_+$, $\bbN$ to denote the set of real numbers, non-negative real numbers, and natural numbers respectively. Given any
two integers $(a,b)\in\bbN^2$ such that $1\leq a<b$, we use $[a:b]$ to denote the set of integers $\{a,a+1,\ldots,b\}$ and use $[a]$ to denote $[1:a]$. Random variables and their realizations are denoted by upper case variables (e.g., $X$) and lower case variables (e.g., $x$), respectively. All sets are denoted in calligraphic font (e.g., $\mathcal{X}$). Given any set $\calA$, we use $\calA^c$ to denote its complementary set. Given any integer $n\in\bbN$, let $X^n:=(X_1,\ldots X_n)$ be a random vector of length $n$ and let $x^n=(x_1,\ldots,x_n)$ be a particular realization of $X^n$. The set of all probability distributions on a finite set $\calX$ is denoted as $\calP(\calX)$. We use $\mathbb{E}[\cdot]$ to denote expectation. Given a sequence $x^n\in\calX^n$, the type or empirical distribution is defined as $\hatT_{x^n}(a)=\frac{1}{n}\sum_{i=1}^{n}\mathbbm{1}(x_i=a),\forall a\in\calX$. The set of types formed from length-$n$ sequences with alphabet $\calX$ is denoted by $\calP^n(\calX)$. Given any $P\in\calP^n(\calX)$, the set of all sequences of length $n$ with type $P$, a.k.a. the type class, is denoted by $\calT_P^n$.

\section{Case of Exactly One Outlier}

\subsection{Problem Formulation}
Consider a set of $M$ observed sequences $\bX^{\tau}:=\big\{X^{\tau}_1,\ldots,X^{\tau}_M\big\}$, where $\tau$ is a random stopping time with respect to the filtration $\{\calF_n\}_{n\in\bbN}$ and $\calF_n$ is generated by $\sigma$-algebra $\sigma\{X_1,X_2,\ldots X_n\}$ for each $n\in\bbN$. Most sequences are generated i.i.d. from an unknown nominal distribution $P_\rmN$ while the rest sequences known as outliers are generated i.i.d. from an unknown anomalous distribution $P_\rmA$. In this section, we consider the case of exactly one outlier while in subsequent sections, we generalize our results to the cases of at most one outlier and multiple outliers with the number of outliers known and unknown.

When there is exactly one outlier, the task is to design a test $\Phi=(\tau,\phi)$ that consists of a random stopping time $\tau$ and a decision rule $\phi:\calX^{M\tau}\to\{{\rmH}_1,{\rmH}_2,\ldots,{\rmH}_M\}$ to classify among the following $M$ hypotheses:
\begin{itemize}
\item $\rmH_i,~i\in[M]$: the $i$-th sequence is the outlier.
\end{itemize}
To evaluate the performance of a test, we consider the misclassification probability and the expected stopping time. Specifically, for each $i\in[M]$, the misclassification probability is defined as follows:
\begin{align}
\label{def:misclassify:p}
\psi_i(\Phi|P_\rmN,P_\rmA):=\bbP_i\{\Phi(\bX^\tau)\neq\rmH_i\},i\in[M],
\end{align}
where we define $\bbP_i(\cdot):=\Pr\{\cdot|\rmH_i\}$ to denote the joint distribution of observed sequences $\bX^\tau$, where $X_i^\tau$ is generated i.i.d. from the anomalous distribution $P_\rmA$ and for each $j\in\calM_i:=\{j\in[M]:j\neq i\}$, $X_j^\tau$ is generated i.i.d. from the nominal distribution $P_\rmN$.
Furthermore, the expected stopping time under hypothesis $\rmH_i$ satisfies
\begin{align}\label{def:Ei}
\mathbb{E}_i[\tau]=\sum_{k=1}^{\infty}\bbP_i\{\tau> k\}.
\end{align}

Since the generating distributions $(P_\rmN,P_\rmA)$ are unknown and there are two performance criteria, one could put a universal constraint on either criterion. Motivated by the analyses for sequential binary classification~\cite[Def. 2 and 3]{Ihwang2022sequential}, we define the following two universality constraints on sequential tests.

\begin{definition}
(Error Probability Universality Constraint):  A sequential test $\Phi$ is said to satisfy the error probability universality constraint if there exists a positive real number $\beta\in(0,1)$ such that for any pair of distributions $(P_\rmN,P_\rmA)\in\calP(\calX)^2$,
\begin{align}\label{constraint1:ep}
\max_{i\in[M]}\psi_i(\Phi|P_\rmN,P_\rmA)\le \beta.
\end{align}
\end{definition}
For any sequential test $\Phi$ satisfying the error probability universality constraint, we are interested in the following misclassification exponent for each $i\in[M]$:
\begin{align}
E_i^{\mathrm{EP}}(\Phi|P_\rmN,P_\rmA):=\liminf_{\beta\to 0}\frac{-\log\beta}{\mathbb{E}_i[\tau]}.
\end{align}

\begin{definition}
(Expected Stopping Time Universality Constraint): A sequential test $\Phi$ is said to satisfy the expected stopping time universality constraint if there exists an integer $n\in\bbN$ such that for any pair of distributions $(P_\rmN,P_\rmA)\in\calP(\calX)^2$,
\begin{align}\label{constraint1:est}
\max_{i\in[M]} \mathbb{E}_i[\tau]\le n.
\end{align}
\end{definition}
For a sequential test satisfying the expected stopping time universality constraint, we are interested in the following misclassification exponent for each $i\in[M]$:
\begin{align}\label{def:exponent:est}
E_i^{\mathrm{EST}}(\Phi|P_\rmN,P_\rmA):=\liminf_{n\to\infty}\frac{-\log\psi_i(\Phi|P_\rmN,P_\rmA)}{n}.
\end{align}

\subsection{Existing Results for the Fixed-Length Test}
To compare the performance of sequential tests and fixed-length tests, we first recall the fixed-length test $\Phi_{\rm LNV}$ by Li, Nitinawarat and Veeravalli~\cite[Eq. (15)]{li2014}. Note that the fixed-length test satisfies the stopping time universal constraint. Recall that $\calM_i=\{j\in[M]:j\neq i\}$. Given any two distributions $(P,Q)\in\calP(\calX)^2$, the KL divergence is defined as
\begin{align}
D(P\|Q)&=\sum_{x\in\calX}P(x)\log\frac{P(x)}{Q(x)}.\label{def:KL}
\end{align}
Note that $D(P\|Q)$ is continuous with respect to distributions $P$ and $Q$.
Furthermore, given a tuple of distributions $\bQ=(Q_1,\ldots,Q_M)\in\calP(\calX)^M$, for each $i\in[M]$, define the following linear combination of KL divergence terms between each single distribution and a mixture distribution~\cite[Eq. (4)]{zhou2022second}:
\begin{align}\label{G_i}
\rmG_i(\bQ):=\sum\limits_{j\in\calM_i}D\bigg(Q_j\bigg\|\frac{\sum_{l\in\calM_i}Q_l}{M-1}\bigg).
\end{align}
Note that $\rmG_i(\bQ)$ measures the similarity of distributions $\bQ$ except $Q_i$. The value of $\rmG_i(\bQ)$ equals zero if and only if $Q_j=Q$ for all $j\in\calM_i$ with an arbitrary $Q\in\calP(\calX)$. Due to the continuous property of $D(P\|Q)$, $\rmG_i(\bQ)$ is also continuous with respect to distributions $\bQ$. For the case of exactly one outlier, the test in~\cite[Eq. (15)]{li2014} applies the following minimal scoring function decision rule:
\begin{align}\label{test_Li}
\Phi_{\rm LNV}(\bx^n)=\rmH_j,~\mathrm{if}~j=\argmin\limits_{i\in[M]}\rmS_i(\bx^n),
\end{align}
where the scoring function $\rmS_i(\cdot)$ satisfies
\begin{align}
\rmS_i(\bx^n)=\rmG_i(\hatT_{x_1^n},\ldots,\hatT_{x_M^n})\label{def:scrore:si}.
\end{align}

Li, Nitinawarat and Veeravalli derived the following result~\cite[Theorem 2]{li2014}.
\begin{theorem}\label{fixed}
Given any pair of distributions $(P_\rmN,P_\rmA)\in\calP(\calX)^2$ that are fully supported on the finite alphabet $\calX$, for each $i\in[M]$, the achievable error exponent of the fixed-length test $\Phi_{\rm LNV}$ satisfies
\begin{align}\label{fixed:error}
E_i^{\mathrm{EST}}(\Phi_{\rm LNV}|P_\rmN,P_\rmA)=\min_{\substack{\bQ\in\calP(\calX)^M:\\\rmG_1(\bQ)\ge\rmG_2(\bQ)}}D(Q_1||P_\rmA)+\sum\limits_{j\in[2:M]}D(Q_j||P_\rmN).
\end{align}
\end{theorem}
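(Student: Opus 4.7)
The plan is to prove Theorem~\ref{fixed} via the method of types~\cite{csiszar1998mt}. Since $\rmS_i(\bx^n)=\rmG_i(\hatT_{x_1^n},\ldots,\hatT_{x_M^n})$ depends on the data only through the joint type profile $\hat{\mathbf{Q}}:=(\hatT_{X_1^n},\ldots,\hatT_{X_M^n})\in\calP^n(\calX)^M$, the analysis reduces to a Sanov-type calculation. By the permutation symmetry of the $M$ hypotheses, it suffices to establish \eqref{fixed:error} for $i=1$. Under $\bbP_1$, the standard type-class probability estimate gives, up to factors polynomial in $n$,
\begin{align}
\bbP_1\{\hat{\mathbf{Q}}=\mathbf{Q}\}\doteq\exp\!\left(-n\left[D(Q_1\|P_\rmA)+\sum_{j\in[2:M]}D(Q_j\|P_\rmN)\right]\right),
\end{align}
and the misclassification event under $\rmH_1$ is $\{\exists\,j\in[2:M]:\rmG_j(\hat{\mathbf{Q}})\le\rmG_1(\hat{\mathbf{Q}})\}$ (with an arbitrary tie-breaking rule).

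For the achievability direction (upper bound on $\psi_1$), I would union-bound over $j\in[2:M]$ and sum the type-class estimates above over $\{\mathbf{Q}\in\calP^n(\calX)^M:\rmG_j(\mathbf{Q})\le\rmG_1(\mathbf{Q})\}$. The crucial observation is that under $\bbP_1$ the nominal coordinates $2,\ldots,M$ are exchangeable: both the objective $D(Q_1\|P_\rmA)+\sum_{l\ge 2}D(Q_l\|P_\rmN)$ and the constraint $\rmG_j$ are invariant under permutations of $(Q_2,\ldots,Q_M)$. Swapping indices $2$ and $j$ therefore shows that each of the $M-1$ union-bound terms produces the same exponent, namely the RHS of \eqref{fixed:error}, and the $\log(M-1)$ overhead is subexponential.

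For the matching converse, fix any $\varepsilon>0$ and pick $\mathbf{Q}^\star\in\calP(\calX)^M$ that is $\varepsilon$-suboptimal in \eqref{fixed:error} and satisfies strict inequality $\rmG_2(\mathbf{Q}^\star)<\rmG_1(\mathbf{Q}^\star)$ with full-support coordinates; any minimizer can be perturbed to meet these requirements by the joint continuity of $D(\cdot\|\cdot)$ and of $\rmG_i(\cdot)$ on the full-support simplex, which uses the hypothesis that $P_\rmN,P_\rmA$ have full support. Quantize $\mathbf{Q}^\star$ to a rational type $\mathbf{Q}^{(n)}\in\calP^n(\calX)^M$ with $\mathbf{Q}^{(n)}\to\mathbf{Q}^\star$. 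On the event $\{\hat{\mathbf{Q}}=\mathbf{Q}^{(n)}\}$ the strict inequality persists for all $n$ large and the test outputs $\rmH_2\neq\rmH_1$, so this single type-class contributes to $\psi_1$ with probability at least $\exp(-n[D(Q_1^\star\|P_\rmA)+\sum_{l\ge 2}D(Q_l^\star\|P_\rmN)]-o(n))$. Letting $\varepsilon\to 0$ closes the gap.

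The main obstacle is the careful treatment of ties and boundary behavior: a priori the infimum in \eqref{fixed:error} could be attained only on the boundary $\{\rmG_2=\rmG_1\}$ where the test's tie-breaking matters, and one must ensure that the reduction from the constraint ``$\exists j\neq 1:\rmG_j\le\rmG_1$'' to ``$\rmG_2\le\rmG_1$'' does not alter the exponent. Both issues are dispatched by the continuity-plus-perturbation step above together with the exchangeability argument, yielding the exact exponent in \eqref{fixed:error}.
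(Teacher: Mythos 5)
You should first note that the paper does not actually prove Theorem~\ref{fixed}: it is recalled verbatim from Li, Nitinawarat and Veeravalli \cite[Theorem 2]{li2014}, so there is no in-paper proof to compare against. Judged on its own merits, your sketch is the standard method-of-types/Sanov argument and is essentially the right one: the reduction to the joint type profile, the union bound over $j\in[2:M]$ with the exchangeability of the nominal coordinates collapsing all terms to the $j=2$ exponent, and the single-type-class lower bound for the converse are all sound. Two points deserve more care than your sketch gives them. First, in the converse you assert that a minimizer on the boundary $\{\rmG_1=\rmG_2\}$ can be perturbed to satisfy $\rmG_2(\bQ)<\rmG_1(\bQ)$ at negligible cost; this is true but needs a concrete mechanism rather than an appeal to continuity alone. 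A clean one: $\rmG_2(\bQ)$ does not depend on $Q_2$ at all, whereas $\rmG_1(\bQ)$ does, so perturbing only $Q_2$ (away from the mixture $\tfrac{1}{M-1}\sum_{l\in\calM_1}Q_l$) strictly increases $\rmG_1$ while leaving $\rmG_2$ fixed and changing the objective by $O(\varepsilon)$, using full support of $P_\rmN$ to keep $D(Q_2\|P_\rmN)$ finite. Second, on the event $\{\hat{\mathbf{Q}}=\mathbf{Q}^{(n)}\}$ with $\rmG_2<\rmG_1$ the test need not output $\rmH_2$ specifically (the argmin may be some other index), but it certainly does not output $\rmH_1$, which is all the converse requires; you should phrase it that way. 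With these repairs the argument is complete and matches what the original reference does.
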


Our first contribution in this paper is to characterize the optimal error exponent of sequential tests for outlier hypothesis testing under both the expected stopping time universality constraint and the error probability universality constraint. Specifically, for each case, we propose a corresponding sequential test using the empirical distributions of observed sequences and derive exact large deviations for error probabilities. In particular, our results strengthen \cite[Theorem 3.2]{li2017universal} by Li, Nitinawarat and Veeravalli for the case of exactly one outlier in two aspects: i) we derive a matching converse result for optimal sequential tests under the error probability universality constraint, and ii) we analytically show that our sequential test satisfying the expected stopping time universality constraint has larger exponent than the fixed-length test.

\subsection{Error Probability Universality Constraint}

\subsubsection{Test Design and Intuition}\label{test_errorprob}
Recall the definition of the scoring function $\rmS_l(\cdot)$ for any $l\in[M]$ in \eqref{def:scrore:si}. Given any positive real number $\beta\in(0,1)$ and any integer $k\in\bbN$, define a threshold function
\begin{align}\label{gbeta}
g(\beta,k):=\frac{-\log\big(\beta(|\calX|-1)\big)}{k}+\frac{(M+1)|\calX|\log(k+1)}{k}.
\end{align}
and a set of indices
\begin{align}\label{psik}
\Psi(\bx^k):=\Big\{l\in[M]:~\rmS_l(\bx^k)>g(\beta,k)\Big\}.
\end{align}
As we shall show, the threshold $g(\beta,k)$ is critical to universally bound the misclassification probability by $\beta$ (cf. \eqref{gk}).

Under the error probability universality constraint, our sequential test $\Phi^\mathrm{EP}=(\tau^{\rm{EP}},\phi^{\rm{EP}})$ consists of the random stopping time $\tau^{\rm{EP}}$ and the decision rule. Specifically, the random stopping time $\tau^{\rm{EP}}$ satisfies
\begin{align}\label{tau:ep}
\tau^{\rm{EP}}:=\inf\big\{k\in\bbN:~|\Psi(\bx^k)|\ge M-1\big\}.
\end{align}
Note that for each $l\in[M]$, $\rmS_l(\bx^k)$ measures the closeness of types of all sequences except the $l$-th sequence. Thus, the sequential test $\Phi^\mathrm{EP}$ stops if the types of nominal samples and the outlier are far away for all $M-1$ possibilities that mix $M-2$ nominal samples and the outlier. The threshold $g(\beta,k)$ determines how ``far away'' is measured, which decreases with the error probability $\beta$ and the sample size $k$.

At stopping time $\tau^{\rm{EP}}$, our test uses the following decision rule:
\begin{align}
\label{ep:test}
\phi^\mathrm{EP}\big(\bx^{\tau^\mathrm{EP}}\big)=\rmH_i,~\mathrm{if}~i=[M]\backslash\Psi\big(\bx^{\tau^\mathrm{EP}}\big).
\end{align}
The above test generalizes the test for sequential classification in \cite[Sec. IV. B]{Ihwang2022sequential} and \cite[Eq. (24)]{mahdi2021sequential}.

We now explain the asymptotic intuition why the above test works using the weak law of large numbers. Fix any $i\in[M]$. Under hypothesis $\rmH_i$, for each $j\in\calM_i$, the nominal sequence $x_j^k$ is generated from the nominal distribution $P_\rmN$ while the outlier $x_i^k$ is generated from the anomalous distribution $P_\rmA$. It follows from \eqref{def:scrore:si} that the scoring function $\rmS_i(\bx^k)$ satisfies
\begin{align}
\rmS_i(\bx^k)&=\rmG_i(\hatT_{x_1^n},\ldots,\hatT_{x_M^n})\\
&\to \rmG_i(P_\rmN,\ldots,P_\rmA,\ldots,P_\rmN)\label{Sto}\\
&= 0,\label{defG}
\end{align}
where \eqref{Sto} denotes convergence in probability and follows from the continuous property of $\rmG_i(\bQ)$ and the weak law of large numbers, which implies that the empirical distribution $\hatT_{x_j^k}$ of a nominal sequence $x_j^k$ converges to $P_\rmN$ while the empirical distribution $\hatT_{x_i^k}$ of the outlier $x_i^k$ converges to $P_\rmA$ when the sample size $k$ is sufficiently large, and \eqref{defG} follow from the definition of $\rmG_i(\cdot)$ in \eqref{G_i}. Similarly, for each $j\in\calM_i$, the scoring function $\rmS_j(\bx^k)$ satisfies
\begin{align}
\rmS_j(\bx^k)&=\rmG_j(\hatT_{x_1^n},\ldots,\hatT_{x_M^n})\\
&\to \rmG_j(P_\rmN,\ldots,P_\rmA,\ldots,P_\rmN)\\
&=D\Big(P_\rmA\Big\|\frac{P_\rmA+(M-2)P_\rmN}{M-1}\Big)+(M-2)D\Big(P_\rmN\Big\|\frac{P_\rmA+(M-2)P_\rmN}{M-1}\Big)\\
&>0.
\end{align}
Therefore, when $k$ is sufficiently large, it follows that there exists $M-1$ scoring functions with positive values greater than the vanishing value of $g(\beta,k)$ and subsequently, a correct decision could always be made asymptotically.

\subsubsection{Main Results and Discussions}
We need the following definition to present our results.
Given any positive real number $\alpha\in\bbR_+$, the generalized Jensen-Shannon divergence~\cite[Eq. (2.3)]{zhou2020second} is defined as
\begin{align} \label{GJS}
{\rm GJS}(P,Q,\alpha):=\alpha D\left(P\Big\| \frac{\alpha P+Q}{1+\alpha}\right)+D\left(Q\Big \| \frac{\alpha P+Q}{1+\alpha}\right).
\end{align}
The generalized Jensen-Shannon Divergence has the following variational form~\cite[Eq. (6)]{Ihwang2022sequential}:
\begin{align}
\label{GJS:variational}
{\rm GJS}(P,Q,\alpha):=\min_{V\in\calP(\calX)}\alpha D(P||V)+D(Q||V).
\end{align}

\begin{theorem}\label{seq_errorprob}
For any pair of distributions $(P_\rmN,P_\rmA)\in\calP(\calX)^2$ that are fully supported on the finite alphabet $\calX$, our sequential test satisfies the error probability universality constraint, and for each $i\in[M]$, the misclassification exponent of our test satisfies
\begin{align}
E_i^\mathrm{EP}(\Phi^\mathrm{EP}|P_\rmN,P_\rmA)\ge\mathrm{GJS}(P_\rmN,P_\rmA,M-2).
\end{align}
Conversely, for any sequential test $\Phi$ satisfying the error probability universality constraint, under any pair of distributions $(P_\rmN,P_\rmA)\in\calP(\calX)^2$, for each $i\in[M]$, the misclassification exponent satisfies
\begin{align}
E_i^\mathrm{EP}(\Phi|P_\rmN,P_\rmA)\le\mathrm{GJS}(P_\rmN,P_\rmA,M-2).
\end{align}
\end{theorem}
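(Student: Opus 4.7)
The plan is to prove the achievability and converse directions separately. For the achievability, I would combine the method of types with a union bound to verify the universality constraint and then use Sanov-type concentration to bound the expected stopping time. For the converse, I would use a change-of-measure to an auxiliary hypothesis whose per-symbol KL divergence to $\bbP_i$ equals exactly $\mathrm{GJS}(P_\rmN, P_\rmA, M-2)$, together with Wald's identity and the data-processing inequality.

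\emph{Achievability, universality.} Fix $i \in [M]$. Under $\rmH_i$, the decision rule \eqref{ep:test} outputs some $\rmH_\ell$ with $\ell \neq i$ only if $i \in \Psi(\bX^{\tau^{\mathrm{EP}}})$, which forces $\rmS_i(\bX^{\tau^{\mathrm{EP}}}) > g(\beta, \tau^{\mathrm{EP}})$. Since $\rmS_i$ depends only on the $M-1$ i.i.d.\ nominal sequences under $\rmH_i$, the method of types combined with the type-count bound $|\calP^k(\calX)| \leq (k+1)^{|\calX|}$ and the variational form \eqref{GJS:variational} yields $\bbP_i\{\rmS_i(\bX^k) > g(\beta,k)\} \leq (k+1)^{(M-1)|\calX|} \exp(-k\, g(\beta,k))$. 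Substituting \eqref{gbeta} simplifies the right-hand side to $\beta(|\calX|-1)/(k+1)^{2|\calX|}$, so the union bound $\psi_i(\Phi^{\mathrm{EP}}|P_\rmN,P_\rmA) \leq \sum_{k \geq 1} \bbP_i\{\rmS_i(\bX^k) > g(\beta,k)\} \leq \beta$ follows from $\sum_{k \geq 1}(k+1)^{-2|\calX|} \leq (|\calX|-1)^{-1}$ for $|\calX| \geq 2$.

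\emph{Achievability, exponent.} The weak law of large numbers and continuity of $\rmG_j$ give $\rmS_j(\bX^k) \to \mathrm{GJS}(P_\rmN, P_\rmA, M-2)$ in probability for every $j \neq i$ while $\rmS_i(\bX^k) \to 0$. Fix $\epsilon > 0$ and set $n_\beta := \lceil -\log\beta / (\mathrm{GJS}(P_\rmN, P_\rmA, M-2) - \epsilon) \rceil$, so that $g(\beta, k) < \mathrm{GJS}(P_\rmN, P_\rmA, M-2) - \epsilon/2$ for all $k \geq n_\beta$. Decomposing $\mathbb{E}_i[\tau^{\mathrm{EP}}] = \sum_{k \geq 0} \bbP_i\{\tau^{\mathrm{EP}} > k\}$ and applying a Sanov-type estimate to the event that some $\rmS_j(\bX^k) \leq g(\beta, k)$ (which has exponentially small probability under $\rmH_i$ for $k \geq n_\beta$) yields $\mathbb{E}_i[\tau^{\mathrm{EP}}] \leq n_\beta + O(1)$. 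Letting first $\beta \to 0$ and then $\epsilon \to 0$ produces the desired exponent $\mathrm{GJS}(P_\rmN, P_\rmA, M-2)$.

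\emph{Converse.} Fix $(P_\rmN, P_\rmA)$ and $i \in [M]$, pick any $j \neq i$, and let $V^* := (P_\rmA + (M-2)P_\rmN)/(M-1)$ be the minimizer in \eqref{GJS:variational}. Consider the auxiliary pair $(\tilde{P}_\rmN, \tilde{P}_\rmA) := (V^*, P_\rmN)$ together with hypothesis $\rmH_j$: under the resulting law $\tilde{\bbP}_j$, the $j$-th sequence is i.i.d.\ $P_\rmN$ while every other sequence is i.i.d.\ $V^*$. Computing the per-symbol KL divergence from $\bbP_i$ to $\tilde{\bbP}_j$ coordinate by coordinate gives
\begin{align*}
D(P_\rmA \| V^*) + D(P_\rmN \| P_\rmN) + (M-2)\, D(P_\rmN \| V^*) = \mathrm{GJS}(P_\rmN, P_\rmA, M-2),
\end{align*}
where the middle term vanishes because the $j$-th coordinate has law $P_\rmN$ under both measures. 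The universality constraint applies to both $(P_\rmN, P_\rmA)$ and $(V^*, P_\rmN)$, so $\bbP_i\{\Phi = \rmH_i\} \geq 1-\beta$ and $\tilde{\bbP}_j\{\Phi = \rmH_i\} \leq \beta$. Applying the data-processing inequality to the binary statistic $\mathbbm{1}\{\Phi = \rmH_i\}$ gives $D(\bbP_i|_{\calF_\tau} \| \tilde{\bbP}_j|_{\calF_\tau}) \geq d(1-\beta \| \beta)$, while Wald's identity for stopped i.i.d.\ log-likelihood ratios (applicable whenever $\mathbb{E}_i[\tau] < \infty$, else the exponent is trivially zero) evaluates the left-hand side as $\mathbb{E}_i[\tau] \cdot \mathrm{GJS}(P_\rmN, P_\rmA, M-2)$. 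Rearranging and taking $\beta \to 0$ yields $E_i^{\mathrm{EP}}(\Phi|P_\rmN,P_\rmA) \leq \mathrm{GJS}(P_\rmN, P_\rmA, M-2)$. The most delicate step will be the stopping-time tail analysis in the achievability: the Sanov bounds on $\bbP_i\{\tau^{\mathrm{EP}} > k\}$ must be summable uniformly in $k$ so that the $O(1)$ correction is indeed $o(n_\beta)$ as $\beta \to 0$.
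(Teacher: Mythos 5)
Your proposal is correct, and while the universality check and the converse follow essentially the paper's own route, your achievability-exponent argument is a genuinely different and more direct one. For the universality constraint you use the same union-bound-over-$k$ plus method-of-types calculation as the paper (your count $(k+1)^{(M-1)|\calX|}$ is slightly tighter than the paper's $(k+1)^{M|\calX|}$ because $\rmS_i$ only involves the $M-1$ nominal sequences under $\rmH_i$; both close with the same threshold $g(\beta,k)$). For the exponent, the paper introduces the auxiliary stopping time $\tau_i\ge\tau^{\mathrm{EP}}$ and proves, via three lemmas, that $-\tau_i/\log\beta$ converges in probability to $1/\mathrm{GJS}(P_\rmN,P_\rmA,M-2)$, is uniformly integrable, and hence converges in mean; you instead bound $\mathbb{E}_i[\tau^{\mathrm{EP}}]\le n_\beta+O(1)$ directly by splitting the tail sum at $n_\beta=\lceil-\log\beta/(\mathrm{GJS}-\epsilon)\rceil$ and using a Sanov estimate on $\{\exists j\ne i:\rmS_j(\bX^k)\le g(\beta,k)\}$ for $k\ge n_\beta$. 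Since only an upper bound on $\mathbb{E}_i[\tau^{\mathrm{EP}}]$ is needed for the achievability direction, this shortcut is legitimate and avoids the uniform-integrability machinery entirely; the exponential tail bound you need is exactly the content of the paper's Lemma~\ref{tauj_ge} (the quantity $\calH_i(P_\rmN,P_\rmA,\alpha)$ is strictly positive for $\alpha<\mathrm{GJS}$), so the $O(1)$ remainder is a $\beta$-independent geometric sum and indeed $o(n_\beta)$. What the paper's longer route buys is the exact asymptotics $\mathbb{E}_i[\tau_i]\sim-\log\beta/\mathrm{GJS}$, but that extra precision is not used in the theorem statement. Your converse is the paper's argument with the optimization over the auxiliary pair $(\tilP_\rmN,\tilP_\rmA)$ carried out upfront: plugging in $(V^*,P_\rmN)$ with $V^*=(P_\rmA+(M-2)P_\rmN)/(M-1)$ makes the stopped per-symbol divergence equal $\mathrm{GJS}(P_\rmN,P_\rmA,M-2)$ exactly, whereas the paper keeps the pair generic and minimizes at the end; Wald's identity and Doob's optional stopping play the same role, and your caveat about $\mathbb{E}_i[\tau]=\infty$ correctly disposes of the degenerate case.
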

The proof of Theorem \ref{seq_errorprob} is provided in Appendix \ref{proof:ep}, which is inspired by the proof of \cite[Theorem 2]{Ihwang2022sequential} for sequential binary classification under the error probability universality constraint. In the achievability proof, we show that for our sequential test, the random variable $\frac{\tau^{\rm{EP}}}{\log\beta}$ is uniformly integrable and subsequently we obtain the desired exponent by analyzing the convergence properties of $\frac{-\log\beta}{\mathbb{E}[\tau^{\rm{EP}}]}$. In the converse part, we use the binary KL divergence and apply the data processing inequality.

Since $\mathrm{GJS}(P,Q,\alpha)$ increases in $\alpha$, it follows from Theorem \ref{seq_errorprob} that as the number $M$ of observed sequences increases, the misclassification exponent increases. This is consistent with our intuition since with more samples, the estimation of the nominal distribution is more accurate. Thus, it is easier to identify the outlier. In the extreme case of $M\to\infty$, the exponent equals to $D(P_\rmA\|P_\rmN)$, which is exactly the performance of knowing the nominal distribution~\cite[Prop. 3.1]{li2017universal}.

Theorem \ref{seq_errorprob} strengthens \cite[Theorem 3.2]{li2017universal} by deriving a matching converse result. We manage to do so for a slightly easier setting of exactly one outlier by excluding the null hypothesis that claims there is no outlier. Furthermore, we provide a relatively simpler achievability proof by proposing and analyzing another sequential test beyond~\cite[Eq. (3.11)]{li2017universal}.

\subsection{Expected Stopping Time Universality Constraint}
\subsubsection{Test Design and Intuition}
\label{test_stoptime}

Under the expected stopping time universality constraint, our sequential test $\Phi^\mathrm{EST}=(\tau^{\rm{EST}},\phi^{\rm{EST}})$ consists of another random stopping time and decision rule.  The stopping time $\tau^{\rm{EST}}$ satisfies
\begin{align}\label{tau:est}
\tau^{\rm{EST}}&:=\inf\{k\ge n-1: \exists~l\in[M]~\mathrm{s.t.}~\rmS_l(\bx^k)\le f(k)\},
\end{align}
where the scoring function $\rmS_l(\cdot)$ was defined in \eqref{def:scrore:si} and the threshold is given by $f(k)=\tfrac{(M+1)|\calX|\log (k+1)}{k}$. As we shall show, the threshold $f(k)$ is critical to universally bound the expected stopping time (cf. \eqref{typeclass}). Intuitively, the sequential test $\Phi^\mathrm{Est}$ stops if the types of all nominal samples are ``close enough'' to each other, where the threshold $f(k)$ is used to characterize the closeness level.

At the stopping time $\tau^\mathrm{EST}$, using $M$ observed sequences $\bx^{\tau^\mathrm{EST}}_M$, our test applies the following minimal scoring function decision rule:
\begin{align}\label{test_est}
\phi^\mathrm{EST}(\bx^{\tau^\mathrm{EST}})={\rm H}_i,\mathrm{~if~}i=\argmin\limits_{\bari\in[M]} \rmS_{\bari}(\bx^{\tau^\mathrm{EST}}).
\end{align}
Our test generalizes the sequential classification test under expected stopping time universality in \cite[Def. 7]{Ihwang2022sequential}.

We now explain the asymptotic intuition reason why the above test works using the weak law of large numbers. Fix any $i\in[M]$. As discussed below \eqref{ep:test}, under hypothesis ${\rm H}_i$, the scoring function $\rmS_i(\bx^k)$ tends to zero while for each $j\in\calM_i$, the scoring function $\rmS_j(\bx^k)$ tends to a positive real number as the sample size $k$ increases. Therefore, when $k$ is sufficiently large, when the $i$-th sequence is the outlier, our test stops and makes the correct decision $\rmH_i$.

We remark that if the sample size is not large enough, the empirical distributions could be rather different from generating distributions, which might lead to decision error. To avoid such error events, similarly to \cite[Def. 7]{Ihwang2022sequential}, we set the minimal stopping time as $n-1$ for some predefined integer $n\in\bbN$.

\subsubsection{Main Results and Discussions}
We need the following definition to present our result. Given any pair of distributions $(P,Q)\in\calP(\calX)^2$ and any $\alpha\in\bbR_+$, the R\'{e}nyi Divergence of order $\alpha$~\cite[Eq. (1)]{van2014renyi} is defined as
\begin{align}
D_\alpha(P||Q):=\frac{1}{\alpha-1}\log\sum_{x\in\calX}P(x)^{\alpha}Q(x)^{1-\alpha}
\label{def:renyi}.
\end{align}
The R\'{e}nyi Divergence has the following variational form~\cite[Eq. (7)]{Ihwang2022sequential}:
\begin{align}
\label{renyi:variational}
D_{\frac{\alpha}{1+\alpha}}(P||Q):=\min_{V\in\calP(\calX)}\alpha D(V||P)+D(V||Q).
\end{align}
The R\'{e}nyi Divergence is non-decreasing in its order $\alpha$ since i) \eqref{renyi:variational} indicates that $D_\alpha(P||Q):=\min_{V\in\calP(\calX)}{\frac{\alpha}{1-\alpha}} D(V||P)+D(V||Q)$ and ii) $\frac{\alpha}{1-\alpha}=\frac{1}{1-\alpha}-1$ is non-decreasing in $\alpha$.

\begin{theorem}\label{seq_stoptime}
Under any pair of distributions $(P_\rmN,P_\rmA)\in\calP(\calX)^2$ that are fully supported on the finite alphabet $\calX$, our sequential test satisfies the expected stopping time universality constraint and the misclassification exponent of our test satisfies that for each $i\in[M]$,
\begin{align}
E_i^\mathrm{EST}(\Phi^\mathrm{EST}|P_\rmN,P_\rmA)\ge D_{\frac{M-2}{M-1}}(P_\rmN||P_\rmA).\label{Ei}
\end{align}
Conversely, for any sequential test $\Phi$ satisfying the expected stopping time universality constraint, under any pair of distributions $(P_\rmN,P_\rmA)\in\calP(\calX)^2$, the misclassification exponent satisfies that for each $i\in[M]$,
\begin{align}
E_i^\mathrm{EST}(\Phi|P_\rmN,P_\rmA)\le D_{\frac{M-2}{M-1}}(P_\rmN||P_\rmA).\label{con:ei}
\end{align}
\end{theorem}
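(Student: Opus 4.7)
The plan is to adapt the sequential binary classification strategy of \cite{Ihwang2022sequential} to this $M$-ary outlier setting, handling the direct and converse parts in parallel but with the roles of stopping time and error probability interchanged relative to the proof of Theorem \ref{seq_errorprob}. For achievability, I first verify $\max_i \mathbb{E}_i[\tau^\mathrm{EST}]\le n$ and then bound $\psi_i(\Phi^\mathrm{EST}|P_\rmN,P_\rmA)$. Since $\tau^\mathrm{EST}\ge n-1$, it suffices to control $\sum_{k\ge n-1}\bbP_i(\tau^\mathrm{EST}>k)$. The event $\{\tau^\mathrm{EST}>k\}$ entails $\rmS_l(\bx^k)>f(k)$ for every $l\in[M]$, in particular for $l=i$. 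The method of types yields
\begin{align*}
\bbP_i(\rmS_i(\bx^k)>f(k)) \le (k+1)^{M|\calX|}\exp(-k\,\epsilon_k),
\end{align*}
where $\epsilon_k=\inf\{D(Q_i\|P_\rmA)+\sum_{l\ne i}D(Q_l\|P_\rmN): \rmG_i(\bQ)>f(k)\}$. Since $\rmG_i$ and the Sanov rate both vanish quadratically near the typical point $(P_\rmA,P_\rmN,\ldots,P_\rmN)$, a local Hessian comparison gives $k\epsilon_k\ge (M+1)|\calX|\log(k+1)\cdot(1-o(1))$, matched by the tailored threshold $f(k)=(M+1)|\calX|\log(k+1)/k$, so the tail sum is $o(1)$ and $\mathbb{E}_i[\tau^\mathrm{EST}]\le n(1+o(1))$.

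For the misclassification bound, under $\rmH_i$ an error requires the existence of some $j\ne i$ with $\rmS_j(\bx^{\tau^\mathrm{EST}})\le\rmS_i(\bx^{\tau^\mathrm{EST}})\le f(\tau^\mathrm{EST})$. A union bound over $j$ and $k\ge n-1$ combined with the method of types produces
\begin{align*}
\psi_i(\Phi^\mathrm{EST}|P_\rmN,P_\rmA) \le (M-1)\sum_{k\ge n-1}(k+1)^{M|\calX|}\exp(-k\,E_k),
\end{align*}
where the constraint $\rmS_j(\bx^k)\le f(k)$ forces the $M-1$ empirical distributions $\{\hat{T}_{x_l^k}\}_{l\ne j}$, which include the outlier's empirical distribution at position $i$, to cluster near a common distribution $V$. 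Hence $E_k\to\min_V\{(M-2)D(V\|P_\rmN)+D(V\|P_\rmA)\}$, which by the variational identity \eqref{renyi:variational} with $\alpha=M-2$ equals $D_{\frac{M-2}{M-1}}(P_\rmN\|P_\rmA)$, producing the achievability exponent.

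For the converse, I use a change-of-measure argument tailored to produce the forward-KL form in the Renyi variational identity. For each $l\ne i$ and each $V\in\calP(\calX)$ with $V\ne P_\rmN$, define the auxiliary product measure $\bbQ$ corresponding to the outlier model under the pair $(V,P_\rmN)$ with outlier at position $l$: namely, $X_l\sim P_\rmN$ and $X_m\sim V$ for $m\ne l$. This is a bona fide instance of the outlier model, so applying the universality constraint with the pair $(V,P_\rmN)$ yields $\mathbb{E}_\bbQ[\tau]\le n$ and $\bbQ(\phi\ne\rmH_l)\to 0$ as $n\to\infty$. Position-by-position, $D(\bbQ\|\bbP_i)$ receives $0$ from position $l$, $D(V\|P_\rmA)$ from position $i$, and $D(V\|P_\rmN)$ from each of the other $M-2$ positions, so Wald's identity gives
\begin{align*}
D(\bbQ\|\bbP_i)=\mathbb{E}_\bbQ[\tau]\cdot\big[(M-2)D(V\|P_\rmN)+D(V\|P_\rmA)\big].
\end{align*}
The data processing inequality for binary divergence applied to the event $\{\phi=\rmH_i\}$, together with $\bbQ(\phi=\rmH_i)\le\psi_l(\Phi|V,P_\rmN)\to 0$ and $\bbP_i(\phi=\rmH_i)=1-\psi_i\to 1$, yields $D(\bbQ\|\bbP_i)\ge -\log\psi_i-o(n)$. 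Combining and dividing by $n$ gives $-\log\psi_i/n\le(M-2)D(V\|P_\rmN)+D(V\|P_\rmA)+o(1)$, and taking the infimum over admissible $V$ completes the converse.

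The principal obstacle is the stopping-time analysis in achievability: the Sanov rate $\epsilon_k$ tends to zero because $\rmG_i$ vanishes at the typical point, so the exponent $k\epsilon_k$ must strictly dominate the logarithm of the type-counting factor for the tail probabilities to be summable. This is precisely what the constant $M+1$ in the threshold $f(k)$ secures, providing a single unit of logarithmic slack over the $M|\calX|\log(k+1)$ type-counting contribution. A secondary subtlety in the converse is that the auxiliary measure $\bbQ$ must correspond to a non-degenerate pair $(V,P_\rmN)$ with $V\ne P_\rmN$, but this restriction is harmless since the minimizer of $(M-2)D(V\|P_\rmN)+D(V\|P_\rmA)$ lies strictly away from $P_\rmN$ whenever $P_\rmN\ne P_\rmA$, so the unrestricted infimum is attained on the admissible set.
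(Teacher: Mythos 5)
Your overall architecture matches the paper's proof: a method-of-types tail bound on $\bbP_i\{\tau^{\mathrm{EST}}>k\}$ via the event $\{\rmS_i(\bX^k)>f(k)\}$ for the stopping-time constraint, a union bound over $j\in\calM_i$ and $k\ge n-1$ with the variational form \eqref{renyi:variational} for the achievable exponent, and a change-of-measure argument combining the data processing inequality for binary divergence with optional stopping for the converse. Your converse is a clean mirror image of the paper's: you place the auxiliary "outlier" distribution $P_\rmN$ at position $l$ under the pair $(V,P_\rmN)$ so that position $l$ contributes zero to $D(\bbQ\|\bbP_i)$ and you bound $\psi_i$ under the true pair directly, whereas the paper computes $D(\bbP_i\|\tilde\bbP_j)$ against an auxiliary pair $(\tilP_\rmN,\tilP_\rmA)$, bounds the exponent under that auxiliary pair, and then kills the cross term by setting $P_\rmA=\tilP_\rmN$. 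Both routes land on $\min_V (M-2)D(V\|P_\rmN)+D(V\|P_\rmA)$, and your handling of the degenerate case $V=P_\rmN$ is a correct and worthwhile observation.

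The one step whose justification does not hold up as written is the claim that $k\epsilon_k\ge (M+1)|\calX|\log(k+1)\cdot(1-o(1))$ follows from "a local Hessian comparison" because $\rmG_i$ and the Sanov rate both vanish quadratically at the typical point. A comparison of two quadratic forms only yields $\epsilon_k\ge c\, f(k)$ where $c$ is the smallest generalized eigenvalue of the Sanov Hessian against the Hessian of $\rmG_i$, and nothing in your argument shows $c\ge 1$; if $c<1$ the threshold's constant $(M+1)|\calX|$ would no longer be guaranteed to beat the $M|\calX|$ type-counting factor and the tail sum need not be summable. What actually makes this work is an exact, global decomposition (the paper's \eqref{equation1}): $\sum_{t\in\calM_i}D(Q_t\|P_\rmN)=(M-1)D\big(\tfrac{1}{M-1}\sum_{t\in\calM_i}Q_t\big\|P_\rmN\big)+\rmG_i(\bQ)$, which gives $D(Q_i\|P_\rmA)+\sum_{t\in\calM_i}D(Q_t\|P_\rmN)\ge\rmG_i(\bQ)>f(k)$ pointwise on the constraint set, hence $\epsilon_k\ge f(k)$ exactly and $\bbP_i\{\tau^{\mathrm{EST}}>k\}\le(k+1)^{-|\calX|}$. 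You should replace the Hessian heuristic with this identity. A second, smaller slip: under $\rmH_i$ an error does not require $\rmS_j(\bx^{\tau})\le\rmS_i(\bx^{\tau})\le f(\tau)$; the stopping rule only guarantees that \emph{some} score falls below $f(\tau)$, and since the erroneous decision $j$ is the argmin, what follows is $\rmS_j(\bx^{\tau})\le f(\tau)$ (the condition $\rmS_i(\bx^\tau)\le f(\tau)$ is not implied). Your subsequent union bound in fact only uses $\rmS_j(\bx^k)\le f(k)$, so the final estimate is unaffected, but the stated characterization of the error event should be corrected.
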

The proof of Theorem \ref{seq_stoptime} is provided in Appendix \ref{proof:est}.

We make several remarks. Firstly, Theorem \ref{seq_stoptime} strengthens \cite[Theorem 3.2]{li2017universal} by considering sequential tests satisfying the expected stopping time universality constraint and deriving exact exponential decay rates of misclassification probabilities. Compared with the error probability universality constraint in \cite[Sec. 3.1.1]{li2017universal} and Sec. \ref{test_errorprob}, our proposed sequential test in this subsection has the property of having a bounded expected stopping time under any pair of nominal and anomalous distributions. This property is highly desired in practice since one would like the tests to stop early while the tests in \cite[Sec. 3.1.1]{li2017universal} and Sec. \ref{test_errorprob} could stop at very large sample sizes.

Secondly, in contrast to lack of theoretical evidence that the sequential test of \cite[Sec. 3.1.1]{li2017universal} has better performance than the fixed-length test in \eqref{test_Li}, our sequential test has better theoretic performance than the fixed-length test in \eqref{test_Li}. This property is desired since the motivation of using sequential tests is to yield better performance. To clarify, we compare the exponents in Theorems \ref{fixed} and \ref{seq_stoptime}. Given any distribution $Q\in\calP(\calX)$, the tuple of distributions $\bQ=(Q_1,\ldots,Q_{M-1},P_\rmN)$ with $Q_i=Q$ for all $i\in[M-1]$ satisfy the constraints that $\rmG_1(\bQ)\geq \rmG_2(\bQ)$. It follows from \eqref{fixed:error} that the misclassification exponent of the fixed-length test satisfies $E_i^{\mathrm{EST}}(\Phi_{\rm LNV}|P_\rmN,P_\rmA)\leq \min_{Q\in\calP(\calX)}D(Q||P_\rmA)+(M-2)D(Q||P_\rmN)=D_{\frac{M-2}{M-1}}(P_\rmN||P_\rmA)=E_i^\mathrm{EST}(\Phi^\mathrm{EST}|P_\rmN,P_\rmA)$. We numerically verify that $E_i^\mathrm{EST}(\Phi^\mathrm{EST}|P_\rmN,P_\rmA)$ could be strictly greater than $E_i^{\mathrm{EST}}(\Phi_{\rm LVN}|P_\rmN,P_\rmA)$. Specifically, when the nominal and anomalous distributions are $(P_\rmN,P_\rmA)=\mathrm{Bern}(0.3,0.1)$, we have $E_i^\mathrm{EST}(\Phi^\mathrm{EST}|P_\rmN,P_\rmA)=0.0934>E_i^{\mathrm{EST}}(\Phi_{\rm LVN}|P_\rmN,P_\rmA)=0.0471$. Thus, our sequential test under the expected stopping time universality constraint could have strictly better performance than the fixed-length test as desired.

Finally, we compare the optimal error exponents under two universality constraints in Theorems \ref{seq_errorprob} and \ref{seq_stoptime}, i.e.,  $\mathrm{GJS}(P_\rmN,P_\rmA,M-2)$ and $D_{\frac{M-2}{M-1}}(P_\rmN||P_\rmA)$. It follows from \cite[Remark 3]{Ihwang2022sequential} that for any $\alpha\in\bbR_+$ and any $(P_0,P_1)\in\calP(\calX)^2$, $D_{\frac{\alpha}{1+\alpha}}(P_1||P_0)\ge\mathrm{GJS}(P_0,P_1,\alpha)$. Thus, setting $\alpha=M-2$ leads to $D_{\frac{M-2}{M-1}}(P_\rmN||P_\rmA)\ge\mathrm{GJS}(P_\rmN,P_\rmA,M-2)$, which implies that when $M=3$, the sequential test under the expected stopping time constraint achieves larger error exponent because $D_{\frac{1}{2}}(P_\rmN||P_\rmA)\ge\mathrm{GJS}(P_\rmN,P_\rmA,1)=\mathrm{GJS}(P_\rmN,P_\rmA,1)$. However, since $\mathrm{GJS}(P_\rmN,P_\rmA,M-2)\neq \mathrm{GJS}(P_\rmN,P_\rmA,M-2)$ when $M>3$, the performance comparison of sequential tests under two universality constraints depends on the nominal and anomalous distributions. To illustrate, in Table \ref{tab}, we calculate the exponents in Theorems \ref{seq_errorprob} and \ref{seq_stoptime} for various pairs of distributions.

\begin{table}
\centering
\caption{Comparison of Achievable Error Exponents of Two Sequential Tests Under Different Universality Constraints}
\label{tab}
\begin{tabular}{ccc}
\toprule
Parameters& $D_{\frac{M-2}{M-1}}(P_\rmN||P_\rmA)$ & $\mathrm{GJS}(P_\rmN,P_\rmA,M-2)$ \\
\midrule
$M=3$\\$(P_\rmN,P_\rmA)=\mathrm{Bern}(0.2,0.4)$& \textbf{0.0493} &0.0483\\
\hline
$M=4$\\$(P_\rmN,P_\rmA)=\mathrm{Bern}(0.2,0.4)$& 0.0642&\textbf{0.0659}\\
\hline
$M=4$\\$(P_\rmN,P_\rmA)=\mathrm{Bern}(0.3,0.1)$& \textbf{0.0939}&0.0830\\
\bottomrule
\end{tabular}
\end{table}

We present a numerical example to illustrate Theorem \ref{seq_stoptime} in Fig. \ref{simulation_est}. Consider the binary alphabet $\calX=\{0,1\}$. Set the number of sequences $M=4$ and set generating distributions $(P_\rmN,P_\rmA)=\mathrm{Bern}(0.28,0.25)$. We simulate the misclassification probability of our sequential test in \eqref{test_est} versus the exponential estimates in Theorem \ref{seq_stoptime} when there is one outlier of $M$ sequences using $5 \times 10^4$ independent experiments. As observed, when the average sample size is large, our theoretical characterization for the misclassification probability provides a tight upper bound.

\begin{figure}[htbp]
\centering
\includegraphics[width=.5\columnwidth]{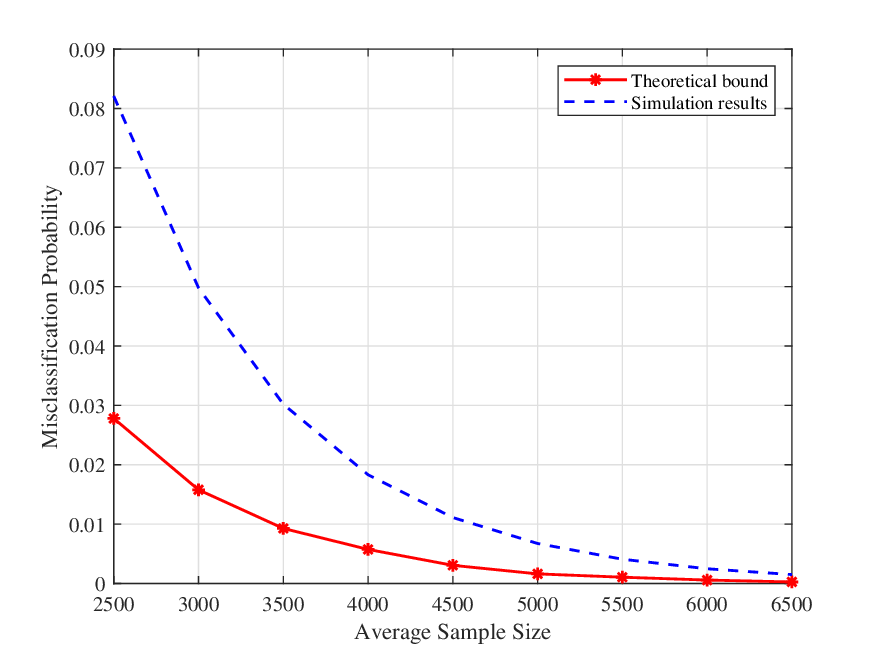}
\caption{Simulated misclassification probability of our test under expected stopping time universality constraint for the case with exactly one outlier under generating distributions $(P_\rmN,P_\rmA)=\mathrm{Bern}(0.28,0.25)$ with $M=4$.}
\label{simulation_est}
\end{figure}

Since the sequential test under the expected stopping time universality constraint has the desired property of having a bounded expected stopping time, in the following, we only consider sequential tests under the expected stopping time universality constraint for the cases of exactly multiple, at most one outlier, and at most multiple outliers.

\section{Case of Exactly $T$ Outlier}

\subsection{Problem Formulation}

In this section, it is assumed that there are $0<T\le\lceil\frac{M}{2}-1\rceil$ identically distributed outliers and the number of outliers $T$ is known. We propose a sequential test under the expected stopping time constraint and derive tight bounds on the achievable error exponents.

Let $\calS(T)$ denote the set of all subsets of $[M]$ of size $T$, i.e., $\calS(T):=\{\calB\subset[M]:|\calB|=T\}$. Our task now is to design a test $\Phi=\{\tau,\phi\}$ with a stopping time $\tau$ and a corresponding decision rule $\phi:~\calX^{M\tau}\to\{\{\rmH_\calB\}_{\calB\in\calS(T)}\}$ to classify among the following $|\calS(T)|$ hypotheses:
\begin{itemize}
\item $\rmH_{\calB}$,~$\calB\in\calS(T)$: for each $j\in\calB$, the $j$-th sequence is an outlier.
\end{itemize}
To evaluate the performance of a test, for each $\calB\in\calS(T)$, we consider the following misclassification probability under hypothesis $\rmH_\calB$:
\begin{align}\label{def:misclassify:pb}
\psi_\calB(\Phi|P_\rmN,P_\rmA)=\bbP_\calB\{\Phi(\bX^\tau)\neq\rmH_\calB\},
\end{align}
where we define $\bbP_\calB(\cdot):=\Pr\{\cdot|\rmH_\calB\}$ to denote the joint distribution of observed sequences $\bX^\tau$, where for each $i\in\calB$, $X_i^\tau$ is generated i.i.d. from the anomalous distribution $P_\rmA$ and for each $j\in\calM_\calB:=[M]\backslash\calB=\{j\in[M]:j\notin \calB\}$, $X_j^\tau$ is generated i.i.d. from the nominal distribution $P_\rmN$.
Furthermore, the expected stopping time under hypothesis $\rmH_\calB$ satisfies
\begin{align}\label{def:EB}
\mathbb{E}_\calB[\tau]&=\sum_{k=1}^{\infty}\bbP_\calB\{\tau> k\}.
\end{align}

\begin{definition}
A sequential test $\Phi$ is said to satisfy the expected stopping time universality constraint if there exists an integer $n\in\bbN$ such that for any pair of distributions $(P_\rmN,P_\rmA)\in\calP(\calX)^2$,
\begin{align}\label{constraint2:est}
\max_{\calB\in\calS(T)}\mathbb{E}_\calB[\tau]\le n.
\end{align}
\end{definition}
For a sequential test satisfying the expected stopping time universality constraint, we are interested in the following error exponent for each $\calB\in\calS(T)$,
\begin{align}\label{def:exponent:exactT}
E_\calB(\Phi|P_\rmN,P_\rmA):=\liminf_{n\to\infty}\frac{-\log\psi_\calB(\Phi|P_\rmN,P_\rmA)}{n}.
\end{align}

\subsection{Existing Results}
To compare the performance of sequential tests and fixed-length tests, we recall the results of the fixed-length test $\Phi_{\rm LNV}$ by Li, Nitinawarat and Veeravalli~\cite[Eq. (37)]{li2014} for the case with exactly $T$ outliers. To present the test, we need the following definitions. Given a tuple of distributions $\bQ=(Q_1,\ldots,Q_M)\in\calP(\calX)^M$, for each $\calB\in\calS(T)$, define the following linear combination of KL divergence terms:
\begin{align}\label{Gli}
\rmG_{\mathrm{Li},\calB}(\bQ):=\sum_{j\in\calM_\calB}D\Big(Q_j\Big\|\frac{\sum_{l\in\calM_\calB}Q_l}{M-|\calB|}\Big).
\end{align}
For the case of exactly $T$ outlier, the test in~\cite[Eq. (37)]{li2014} applies the following minimal scoring function decision rule:
\begin{align}\label{test_Li_exactT}
\Phi_{\rm LNV}(\bx^n)=\rmH_\calC,~\mathrm{if}~\calC=\argmin\limits_{\calB\in\calS(T)}\rmG_{\mathrm{Li},\calB}\big(\hatT_{x_1^k},\ldots,\hatT_{x_M^k}\big).
\end{align}
Recall the definition of $\calM_\calB=\{i\in[M]:i\notin\calB\}$ and $\calS(T)=\{\calB\subset[M]:|\calB|=T\}$. Furthermore, given any $\calB\in\calS(T)$, define $\calS_\calB(T):=\{\calC\in\calS(T):\calC\neq\calB\}$. Li, Nitinawarat and Veeravalli derived the following result~\cite[Theorem 9]{li2014}.
\begin{theorem}\label{fixed:exactT}
Given any pair of distributions $(P_\rmN,P_\rmA)\in\calP(\calX)^2$ that are fully supported on the finite alphabet $\calX$, the achievable error exponent of the fixed-length test satisfies that for each
$\calB\in\calS(T)$,
\begin{align}\label{fixed:exactT_error}
E_\calB(\Phi_{\rm LNV}|P_\rmN,P_\rmA)&=\min_{\calC\in\calS_\calB(T)}\min_{\substack{\bQ\in\calP(\calX)^M:\\\rmG_{\mathrm{Li},\calB}(\bQ)\ge\rmG_{\mathrm{Li},\calC}(\bQ)}}
\sum\limits_{j\in\calM_\calB}D(Q_j||P_\rmN)+\sum\limits_{i\in\calB}D(Q_i||P_\rmA).
\end{align}
\end{theorem}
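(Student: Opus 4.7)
The plan is to prove Theorem \ref{fixed:exactT} by a standard method-of-types large-deviations analysis, treating the misclassification event as a union over competing hypotheses and balancing a union-bound achievability with a single-type converse. First, I would rewrite the misclassification event in terms of the joint empirical distribution $\hat{\bT} := (\hatT_{x_1^n},\ldots,\hatT_{x_M^n})$. By the decision rule in \eqref{test_Li_exactT}, the event $\{\Phi_{\rm LNV}(\bx^n) \neq \rmH_\calB\}$ is contained in the union $\bigcup_{\calC \in \calS_\calB(T)} \calE_{\calB,\calC}$, where $\calE_{\calB,\calC} := \{\bx^n : \rmG_{\mathrm{Li},\calC}(\hat{\bT}) \le \rmG_{\mathrm{Li},\calB}(\hat{\bT})\}$. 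Since $|\calS_\calB(T)|$ is a constant independent of $n$, the union bound does not affect the exponent.

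Next, I would apply the standard type-class probability estimate under $\bbP_\calB$. Under hypothesis $\rmH_\calB$, the sequences $\{X_i^n\}_{i \in \calB}$ are i.i.d. from $P_\rmA$ while $\{X_j^n\}_{j \in \calM_\calB}$ are i.i.d. from $P_\rmN$, all mutually independent. Therefore, for any joint type tuple $\bQ = (Q_1,\ldots,Q_M) \in \calP^n(\calX)^M$,
\begin{align}
\bbP_\calB\{\hat{\bT} = \bQ\} \;\dotdoteq\; \exp\!\left(-n\Big[\sum_{i\in\calB}D(Q_i\|P_\rmA) + \sum_{j\in\calM_\calB}D(Q_j\|P_\rmN)\Big]\right),
\end{align}
where $\dotdoteq$ denotes equality to first order in the exponent, up to polynomial factors in $n$ arising from $|\calP^n(\calX)^M| \le (n+1)^{M|\calX|}$. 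Summing over all joint types $\bQ$ satisfying the constraint $\rmG_{\mathrm{Li},\calB}(\bQ) \ge \rmG_{\mathrm{Li},\calC}(\bQ)$ that defines $\calE_{\calB,\calC}$, and noting that the polynomial prefactor is subexponential, yields the achievability bound
\begin{align}
\bbP_\calB(\calE_{\calB,\calC}) \;\dotleq\; \exp\!\left(-n\!\!\!\!\min_{\substack{\bQ \in \calP(\calX)^M :\\ \rmG_{\mathrm{Li},\calB}(\bQ)\ge \rmG_{\mathrm{Li},\calC}(\bQ)}}\!\!\!\sum_{i\in\calB}D(Q_i\|P_\rmA) + \sum_{j\in\calM_\calB}D(Q_j\|P_\rmN)\right).
\end{align}
Combining the union bound over $\calC \in \calS_\calB(T)$ gives the ``$\ge$'' direction of \eqref{fixed:exactT_error}.

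For the matching converse, I would fix any $\calC \in \calS_\calB(T)$ and choose a minimizer $\bQ^\star$ of the constrained optimization. Since $P_\rmN$ and $P_\rmA$ are fully supported and both $\rmG_{\mathrm{Li},\calB}$ and $\sum_i D(Q_i\|\cdot)$ are continuous in $\bQ$, a neighborhood $\calN$ of $\bQ^\star$ inside the feasible region contains a type $\bar{\bQ} \in \calP^n(\calX)^M$ for all sufficiently large $n$ (using $|\calP^n(\calX)| \sim n^{|\calX|-1}$ and the density of rationals). The type-class lower bound $\bbP_\calB\{\hat{\bT} = \bar{\bQ}\} \ge (n+1)^{-M|\calX|}\exp(-n\sum_{i\in\calB}D(\bar{Q}_i\|P_\rmA) - n\sum_{j\in\calM_\calB}D(\bar{Q}_j\|P_\rmN))$ then forces $\psi_\calB$ to decay at most at the stated rate, after letting the neighborhood shrink and invoking continuity.

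The main obstacle is the converse step: one must ensure that the minimizer $\bQ^\star$ lies in the \emph{interior} of the feasible set so that it can be approximated by $n$-types without leaving the region $\{\rmG_{\mathrm{Li},\calB}(\bQ) \ge \rmG_{\mathrm{Li},\calC}(\bQ)\}$, and that the continuity of the objective absorbs the approximation error. This is handled by standard perturbation arguments (slight enlargement of the feasible set and use of full-support distributions to keep $D(\cdot\|\cdot)$ finite), mirroring the approach of \cite[Theorem 2]{li2014} for the single-outlier case and already employed in the proof of Theorem \ref{fixed} above.
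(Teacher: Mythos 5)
The paper does not actually prove this statement: Theorem~\ref{fixed:exactT} is quoted verbatim from Li, Nitinawarat and Veeravalli \cite[Theorem 9]{li2014}, so there is no in-paper proof to compare against. Your proposal is the standard method-of-types argument for exactly this kind of result, and it is essentially sound. The achievability half (union bound over $\calC\in\calS_\calB(T)$, the product type-class upper bound $\bbP_\calB\{\hat{\bT}=\bQ\}\le\exp(-n[\sum_{i\in\calB}D(Q_i\|P_\rmA)+\sum_{j\in\calM_\calB}D(Q_j\|P_\rmN)])$, and polynomial type counting) is precisely the chain of steps the paper itself uses in its own appendices, e.g.\ the passage from \eqref{ep:type} to \eqref{knumber}, so that direction is unimpeachable; note only that the minimum over $n$-types dominates the minimum over all of $\calP(\calX)^M$ because the former is taken over a subset, which is what makes the bound uniform in $n$.

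One imprecision in your converse deserves correction. You require the minimizer $\bQ^\star$ to lie in the \emph{interior} of the feasible region $\{\rmG_{\mathrm{Li},\calB}(\bQ)\ge\rmG_{\mathrm{Li},\calC}(\bQ)\}$; in these Chernoff-type problems the constrained minimizer generically sits exactly on the boundary $\rmG_{\mathrm{Li},\calB}(\bQ)=\rmG_{\mathrm{Li},\calC}(\bQ)$, so that requirement cannot be met as stated. The correct repair, which you gesture at but should make explicit, is the opposite of ``enlargement'': for each $\delta>0$ pick a point in the \emph{strictly} feasible set $\{\rmG_{\mathrm{Li},\calC}(\bQ)<\rmG_{\mathrm{Li},\calB}(\bQ)-\delta\}$ whose objective value is within $\varepsilon(\delta)$ of the constrained minimum (continuity of $\rmG_{\mathrm{Li},\cdot}$ and of the KL objective, plus full support of $(P_\rmN,P_\rmA)$, guarantee $\varepsilon(\delta)\to0$), approximate that point by an $n$-type without leaving the strict region, and apply the type-class lower bound. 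The strictness also matters for a second reason you do not mention: to conclude that $\{\hat{\bT}=\bar{\bQ}\}$ actually forces a misclassification you need $\rmG_{\mathrm{Li},\calC}(\bar{\bQ})<\rmG_{\mathrm{Li},\calB}(\bar{\bQ})$, since with equality the argmin in \eqref{test_Li_exactT} could still resolve to $\calB$ under a favorable tie-break. With these two points tightened, the argument is complete.
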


\subsection{Test Design}\label{test:exactlyT}

To present our tests, we need the following definition. Given a tuple of distributions $\bQ=(Q_1,\ldots,Q_M)\in\calP(\calX)^M$, for each $\calB\in\calS(T)$,  define
\begin{align}\label{G_B}
\rmG_\calB(\bQ)=\sum_{j\in\calM_\calB}D\Big(Q_j\Big\|\frac{\sum_{l\in\calM_\calB}Q_l}{M-|\calB|}\Big)+
\sum_{i\in\calB}D\Big(Q_i\Big\|\frac{\sum_{t\in\calB}Q_t}{|\calB|}\Big).
\end{align}
Note that $\rmG_\calB(\bQ)$ measures the similarity of distributions $\{Q_i\}_{i\in\calB}$ and $\{Q_j\}_{j\in\calM_\calB}$. The measure $\rmG_\calB(\bQ)=0$ if and only if $Q_j=Q_1$ for all $j\in\calM_\calB$ and $Q_i=Q_2$ for all $i\in\calB$ for an arbitrary pair of distributions $(Q_1,Q_2)\in\calP(\calX)^2$.
Furthermore, when $T=1$, $\rmG_\calB(\bQ)$ specializes to $\rmG_i(\bQ)$ in \eqref{G_i} with $\calB=i$.

Under the expected stopping time universality constraint, our sequential test $\Phi_\mathrm{seq}=(\tau,\phi)$ consists of a random stopping time and a decision rule. The stopping time $\tau$ is defined as follows:
\begin{align}\label{tau:exactlyT}
\tau:=\inf\{k\ge n-1: \exists~\calC\in\calS(T)~\mathrm{s.t.}~\rmS_\calC(\bx^k)\le f(k)\}.
\end{align}
where the scoring function $\rmS_\calC(\cdot)$ is defined as follows:
\begin{align}
\rmS_\calC(\bx^k)=\rmG_\calC(\hatT_{x_1^k},\ldots,\hatT_{x_M^k}),\label{def:scrore:sb}
\end{align}
and the threshold $f(k)=\frac{(M+1)|\calX|\log (k+1)}{k}$. The minimal stopping time is set as $n-1$ for a predefined integer $n\in\bbN$ to avoid early stopping. At the stopping time $\tau$, we apply the following decision rule
\begin{align}
\phi(\bx^\tau)=\argmin\limits_{\calC\in\calS(T)}\rmS_\calC(\bx^\tau).
\end{align}
In the following, we show our sequential test satisfies the expected stopping time universality constraint and characterize the exponential decay rate of the error probability.

\subsection{Main Results and Discussions}
Recall the definition of R\'enyi divergence $D_{\alpha}(P\|Q)$ in \eqref{def:renyi}.
Given any set $\calB\in\calS(T)$, for any pair of distributions $(P_\rmN,P_\rmA)\in\calP(\calX)^2$, define the following exponent function
\begin{align}
\mathrm{LD}_\calB(P_\rmN,P_\rmA,M,T)=\min_{t\in\bbN:t\le T-1}(T-t)\Big(D_{\frac{t}{T}}(P_\rmA||P_\rmN)+D_{\frac{M-2T+t}{M-T}}(P_\rmN||P_\rmA)\Big).\label{LDB:renyi}
\end{align}
Since R\'{e}nyi divergence is non-decreasing in its order (see the discussions below \eqref{renyi:variational}), the exponent function $\mathrm{LD}_\calB(P_\rmN,P_\rmA,M,T)$ is non-decreasing in $M$ since $\frac{M-2T+1}{M-T}=1-\frac{T-t}{M-T}$ increases in $M$. However, the dependence of $\mathrm{LD}_\calB(P_\rmN,P_\rmA,M,T)$ on the number of outliers $T$ is more complicated. As shown in Fig. \ref{LDB_T}, $\mathrm{LD}_\calB(P_\rmN,P_\rmA,M,T)$ is not a monotonic function of $T$.

\begin{figure}[htbp]
\centering
\includegraphics[width=.5\columnwidth]{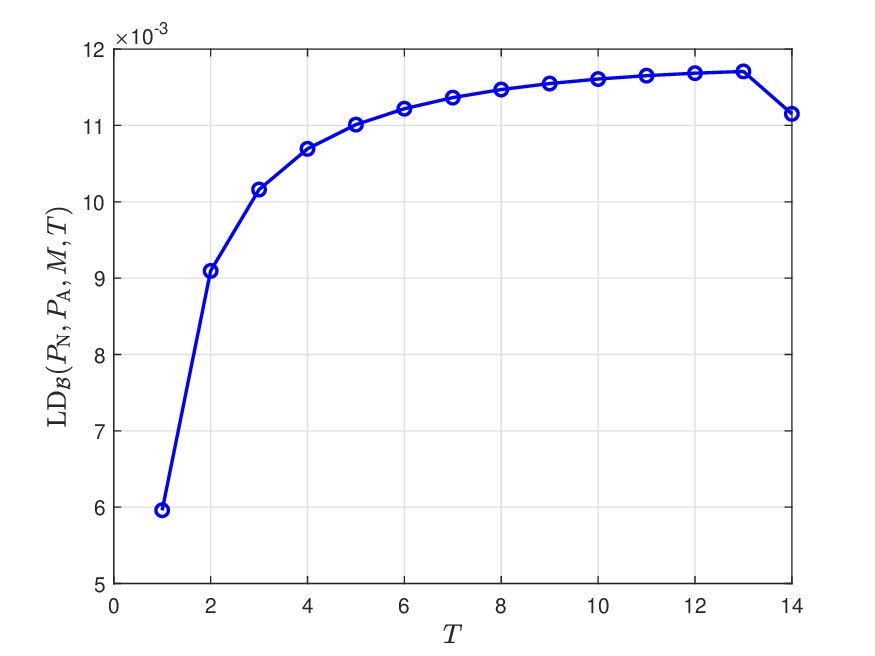}
\caption{Illustration the relationship of $\mathrm{LD}_\calB(P_\rmN,P_\rmA,M,T)$ in $T$ under generating distributions $(P_\rmN,P_\rmA)=\mathrm{Bern}(0.25,0.3)$ with $M=30$.}
\label{LDB_T}
\end{figure}

\begin{theorem}\label{seq_T_time}
Under any pair of distributions $(P_\rmN,P_\rmA)\in\calP(\calX)^2$ that are fully supported on the finite alphabet $\calX$, our sequential test satisfies the expected stopping time universality constraint and the error exponent of our test satisfies that for each $\calB\in\calS(T)$,
\begin{align}
E_\calB(\Phi_\mathrm{seq}|P_\rmN,P_\rmA)\ge \mathrm{LD}_\calB(P_\rmN,P_\rmA,M,T).
\end{align}
Conversely, for any sequential test $\Phi$ satisfying the expected stopping time universality constraint, under any pair of distributions $(P_\rmN,P_\rmA)$, the error exponent satisfies that for each $\calB\in\calS(T)$,
\begin{align}
E_\calB(\Phi|P_\rmN,P_\rmA)\le \mathrm{LD}_\calB(P_\rmN,P_\rmA,M,T).
\end{align}
\end{theorem}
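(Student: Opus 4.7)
The plan is to follow the strategy used to prove Theorem \ref{seq_stoptime} (the $T=1$ case), adapting the method-of-types machinery to the more intricate combinatorics arising when $T \geq 2$. The key structural observation is that any wrong hypothesis $\rmH_\calC$ with $\calC \neq \calB$ is parameterized by the overlap size $t := |\calB \cap \calC| \in \{0, 1, \ldots, T-1\}$, and the exponent $\mathrm{LD}_\calB$ should emerge as a minimum over $t$ of contributions from the $T-t$ outliers misplaced into $\calM_\calC$ and the $T-t$ nominals misplaced into $\calC$. As a sanity check, setting $T=1$ forces $t=0$ and recovers $D_{(M-2)/(M-1)}(P_\rmN \| P_\rmA)$, matching Theorem \ref{seq_stoptime}.

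For the achievability direction, I would first verify $\bbE_\calB[\tau] \leq n$ uniformly in $(P_\rmN, P_\rmA)$ for sufficiently large $n$. Under $\rmH_\calB$, the weak law of large numbers gives $\hatT_{x_j^k} \to P_\rmN$ for $j \in \calM_\calB$ and $\hatT_{x_i^k} \to P_\rmA$ for $i \in \calB$, so $\rmS_\calB(\bx^k) \to 0$, while the threshold $f(k)$ vanishes more slowly than the method-of-types tail bound. This bounds $\bbP_\calB\{\tau > k\} \leq \bbP_\calB\{\rmS_\calB(\bx^k) > f(k)\}$ and hence $\bbE_\calB[\tau] = \sum_k \bbP_\calB\{\tau > k\}$. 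Next, for the misclassification probability, a wrong decision entails $\rmS_\calC(\bx^\tau) \leq \rmS_\calB(\bx^\tau)$ for some $\calC \neq \calB$; by the structure of $\rmG_\calC$ this forces the empirical types of the sequences indexed by $\calC$ to concentrate around some common law $V_1$ and those indexed by $\calM_\calC$ around some common $V_0$. Partitioning the indices according to membership in $\calB\cap\calC$, $\calB\setminus\calC$, $\calC\setminus\calB$, $\calM_\calB\cap\calM_\calC$ (of sizes $t$, $T-t$, $T-t$, $M-2T+t$) and applying Sanov's theorem produces the rate function
\begin{align}
t\,D(V_1\|P_\rmA) + (T-t)\,D(V_0\|P_\rmA) + (T-t)\,D(V_1\|P_\rmN) + (M-2T+t)\,D(V_0\|P_\rmN).
\end{align}
The minimization decouples over $V_0$ and $V_1$, and two applications of the variational form \eqref{renyi:variational} yield $(T-t)\bigl(D_{t/T}(P_\rmA\|P_\rmN) + D_{(M-2T+t)/(M-T)}(P_\rmN\|P_\rmA)\bigr)$; optimizing over $\calC$ reduces to the minimum over $t$, recovering $\mathrm{LD}_\calB$.

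For the converse, I would mimic the data-processing argument used in the proof of Theorem \ref{seq_stoptime}. Fixing $\calB$ and a competing set $\calC$ with overlap $t$, choose $V_0^*, V_1^*$ attaining the minima in the two R\'enyi variational problems and construct an alternative product measure under which the $T-t$ sequences in $\calB\setminus\calC$ are regenerated from a law close to $V_0^*$ and the $T-t$ sequences in $\calC\setminus\calB$ from a law close to $V_1^*$, making $\rmH_\calC$ plausible under the alternative. Applying the data-processing inequality to the binary KL divergence between $\bbP_\calB$ and this alternative, combined with the universal expected-stopping-time bound under the alternative (a Wald-type identity of the kind used in \cite{Ihwang2022sequential}), yields a matching upper bound on the misclassification exponent for each pair $(\calC, t)$; taking the minimum over $t \in \{0,\ldots,T-1\}$ delivers $\mathrm{LD}_\calB$.

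The main obstacle is the correct handling of the combinatorial decomposition on both sides. The achievability requires assembling two independent R\'enyi minimizations from a single Sanov rate function and checking that polynomial type-counting factors are absorbed by $f(k)$; the converse requires constructing, for each $t$, an alternative measure whose expected stopping time remains uniformly bounded, which is delicate because simultaneous perturbations of both $P_\rmN$ and $P_\rmA$ are needed to realize the inner minima of the two R\'enyi variational forms, and the alternative must be chosen as a genuine $(\tilde P_\rmN, \tilde P_\rmA)$ hypothesis-configuration so that the universality constraint \eqref{constraint2:est} applies.
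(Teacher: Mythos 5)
Your proposal follows essentially the same route as the paper's proof: the stopping-time bound via $\bbP_\calB\{\tau>k\}\le\bbP_\calB\{\rmS_\calB(\bX^k)\ge f(k)\}$, the method-of-types computation organized by the overlap parameter $t=|\calB\cap\calC|$ yielding exactly the rate function in \eqref{LDB}--\eqref{LD_t} with two decoupled R\'enyi minimizations, and a data-processing-plus-Doob converse against an alternative $(\tilP_\rmN,\tilP_\rmA)$ hypothesis configuration. The one clause to tighten is in the achievability: the concentration of the types indexed by $\calC$ and by $\calM_\calC$ comes from the stopping condition $\rmS_\calC(\bX^\tau)\le f(\tau)$ satisfied by the winning wrong hypothesis (as in \eqref{sum_k3}), not from the mere comparison $\rmS_\calC\le\rmS_\calB$, which by itself would only yield the smaller fixed-length exponent \eqref{fixed:exactT_error}.
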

The proof of Theorem \ref{seq_T_time} is similar to that of Theorem \ref{seq_stoptime} and thus, only differences are emphasized in Appendix \ref{proof:T}.

Recall that to construct fixed-length tests, Li, Nitinawarat and Veeravalli used the scoring function $\rmG_{\mathrm{Li},\calB}(\bQ)$~\cite[Eq. (37)]{li2014} when the number of outliers is known (cf. Theorem \ref{fixed:exactT}) and used the scoring function $\rmG_\calB(\bQ)$~\cite[Eq. (52)]{li2014} in \eqref{G_B} when the number of outliers is unknown. However, for sequential test, even when the number of outliers is known, we use the scoring function $\rmG_\calB(\bQ)$ instead of $\rmG_{\mathrm{Li},\calB}(\bQ)$. This is because using the scoring function $\rmG_\calB(\bQ)$ yields a larger and optimal error exponent. Specifically, if we use $\rmG_{\mathrm{Li},\calB}(\bQ)$, as shown in \cite{diao2024sequentialOHT}, the achievable error exponent becomes
\begin{align}
\min_{t\in[T-1]}(T-t)D_{\frac{M-2T+t}{M-T}}(P_\rmN||P_\rmA),
\end{align}
which is strictly less than $\mathrm{LD}_\calB(P_\rmN,P_\rmA,M,T)$ in Theorem \ref{seq_T_time} according to \eqref{LDB:renyi}.

We now discuss the influence of the number $M$ of observed sequences on the asymptotical performance. As $M$ increases, it follows from \eqref{LDB:renyi} that the achievable error exponent increases. This is consistent with our intuition because with more samples, the nominal distribution can be estimated more accurately. Then it is easier to identify the outlier. In the extreme case of $M\to\infty$, the exponent equals to the following value:
\begin{align}
\nn\min_{t\in[T-1]}(T-t)\big(D(P_\rmN||P_\rmA)+D_{\frac{t}{T}}(P_\rmA||P_\rmN)\big).
\end{align}

Finally, to clarify the advantage of the sequential test in Sec. \ref{test:exactlyT}, we compare the exponents in Theorem \ref{fixed:exactT} for the fixed-length test and \ref{seq_T_time} for our sequential test. Given any pair of distributions $(P,Q)\in\calP(\calX)^2$ and any set $\calB\in\calS(T)$, for any set $\calC\in\calS_\calB(T)$, the tuple of distributions $\bQ=(Q_1,\ldots,Q_M)$ with $Q_i=P$ for all $i\in\calC$ and $Q_j=Q$ for all $j\in\calM_\calC$ satisfies the constraint $\rmG_{\mathrm{Li},\calB}(\bQ)\ge\rmG_{\mathrm{Li},\calC}(\bQ)$.Thus, it follows from \eqref{fixed:exactT_error} and \eqref{LDB:renyi} that
\begin{align}\label{compare:exactT}
E_\calB(\Phi_{\rm LNV}|P_\rmN,P_\rmA)\le\mathrm{LD}_\calB(P_\rmN,P_\rmA,M,T)\le E_\calB(\Phi_\mathrm{seq}|P_\rmN,P_\rmA),
\end{align}
where the justification of \eqref{compare:exactT} is provided in Appendix \ref{proof:Tcompare}. In the following, we numerically verify that $E_\calB(\Phi_\mathrm{seq}|P_\rmN,P_\rmA)$ could be strictly greater than $E_\calB(\Phi_{\rm LNV}|P_\rmN,P_\rmA)$. Specifically, when the nominal and anomalous distributions are $(P_\rmN,P_\rmA)=\mathrm{Bern}(0.3,0.1)$, with $M=5$ and $T=2$, we have $E_\calB(\Phi_\mathrm{seq}|P_\rmN,P_\rmA)=0.0855>E_\calB(\Phi_{\rm LNV}|P_\rmN,P_\rmA)=0.0315$. Thus, our sequential test under the expected stopping time universality constraint could have strictly better performance than the fixed-length test as desired.

\section{Case of at most one Outlier}

\subsection{Problem Formulation}
In this section, it is assumed that there exists at most one outlier. The task is to design a test $\Phi=\{\tau,\phi\}$ that consists of a random stopping time $\tau$ and a decision rule $\phi:\calX^{M\tau}\to\{{\rmH}_1,{\rmH}_2,\ldots,{\rmH}_M,\mathrm{H_r}\}$ to classify among the following $M+1$ hypotheses:
\begin{itemize}
\item $\rmH_i,i\in[M]$:  the $i$-th sequence is the outlier.
\item $\mathrm{H_r}$: there is no outlier.
\end{itemize}
To evaluate the performance of a test, we consider the misclassification, false reject, false alarm probabilities and the expected stopping time of a sequential test. Specifically, the three error probabilities are defined as follows:
\begin{align}
\psi_i(\Phi|P_\rmN,P_\rmA)&:=\bbP_i\{\Phi(\bX^\tau)\neq\{\rmH_i,\mathrm{H_r}\}\},i\in[M],\\
\zeta_i(\Phi|P_\rmN,P_\rmA)&:=\bbP_i\{\Phi(\bX^\tau)=\mathrm{H_r}\},i\in[M],\\
\mathrm{P_{fa}}(\Phi|P_\rmN,P_\rmA)&:=\bbP_\rmr\{\Phi(\bX^\tau)\neq\mathrm{H_r}\},\label{def:mostone:fa}
\end{align}
where $\bbP_i$ is defined similarly as in \eqref{def:misclassify:p} and
we define $\bbP_\rmr(\cdot):=\Pr\{\cdot|\mathrm{H_r}\}$ to denote the joint distribution of observed sequences $\bX^\tau$, where for all $j\in[M]$, $X_j^\tau$ is generated i.i.d. from an unknown nominal distribution $P_\rmN$.  Consistent with the literature on outlier hypothesis testing~\cite{zhou2022second}, we define $\psi_i(\Phi|P_\rmN,P_\rmA)$ as the type-$i$ misclassification probability that bounds the probability of the event where the test falsely identifies a nominal sequence as an outlier, define $\zeta_i(\Phi|P_\rmN,P_\rmA)$ as the false reject probability that bounds the probability of the event where the test falsely claims no outlier when there exists one and define $\mathrm{P_{fa}}(\Phi|P_\rmN,P_\rmA)$ as the false alarm probability that bounds the probability of the event where the test falsely claims the existence of an outlier while all observed samples are nominal.

Furthermore, the expected stopping times under hypothesis $\rmH_i$ and $\mathrm{H_r}$ satisfy
\begin{align}
\mathbb{E}_i[\tau]&=\sum_{k=1}^{\infty}\bbP_i\{\tau> k\},\\
\mathbb{E}_\rmr[\tau]&=\sum_{k=1}^{\infty}\bbP_\rmr\{\tau> k\}.
\end{align}

\begin{definition}
A sequential test $\Phi$ is said to satisfy the expected stopping time universality constraint if there exists an integer $n\in\bbN$ such that for any pair of distributions $(P_\rmN,P_\rmA)\in\calP(\calX)^2$,
\begin{align}
\max\Big\{\max_{i\in[M]} \mathbb{E}_i[\tau],\mathbb{E}_\rmr[\tau]\Big\}\le n.
\end{align}
\end{definition}
For a sequential test satisfying the expected stopping time universality constraint, we are interested in the following exponents of the misclassification, false reject, false alarm probabilities:
\begin{align}
E_i(\Phi|P_\rmN,P_\rmA)&:=\liminf_{n\to\infty}\frac{-\log\psi_i(\Phi|P_\rmN,P_\rmA)}{n},~i\in[M]\\
E_{\mathrm{fr},i}(\Phi|P_\rmN,P_\rmA)&:=\liminf_{n\to\infty}\frac{-\log\zeta_i(\Phi|P_\rmN,P_\rmA)}{n},~i\in[M]\\
E_{\mathrm{fa}}(\Phi|P_\rmN,P_\rmA)&:=\liminf_{n\to\infty}\frac{-\log\mathrm{P_{fa}}(\Phi|P_\rmN,P_\rmA)}{n}.
\end{align}

\subsection{Existing Results}
To compare the performance of sequential tests and fixed-length tests, we recall the results of the fixed-length test $\Phi_{\rm ZWH}$ by Zhou, Wei and Hero~\cite[Eq. (5)]{zhou2022second}. Given $M$ observed sequences $\bx^n$ and any positive real number $\lambda\in\bbR_+$, Zhou's test in \cite[Eq. (5)]{zhou2022second} applies the following decision rule:
\begin{align}
\phi(\bx^n)=\left\{\!
\begin{array}{cl}
\rmH_i&\mathrm{if}~\rmS_i(\bx^n)\le \min_{j\in\calM_i}\rmS_j(\bx^n)~\mathrm{and}~\min_{j\in\calM_i}\rmS_j(\bx^n)>\lambda,\\
\rmH_\rmr&\mathrm{if}~\min_{j\in\calM_i}\rmS_j(\bx^n)\le\lambda,
\end{array}
\right.
\end{align}
where the scoring function $\rmS_i(\cdot)$ was defined in \eqref{def:scrore:si}.

To present their results, we need the following definition. Given any $\lambda\in\bbR_+$ and any pair of distributions $(P_\rmN,P_\rmA)\in\calP(\calX)^2$, for each $i\in[M]$, define the following exponent function
\begin{align}
\rmL_i(\lambda,P_\rmN,P_\rmA):=\min_{(j,k)\in\calM_{\rm dis}}\min_{\substack{\bQ\in\calP(\calX)^M:\\ \rmG_j(\bQ)\le\lambda,\rmG_k(\bQ)\le\lambda}}D(Q_i||P_\rmA)+\sum_{t\in\calM_i}D(Q_t||P_\rmN),
\end{align}
where $\calM_{\rm dis}:=\{(i,j)\in\calM^2:i\neq j\}$. Zhou, Wei and Hero derived the following result~\cite[Theorem 3]{zhou2022second}.
\begin{theorem}\label{zhou}
Given any $\lambda\in\bbR_+$ and any pair of distributions $(P_\rmN,P_\rmA)\in\calP(\calX)^2$ that are fully supported on the finite alphabet $\calX$, the achievable error exponents of $\Phi_{\rm ZWH}$ satisfies that for each $i\in[M]$,
\begin{align}
\min\{E_i(\Phi_{\rm ZWH}|P_\rmN,P_\rmA),E_{\mathrm{fa}}(\Phi_{\rm ZWH}|P_\rmN,P_\rmA)\}&\ge\lambda,\\
E_{\mathrm{fr},i}(\Phi_{\rm ZWH}|P_\rmN,P_\rmA)&\ge\rmL_i(\lambda,P_\rmN,P_\rmA).
\end{align}
\end{theorem}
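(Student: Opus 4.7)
The plan is to apply the method of types separately to each of the three error probabilities. The standard type-class probability estimate
\begin{align}
\bbP_{\rmH}\Big\{\bX^n\in \prod_{t\in[M]}\calT_{Q_t}^n\Big\}\le\exp\!\Big(-n\sum_{t\in[M]} D(Q_t\|P_t)\Big),
\end{align}
where $P_t$ is the generating distribution of the $t$-th sequence under hypothesis $\rmH$, combined with the polynomial bound $|\calP^n(\calX)|\le (n+1)^{|\calX|}$, reduces each error exponent to a constrained minimization of $\sum_t D(Q_t\|P_t)$ over the type tuples $\bQ=(Q_1,\ldots,Q_M)$ that trigger the error. First I would unpack the decision rule by writing $\rmS_l(\bx^n)=\rmG_l(\bQ)$ for $\bQ=(\hatT_{x_1^n},\ldots,\hatT_{x_M^n})$: declaring $\rmH_{i^*}$ requires $\rmG_{i^*}(\bQ)$ to be the minimum and $\rmG_j(\bQ)>\lambda$ for every $j\in\calM_{i^*}$; otherwise $\rmH_\rmr$ is declared. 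Hence (a) type-$i$ misclassification under $\rmH_i$ forces $\rmG_i(\bQ)>\lambda$, because the winning index $i^*\ne i$ satisfies $i\in\calM_{i^*}$; (b) false alarm under $\rmH_\rmr$ forces $\rmG_{j_0}(\bQ)>\lambda$ for some $j_0\in[M]$; and (c) false reject under $\rmH_i$ is precisely the event $\{\exists\,(j,k)\in\calM_{\rm dis}:\ \rmG_j(\bQ)\le\lambda,\ \rmG_k(\bQ)\le\lambda\}$.

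The workhorse identity behind the first two bounds is
\begin{align}
\sum_{l\in\calM_i} D(Q_l\|R) = \sum_{l\in\calM_i} D(Q_l\|\barQ_i) + (M-1)\,D(\barQ_i\|R), \qquad \barQ_i:=\frac{1}{M-1}\sum_{l\in\calM_i}Q_l,
\end{align}
valid for any $R\in\calP(\calX)$. Recognizing the first sum on the right as $\rmG_i(\bQ)$ and discarding the non-negative term $D(\barQ_i\|R)$ yields $\sum_{l\in\calM_i}D(Q_l\|P_\rmN)\ge \rmG_i(\bQ)$. For (a), the cost under $\rmH_i$ then satisfies $D(Q_i\|P_\rmA)+\sum_{l\in\calM_i}D(Q_l\|P_\rmN)\ge \rmG_i(\bQ)>\lambda$, delivering $E_i\ge\lambda$. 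For (b), selecting an index $j_0$ with $\rmG_{j_0}(\bQ)>\lambda$ and dropping the non-negative term $D(Q_{j_0}\|P_\rmN)$ from the $\rmH_\rmr$-cost gives $\sum_{t\in[M]}D(Q_t\|P_\rmN)\ge \sum_{l\in\calM_{j_0}}D(Q_l\|P_\rmN)\ge\rmG_{j_0}(\bQ)>\lambda$, so $E_\mathrm{fa}\ge\lambda$.

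For (c), the error-event description coincides exactly with the feasible set defining $\rmL_i(\lambda,P_\rmN,P_\rmA)$, and applying the method-of-types cost under $\rmH_i$ produces $E_{\mathrm{fr},i}\ge\rmL_i(\lambda,P_\rmN,P_\rmA)$ directly. The main technical subtlety I anticipate is the passage from constraints evaluated on empirical type tuples $\bQ\in\calP^n(\calX)^M$ to the corresponding constraints on limiting distributions $\bQ\in\calP(\calX)^M$ appearing in the exponent formulas; this is handled by continuity of $\rmG_l(\cdot)$ together with the polynomial count of types, so that an $o(1)$ slackening of the threshold $\lambda$ is exponentially negligible and the discretized optimization agrees with the continuous one in the large-$n$ limit.
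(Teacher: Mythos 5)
Your proof is correct. Note that the paper does not actually prove Theorem~\ref{zhou} --- it is recalled from Zhou, Wei and Hero without proof --- but your argument (reading off from the decision rule that misclassification and false alarm each force some $\rmG_{j}(\bQ)>\lambda$ while false reject forces two scores below $\lambda$, then applying the type-class bound together with the identity $\sum_{l\in\calM_i}D(Q_l\|P_\rmN)=(M-1)D(\barQ_i\|P_\rmN)+\rmG_i(\bQ)$ and discarding nonnegative terms) is exactly the machinery this paper deploys for the analogous sequential result in Appendix~\ref{proof:at_most_one}, so it is essentially the standard approach; your worry about slackening the threshold when passing from type tuples to general distributions is unnecessary, since the continuous feasible set contains the discrete one and the minimum can only decrease.
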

Note that when $0<\lambda<\mathrm{GJS}(P_\rmN,P_\rmA,M-2)$, the false reject exponent $\rmL_i(\lambda,P_\rmN,P_\rmA)$ equals zero and the false reject probability vanishes asymptotically.

\subsection{Test Design and Intuition}
\label{test_most_one}

Our sequential test $\Phi_\mathrm{seq}=(\tau,\phi)$ consists of a random stopping time and a decision rule. Given any two positive real numbers $(\lambda_1,\lambda_2)\in\bbR_+^2$ such that $\lambda_2<\lambda_1$, the stopping time $\tau$ is defined as follows:
\begin{align}\label{mostone_stoptime}
\tau:=\inf\Big\{k\ge n-1: \exists~l\in[M]~\mathrm{s.t.}~\rmS_l(\bx^k)\le\lambda_2~\mathrm{and}~\min_{j\in\calM_l}\rmS_j(\bx^k)>\lambda_1,~\mathrm{or}~\forall~j\in[M],\rmS_j(\bx^k)\le\lambda_2\Big\}.
\end{align}
The minimal stopping time is set as $n-1$ for a predefined integer $n\in\bbN$ to avoid decision errors caused by early stopping. At the stopping time $\tau$, we apply the following decision rule
\begin{align}\label{mostone_decision}
\phi(\bx^\tau)=\left\{\!
\begin{array}{cl}
\rmH_i&\mathrm{if}~\rmS_i(\bx^\tau)\le\lambda_2~\mathrm{and}~\min_{j\in\calM_i}\rmS_j(\bx^\tau)>\lambda_1,\\
\rmH_\rmr&\mathrm{Otherwise}.
\end{array}
\right.
\end{align}

We now explain the asymptotic intuition why the above test works using the weak law of large numbers. Fix any $i\in[M]$ and consider hypothesis $\rmH_i$. As discussed below \eqref{ep:test}, under hypothesis ${\rm H}_i$, the scoring function $\rmS_i(\bx^k)$ tends to zero while for each $j\in\calM_i$, the scoring function $\rmS_j(\bx^k)$ tends to a positive real number as the sample size $k$ increases. Therefore, when $k$ is sufficiently large, when the $i$-th sequence is the outlier, our test stops and makes the correct decision $\rmH_i$ if $0<\lambda_2<\lambda_1$ and $\lambda_1$ is less than a positive value to be specified. Analogously, under the null hypothesis $\rmH_\rmr$, for each $j\in[M]$, the scoring function $\rmS_j(\bx^k)$ tends to zero as the sample size $k$ increases. Thus, the correct decision of the null hypothesis could be made asymptotically if $\lambda_2>0$. In the following, we show that our sequential test satisfies the expected stopping time universality constraint when $n$ is sufficiently large under a mild condition on the thresholds of the test, and we characterize the exponential decay rates of three error probabilities.

\subsection{Main Results and Discussions}
To present our results, we need the following error exponent function.
Given any $\lambda\in\bbR_+$ and any pair of distributions $(P_\rmN,P_\rmA)\in\calP(\calX)^2$, for each $i\in[M]$, define
\begin{align}\label{Omegai}
\Omega_i(\lambda,P_\rmN,P_\rmA):=\min_{j\in\calM_i}\min_{\substack{\bQ\in\calP(\calX)^M:\\ \rmG_j(\bQ)\le\lambda}}D(Q_i||P_\rmA)+\sum_{t\in\calM_i}D(Q_t||P_\rmN).
\end{align}
Note that $\Omega_i(\lambda,P_\rmN,P_\rmA)$ is non-increasing in $\lambda$. In particular, $\Omega_i(\lambda_,P_\rmN,P_\rmA)=0$ if $\lambda$ satisfies
\begin{align}
\lambda&\ge
\mathrm{GJS}(P_\rmN,P_\rmA,M-2).
\end{align}
When $\lambda=0$, $\Omega_i(\lambda,P_\rmN,P_\rmA)$ achieves the following maximum value:
\begin{align}\label{Omegai0}
\Omega_i(0,P_\rmN,P_\rmA)&=D_{\frac{M-2}{M-1}}(P_\rmN||P_\rmA).
\end{align}
As we shall show, $\Omega_i(\lambda,P_\rmN,P_\rmA)$ characterizes the false reject exponent of our test.

\begin{theorem}\label{at_most_one}
Under any pair of distributions $(P_\rmN,P_\rmA)\in\calP(\calX)^2$ that are fully supported on the finite alphabet $\calX$, given any pair of positive real numbers $(\lambda_1,\lambda_2)\in\bbR_+^2$ such that $\lambda_2<\lambda_1<\mathrm{GJS}(P_\rmN,P_\rmA,M-2)$, our sequential test satisfies the expected stopping time universality constraint when $n$ is sufficiently large and ensures that
\begin{itemize}
\item[1)] for each $i\in[M]$, the exponents of the misclassification and false reject probabilities satisfy
\begin{align}
E_i(\Phi_\mathrm{seq}|P_\rmN,P_\rmA)&\ge\lambda_1,\\*
E_{\mathrm{fr},i}(\Phi_\mathrm{seq}|P_\rmN,P_\rmA)&\ge\Omega_i(\lambda_2,P_\rmN,P_\rmA),
\end{align}
\item[2)] the exponent of the false alarm probability satisfies
\begin{align}
E_{\mathrm{fa}}(\Phi_\mathrm{seq}|P_\rmN,P_\rmA)\ge\lambda_1.
\end{align}
\end{itemize}
\end{theorem}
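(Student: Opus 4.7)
My plan is to first handle the expected stopping time constraint and then bound each of the three error probabilities using the method of types, mirroring the template of Theorem~\ref{seq_stoptime} but separately accounting for the three distinct error events.

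\textbf{Expected stopping time.} Under $\rmH_i$, by the weak law of large numbers the empirical distributions $\hatT_{X_i^k}$ and $\hatT_{X_j^k}$ ($j\in\calM_i$) converge in probability to $P_\rmA$ and $P_\rmN$, respectively, so $\rmS_i(\bX^k)\to 0$ and $\rmS_j(\bX^k)\to \mathrm{GJS}(P_\rmN,P_\rmA,M-2)$ for all $j\in\calM_i$. Since $\lambda_2<\lambda_1<\mathrm{GJS}(P_\rmN,P_\rmA,M-2)$, the first stopping condition in \eqref{mostone_stoptime} holds with high probability for all $k\ge n-1$; using the union bound over the polynomially many types, the tail $\bbP_i\{\tau>k\}$ decays exponentially, so $\mathbb{E}_i[\tau]\le n-1+O(1)$. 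Under the null hypothesis $\rmH_\rmr$ all $\rmS_j(\bX^k)\to 0\le\lambda_2$, so the second stopping condition is satisfied with overwhelming probability and an analogous argument gives $\mathbb{E}_\rmr[\tau]\le n-1+O(1)$. Taking $n$ large enough absorbs the $O(1)$ correction into the universality bound.

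\textbf{Misclassification and false alarm exponents.} Under $\rmH_i$, a misclassification forces the decision $\rmH_j$ with $j\ne i$, and the decision rule \eqref{mostone_decision} then implies $\rmS_i(\bX^\tau)>\lambda_1$ because $i\in\calM_j$. Hence a union bound over stopping times yields
\begin{align}
\psi_i(\Phi_\mathrm{seq}|P_\rmN,P_\rmA)\le \sum_{k\ge n-1}\bbP_i\{\rmS_i(\bX^k)>\lambda_1\}.
\end{align}
By the method of types, the inner probability is at most $(k+1)^{M|\calX|}\exp(-kE)$ with
$E=\min_{\bQ:\rmG_i(\bQ)>\lambda_1}\left[D(Q_i\|P_\rmA)+\sum_{l\in\calM_i}D(Q_l\|P_\rmN)\right]$. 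Dropping $D(Q_i\|P_\rmA)\ge 0$ and applying the variational identity $\rmG_i(\bQ)=\min_{V}\sum_{l\in\calM_i}D(Q_l\|V)\le\sum_{l\in\calM_i}D(Q_l\|P_\rmN)$ gives $E\ge\lambda_1$, and summing the geometric-like series in $k$ produces $E_i(\Phi_\mathrm{seq}|P_\rmN,P_\rmA)\ge\lambda_1$. The false alarm bound follows by the same route: under $\rmH_\rmr$, the event $\Phi=\rmH_i$ forces $\rmS_j(\bX^\tau)>\lambda_1$ for every $j\in\calM_i$, and the method-of-types exponent $\min_{\bQ:\rmG_j(\bQ)>\lambda_1}\sum_{l\in[M]}D(Q_l\|P_\rmN)$ is again $\ge\lambda_1$ by the same variational inequality.

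\textbf{False reject exponent.} Under $\rmH_i$, the decision $\rmH_\rmr$ only occurs through the second stopping condition, so $\rmS_j(\bX^\tau)\le\lambda_2$ for every $j\in[M]$, and in particular for any chosen $j\in\calM_i$. Therefore
\begin{align}
\zeta_i(\Phi_\mathrm{seq}|P_\rmN,P_\rmA)\le \sum_{k\ge n-1}\min_{j\in\calM_i}\bbP_i\{\rmS_j(\bX^k)\le\lambda_2\}.
\end{align}
For each $j\in\calM_i$, the method of types gives exponent $\min_{\bQ:\rmG_j(\bQ)\le\lambda_2}[D(Q_i\|P_\rmA)+\sum_{l\in\calM_i}D(Q_l\|P_\rmN)]$; taking $\min_{j\in\calM_i}$ produces exactly $\Omega_i(\lambda_2,P_\rmN,P_\rmA)$ as defined in \eqref{Omegai}. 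Summing in $k$ and letting $n\to\infty$ yields $E_{\mathrm{fr},i}(\Phi_\mathrm{seq}|P_\rmN,P_\rmA)\ge\Omega_i(\lambda_2,P_\rmN,P_\rmA)$.

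\textbf{Main obstacle.} The most delicate step is the expected stopping time bound: one must show that the tail probability $\bbP_i\{\tau>k\}$ decays strictly exponentially so that $\mathbb{E}[\tau]-(n-1)$ remains uniformly $O(1)$ in $(P_\rmN,P_\rmA)$. This requires a careful method-of-types analysis that the ``bad'' types making the first (or second) stopping condition fail carry exponent strictly bounded away from zero under the strict inequality $\lambda_1<\mathrm{GJS}(P_\rmN,P_\rmA,M-2)$, uniformly large enough to cover the polynomial prefactor $(k+1)^{M|\calX|}$ when summing in $k$.
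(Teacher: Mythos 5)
Your proposal is correct and follows essentially the same route as the paper's proof: the same union bounds over stopping times for each of the three error events, the same method-of-types estimates, and the same key inequality $\sum_{l\in\calM_i}D(Q_l\|P_\rmN)\ge\rmG_i(\bQ)$ (you invoke it via the variational form of $\rmG_i$, the paper via the equivalent decomposition identity). Your expected-stopping-time argument is only sketched, but you correctly identify the decomposition the paper uses — the complement of the first stopping condition splits into the event $\{\forall l,\ \rmS_l>\lambda_2\}$ with exponent $\lambda_2$ and the event $\{\forall l,\ \min_{j\in\calM_l}\rmS_j\le\lambda_1\}$ with exponent $\Omega_i(\lambda_1,P_\rmN,P_\rmA)>0$ under $\lambda_1<\mathrm{GJS}(P_\rmN,P_\rmA,M-2)$ — which is exactly how the paper closes that step.
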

The proof of Theorem \ref{at_most_one} is provided in Appendix \ref{proof:at_most_one}. Firstly, we prove that the expected stopping times under each non-null hypothesis and the null hypothesis are upper bounded by $n$ when $n$ is sufficiently large under a mild condition. Subsequently, we use method of types to lower bound exponential decay rates of all three error probabilities.
Note that the parameters $\lambda_1$ and $\lambda_2$ are fixed a-priori, independent of unknown generating distributions $(P_\rmN,P_\rmA)$. Given chosen thresholds $(\lambda_1,\lambda_2)$, the theoretical guarantees hold if the unknown ground truth distributions $(P_\rmN,P_\rmA)$ satisfy $\lambda_2<\lambda_1<\mathrm{GJS}(P_\rmN,P_\rmA,M-2)$; otherwise, such guarantees can not be claimed.

Comparing Theorems \ref{seq_stoptime} and \ref{at_most_one}, we can obtain the penalty of not knowing whether the outlier exists on the performance of a sequential test. Note that in Theorem \ref{seq_stoptime}, it is known that one outlier exists while in Theorem \ref{at_most_one}, the outlier might either exist or not. For a fair comparison, we need to consider the error probability under each non-hull hypothesis. That is, we should compare the misclassification exponent $D_{\frac{M-2}{M-1}}(P_\rmN||P_\rmA)$ in Theorem \ref{seq_stoptime} with the minimal value of the misclassification exponent and the false reject exponent $\min\{\lambda_1,\Omega_i(\lambda_2,P_\rmN,P_\rmA)\}$ in Theorem \ref{at_most_one}. For any $\lambda_2>0$, it follows from \eqref{Omegai0} that
\begin{align}
\min\{\lambda_1,\Omega_i(\lambda_2,P_\rmN,P_\rmA)\}\le\Omega_i(\lambda_2,P_\rmN,P_\rmA)<D_{\frac{M-2}{M-1}}(P_\rmN||P_\rmA).
\end{align}
Thus, there is a penalty in the error exponent of sequential test when the existence of the outlier is unknown.

Finally, we demonstrate the advantage of the sequential test over the fixed-length test in terms of the achievable Bayesian error exponent. For the case of at most one outlier, the Bayesian error exponent is the smallest exponent among all exponents of misclassification, false reject and false alarm. For the fixed-length test, it follows from Theorem \ref{zhou} that the achievable Bayesian exponent satisfies
\begin{align}
E_\mathrm{Bayesian}(\Phi_{\rm ZWH}|P_\rmN,P_\rmA)=\max_{\lambda\in[0,\mathrm{GJS}(P_\rmN,P_\rmA,M-2)]}\min_{i\in[M]}\min\{\lambda,\rmL_i(\lambda,P_\rmN,P_\rmA)\}\label{bayesian:fix}.
\end{align}
For our sequential test, it follows from Theorem \ref{at_most_one} that the achievable Bayesian exponent satisfies
\begin{align}
E_\mathrm{Bayesian}(\Phi_\mathrm{seq}|P_\rmN,P_\rmA)&=\max_{\substack{(\lambda_1,\lambda_2)\in\bbR_+^2:\\\lambda_2<\lambda_1<\mathrm{GJS}(P_\rmN,P_\rmA,M-2)}}
\min_{i\in[M]}\min\{\lambda_1,\Omega_i(\lambda_2,P_\rmN,P_\rmA)\}\label{bayesian:sequential}.
\end{align}
For the fixed-length test, the threshold $\lambda$ tradeoffs the false reject exponent and the homogeneous misclassification and false alarm exponents. To maximize the term $\min\{\lambda,\rmL_i(\lambda,P_\rmN,P_\rmA)\}$, $\lambda$ should be chosen as a moderate value such that $\lambda=\rmL_i(\lambda,P_\rmN,P_\rmA)$.
In contrast, our sequential test resolves such a tradeoff since the misclassification and false alarm exponents are increasing in $\lambda_1$ and the false reject exponent is non-increasing in $\lambda_2$. This implies our sequential test has a larger Bayesian exponent by selecting an arbitrary $\lambda_2$ close to $0$ such that $\Omega_i(\lambda_2,P_\rmN,P_\rmA)>\lambda$ and by choosing an arbitrary $\lambda_1$ such that $\lambda_1>\lambda$, which leads to $E_\mathrm{Bayesian}(\Phi_\mathrm{seq}|P_\rmN,P_\rmA)>E_\mathrm{Bayesian}(\Phi_{\rm ZWH}|P_\rmN,P_\rmA)$. For example, when $(P_\rmN,P_\rmA)=\mathrm{Bern}(0.25,0.28)$ and $M=4$, the optimizer for \eqref{bayesian:fix} is $\lambda=4\times 10^{-4}$ and the optimizer for \eqref{bayesian:sequential} is $(\lambda_1,\lambda_2)=(9\times 10^{-4},10^{-4})$. Thus, $\Omega_i(\lambda_2,P_\rmN,P_\rmA)=8.54\times 10^{-4}$ and $\rmL_i(\lambda,P_\rmN,P_\rmA)=3.71\times 10^{-4}$. It follows that $E_\mathrm{Bayesian}(\Phi_\mathrm{seq}|P_\rmN,P_\rmA)=8.54\times 10^{-4}>E_\mathrm{Bayesian}(\Phi_{\rm ZWH}|P_\rmN,P_\rmA)=3.71\times 10^{-4}$. Thus, our sequential test can have strictly larger Bayesian exponent than the fixed-length test and demonstrates performance improvement as desired.

We present a numerical example to illustrate Theorem \ref{at_most_one} in Fig. \ref{simulation_miscla}. Consider the binary alphabet $\calX=\{0,1\}$. Set the number of sequences $M=4$ and generating distributions $(P_\rmN,P_\rmA)=\mathrm{Bern}(0.28,0.25)$. We choose the parameters as $(\lambda_1,\lambda_2)=(0.001,0.0005)$. We simulate the misclassification probability of our sequential test in \eqref{mostone_decision} versus the exponential estimates in Theorem \ref{at_most_one} when there is one outlier of $M$ sequences using $8 \times 10^4$ independent experiments. As observed, when the average sample size is large, our theoretical characterization for the misclassification probability provides a tight upper bound.

\begin{figure}[htbp]
\centering
\includegraphics[width=.5\columnwidth]{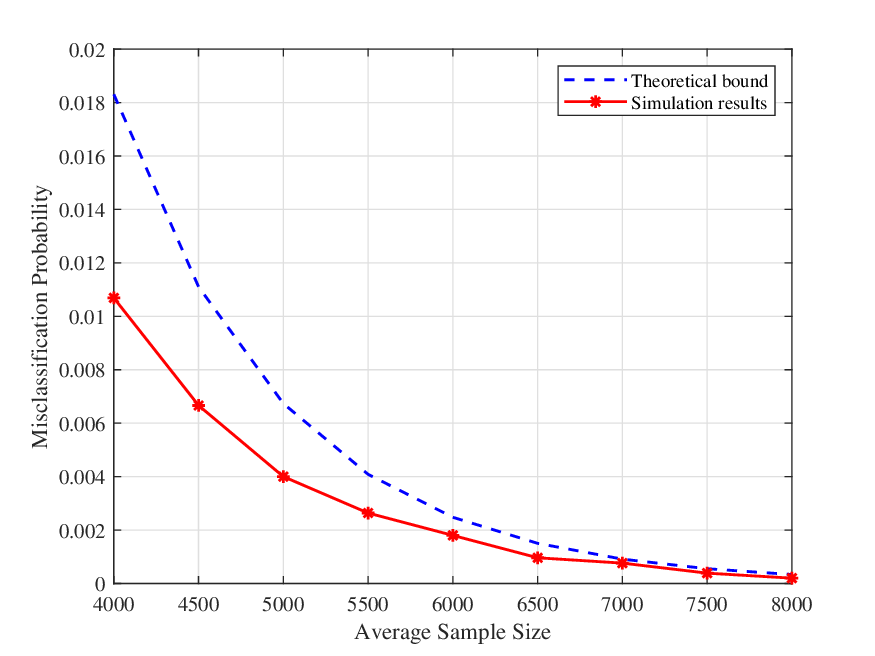}
\caption{Simulated misclassification probability of our test under expected stopping time universality constraint for the case with at most one outlier under generating distributions $(P_\rmN,P_\rmA)=\mathrm{Bern}(0.28,0.25)$ with $M=4$ with $(\lambda_1,\lambda_2)=(0.001,0.0005)$.}
\label{simulation_miscla}
\end{figure}

\section{Case of at most $T$ Outliers}
\subsection{Problem Formulation}
In this section, it is assumed that there are at most $T\le\lceil\frac{M}{2}-1\rceil$ identically distributed outliers, i.e., the number of outliers is unknown but upper bounded by $T$. For each $s\in[T]$, let $\calS(s):=\{\calB\subset[M]:|\calB|=s\}$ and let $\calS:=\bigcup_{s\in[T]}\calS(s)$.
The task is to design a test $\Phi=\{\tau,\phi\}$ that consists of the random stopping time $\tau$ and the decision rule $\phi:~\calX^{M\tau}\to\{\{\rmH_\calB\}_{\calB\in\calS},\mathrm{H_r}\}$ to classify among the following $|\calS|+1$ hypotheses:
\begin{itemize}
\item $\rmH_{\calB}$, $\calB\in\calS$: for each $j\in\calB$, the $j$-th sequence is an outlier.
\item $\mathrm{H_r}$: there is no outlier.
\end{itemize}

To evaluate the performance of a test, we consider the misclassification, false reject, false alarm probabilities and the expected stopping time of a sequential test. Specifically, the three error probabilities is defined as follows:
\begin{align}
\psi_\calB(\Phi|P_\rmN,P_\rmA)&:=\bbP_\calB\{\Phi(\bX^\tau)\neq\{\rmH_\calB,\mathrm{H_r}\}\},\calB\in\calS,\\
\zeta_\calB(\Phi|P_\rmN,P_\rmA)&:=\bbP_\calB\{\Phi(\bX^\tau)=\mathrm{H_r}\},\calB\in\calS,\\
\mathrm{P_{fa}}(\Phi|P_\rmN,P_\rmA)&:=\bbP_\rmr\{\Phi(\bX^\tau)\neq\mathrm{H_r}\},
\end{align}
where $\bbP_\calB$ is defined similarly as in \eqref{def:misclassify:pb} and $\bbP_\rmr$ is defined similarly as in \eqref{def:mostone:fa}.
Furthermore, the expected stopping times under hypothesis $\rmH_\calB$ and $\mathrm{H_r}$ satisfy
\begin{align}
\mathbb{E}_\calB[\tau]&=\sum_{k=1}^{\infty}\bbP_\calB\{\tau> k\},\\
\mathbb{E}_\rmr[\tau]&=\sum_{k=1}^{\infty}\bbP_\rmr\{\tau> k\}.
\end{align}

\begin{definition}
A sequential test $\Phi$ is said to satisfy the expected stopping time universality constraint if there exists an integer $n\in\bbN$ such that for any pair of distributions $(P_\rmN,P_\rmA)\in\calP(\calX)^2$,
\begin{align}
\max\Big\{\max_{\calB\in\calS} \mathbb{E}_\calB[\tau],\mathbb{E}_\rmr[\tau]\Big\}\le n.
\end{align}
\end{definition}
For a sequential test satisfying the expected stopping time universality constraint, we are interested in the following exponent of the misclassification, false reject, false alarm probabilities:
\begin{align}
E_\calB(\Phi|P_\rmN,P_\rmA)&:=\liminf_{n\to\infty}\frac{-\log\psi_\calB(\Phi|P_\rmN,P_\rmA)}{n},~\calB\in\calS,\\
E_{\mathrm{fr},\calB}(\Phi|P_\rmN,P_\rmA)&:=\liminf_{n\to\infty}\frac{-\log\zeta_\calB(\Phi|P_\rmN,P_\rmA)}{n},~\calB\in\calS,\\
E_{\mathrm{fa}}(\Phi|P_\rmN,P_\rmA)&:=\liminf_{n\to\infty}\frac{-\log\mathrm{P_{fa}}(\Phi|P_\rmN,P_\rmA)}{n}.
\end{align}

\subsection{Existing Results}
To compare the performance of sequential tests and fixed-length tests, we recall and revise the results of the fixed-length test $\Phi_{\rm ZWH}$ by Zhou, Wei and Hero~\cite[Eq. (43)]{zhou2022second} for the case with at most multiple outlier.
Given any $\calB\in\calS$, define $\calS_\calB:=\{\calC\in\calS:\calC\neq\calB\}$.
Given $M$ observed sequences $\bx^n$ and any positive real number $\lambda\in\bbR_+$, we revise Zhou's test in \cite[Eq. (43)]{zhou2022second} by applying the following decision rule:
\begin{align}\label{fixed:atmostT}
\phi(\bx^n)=\left\{\!
\begin{array}{cl}
\rmH_\calB&\mathrm{if}~\rmS_\calB(\bx^n)\le \min_{\calC\in\calS_\calB}\rmS_\calC(\bx^n)~\mathrm{and}~\min_{\calC\in\calS_\calB}\rmS_\calC(\bx^n)>\lambda,\\
\rmH_\rmr&\mathrm{if}~\min_{\calC\in\calS_\calB}\rmS_\calC(\bx^n)\le\lambda,
\end{array}
\right.
\end{align}
where the scoring function $\rmS_\calB(\cdot)$ was defined in \eqref{def:scrore:sb}. Note that Zhou, Wei and Hero used the scoring function $\rmG_{\mathrm{Li},\calB}(\bQ)$ in \cite[Eq. (43)]{zhou2022second} and we need to refine the fixed-length test by using $\rmG_\calB(\bQ)$. This is because $\rmG_{\mathrm{Li},\calB}(\bQ)$ is unable to deal with the case with unknown number of outliers.


To present the results, we need the following definition. Given any $\lambda\in\bbR_+$ and any pair of distributions $(P_\rmN,P_\rmA)\in\calP(\calX)^2$, for each $\calB\in\calS$, define the following exponent function
\begin{align}\label{LB}
\rmL_\calB(\lambda,P_\rmN,P_\rmA):=\min_{(\calC,\calD)\in\calS^2:\calC\neq\calD}\min_{\substack{\bQ\in\calP(\calX)^M:\\ \rmG_\calC(\bQ)\le\lambda,\rmG_\calD(\bQ)\le\lambda}}\sum_{i\in\calB}D(Q_i||P_\rmA)+\sum_{t\in\calM_\calB}D(Q_t||P_\rmN).
\end{align}
We slightly refine the results of Zhou, Wei and Hero~\cite[Theorem 6]{zhou2022second} as follows.
\begin{theorem}\label{zhou:T}
Given any $\lambda\in\bbR_+$ and any pair of distributions $(P_\rmN,P_\rmA)\in\calP(\calX)^2$ that are fully supported on the finite alphabet $\calX$, the achievable error exponents of $\Phi_{\rm ZWH}$ satisfies for each $\calB\in\calS$,
\begin{align}
\min\{E_\calB(\Phi_{\rm ZWH}|P_\rmN,P_\rmA),E_{\mathrm{fa}}(\Phi_{\rm ZWH}|P_\rmN,P_\rmA)\}&\ge\lambda,\\
E_{\mathrm{fr},\calB}(\Phi_{\rm ZWH}|P_\rmN,P_\rmA)&\ge\rmL_\calB(\lambda,P_\rmN,P_\rmA).
\end{align}
\end{theorem}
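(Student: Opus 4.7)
The plan is to prove all three bounds by the method of types, following the template already used in the proof of Theorem~\ref{zhou} (the $T=1$ case) and in \cite[Thm.~6]{zhou2022second}, but with the scoring function $\rmG_\calB$ defined in \eqref{G_B} in place of $\rmG_{\mathrm{Li},\calB}$. Continuity of $\rmG_\calB$ on $\calP(\calX)^M$ together with Sanov's theorem will give exact exponents up to polynomial prefactors in $n$, and the stated lower bounds will follow by collapsing the resulting KL terms through a single centering inequality.

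For the misclassification exponent under $\rmH_\calB$, observe that by the decision rule in \eqref{fixed:atmostT} declaring any non-null hypothesis $\rmH_\calC$ with $\calC\neq\calB$ requires $\min_{\calD\in\calS_\calC}\rmS_\calD(\bx^n)>\lambda$; since $\calB\in\calS_\calC$, this forces $\rmS_\calB(\bx^n)>\lambda$. Applying Sanov's theorem under $\bbP_\calB$ yields
\begin{align}
\bbP_\calB\{\rmS_\calB(\bx^n)>\lambda\}\dotleq\exp\Bigg(-n\min_{\substack{\bQ\in\calP(\calX)^M:\\\rmG_\calB(\bQ)\ge\lambda}}\Big(\sum_{i\in\calB}D(Q_i\|P_\rmA)+\sum_{j\in\calM_\calB}D(Q_j\|P_\rmN)\Big)\Bigg).
\end{align}
The compression step is the identity $\sum_{i\in\calA}D(Q_i\|P)=\sum_{i\in\calA}D(Q_i\|\bar Q_\calA)+|\calA|D(\bar Q_\calA\|P)\ge\sum_{i\in\calA}D(Q_i\|\bar Q_\calA)$, where $\bar Q_\calA$ is the average of $\{Q_i\}_{i\in\calA}$. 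Applying this separately to the $\calB$-block with $P_\rmA$ and to the $\calM_\calB$-block with $P_\rmN$ and adding gives $\sum_{i\in\calB}D(Q_i\|P_\rmA)+\sum_{j\in\calM_\calB}D(Q_j\|P_\rmN)\ge\rmG_\calB(\bQ)\ge\lambda$, so $E_\calB(\Phi_{\rm ZWH}|P_\rmN,P_\rmA)\ge\lambda$. The false alarm argument is parallel: a false alarm under $\rmH_\rmr$ requires $\phi(\bx^n)=\rmH_\calB$ for some $\calB$, hence $\rmS_\calC(\bx^n)>\lambda$ for every $\calC\in\calS_\calB$. After a union bound over $\calB$ and picking any fixed $\calC_0\in\calS_\calB$, I would apply Sanov to $\bbP_\rmr\{\rmS_{\calC_0}(\bx^n)>\lambda\}$; since all $M$ sequences are i.i.d.\ from $P_\rmN$ under $\bbP_\rmr$, the same centering inequality (this time with $P=P_\rmN$ on both blocks) yields $E_{\mathrm{fa}}(\Phi_{\rm ZWH}|P_\rmN,P_\rmA)\ge\lambda$.

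For the false reject exponent under $\rmH_\calB$, I would first unpack the decision rule to see that $\phi(\bx^n)=\rmH_\rmr$ forces the second-smallest score to lie at or below $\lambda$, i.e.\ there exist distinct $\calC,\calD\in\calS$ with $\rmS_\calC(\bx^n)\le\lambda$ and $\rmS_\calD(\bx^n)\le\lambda$. Union-bounding over ordered pairs and applying Sanov to the joint event under $\bbP_\calB$ yields
\begin{align}
\bbP_\calB\{\rmS_\calC(\bx^n)\le\lambda,\rmS_\calD(\bx^n)\le\lambda\}\dotleq\exp\Bigg(-n\min_{\substack{\bQ\in\calP(\calX)^M:\\\rmG_\calC(\bQ)\le\lambda,\rmG_\calD(\bQ)\le\lambda}}\Big(\sum_{i\in\calB}D(Q_i\|P_\rmA)+\sum_{j\in\calM_\calB}D(Q_j\|P_\rmN)\Big)\Bigg),
\end{align}
which, after minimizing over pairs $(\calC,\calD)$ with $\calC\neq\calD$, matches $\rmL_\calB(\lambda,P_\rmN,P_\rmA)$ in \eqref{LB}. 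The polynomial-in-$n$ prefactors from the number of types and from the $O(|\calS|^2)$ union bound are absorbed in the $\liminf$ defining $E_{\mathrm{fr},\calB}$.

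The one subtlety I anticipate is the combinatorial book-keeping in reducing the false reject event to a two-set condition and the false alarm event to an ``all-but-one-set'' condition; in particular, the dummy index $\calB$ inside the definition of the rule in \eqref{fixed:atmostT} must be interpreted as a bound variable ranging over $\calS$. Once this is handled, the entire argument reduces to Sanov plus a single centering inequality for KL divergence, and this is precisely why the scoring function $\rmG_\calB$ (which decomposes into both a nominal-block and an anomalous-block term) is required in the unknown-number-of-outliers setting, whereas $\rmG_{\mathrm{Li},\calB}$ would only control the nominal block and would fail to yield a $\lambda$-bound on the misclassification and false alarm exponents.
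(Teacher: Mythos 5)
Your proof is correct, and it is essentially the only proof available for comparison: the paper itself states Theorem~\ref{zhou:T} without proof, deferring to \cite[Theorem 6]{zhou2022second}, and your argument is exactly the method-of-types machinery that both that reference and this paper's own Appendix~\ref{proof:at_most_T} deploy --- a union bound over types, the probability-of-a-type-class upper bound, and the centering identity $\sum_{i\in\calA}D(Q_i\|P)=\sum_{i\in\calA}D(Q_i\|\bar{Q}_\calA)+|\calA|D(\bar{Q}_\calA\|P)$ (the paper's \eqref{equation1} and \eqref{T:sumPN}) to drop down to $\rmG_\calB(\bQ)>\lambda$. Your reductions of the three error events (misclassification $\Rightarrow\rmS_\calB>\lambda$; false alarm $\Rightarrow\exists\,\calC$ with $\rmS_\calC>\lambda$; false reject $\Rightarrow$ two distinct scores $\le\lambda$) are all sound under the argmin reading of \eqref{fixed:atmostT} that you correctly flag. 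The one inaccuracy is your closing claim that $\rmG_{\mathrm{Li},\calB}$ ``would fail to yield a $\lambda$-bound on the misclassification and false alarm exponents'': since $\sum_{j\in\calM_\calB}D(Q_j\|P_\rmN)\ge\rmG_{\mathrm{Li},\calB}(\bQ)$ by the same centering identity applied to the nominal block alone, those two $\lambda$-bounds would survive mechanically. What actually breaks with $\rmG_{\mathrm{Li},\calB}$ in the unknown-number setting is the false-reject exponent: under $\rmH_\calB$ with $|\calB|<T$, every $\calC\supsetneq\calB$ has $\calM_\calC$ consisting purely of nominal samples, so $\rmS_\calC\to 0$ alongside $\rmS_\calB$, two scores fall below $\lambda$, and $\rmL_\calB$ would collapse to zero. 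This is a side remark about motivation, not a step in your proof, so it does not affect the validity of the three claimed bounds.
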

Let $\tilde{\bP}^M=(P_1,\ldots,P_M)$ be a tuple of $M$ distributions such that for each $i\in\calB,P_i=P_\rmA$ and for $i\in\calM_\calB,P_i=P_\rmN$. Furthermore, define the following function
\begin{align}\label{tillambda1}
\nn&\tilde\lambda_1(P_\rmN,P_\rmA)\\*
&:=\min_ {\calC\in\calS_\calB}\rmG_\calC(\tilde{\bP}^M)\\
\nn&=\min_ {\calC\in\calS_\calB}|\calB\cap\calM_\calC|D\Big(P_\rmA\Big\|\frac{|\calB\cap\calM_\calC|P_\rmA+|\calM_\calB\cap\calM_\calC|P_\rmN}{M-|\calC|}\Big)+
|\calM_\calB\cap\calM_\calC|D\Big(P_\rmN\Big\|\frac{|\calB\cap\calM_\calC|P_\rmA+|\calM_\calB\cap\calM_\calC|P_\rmN}{M-|\calC|}\Big)\\*
&\qquad+|\calB\cap\calC|D\Big(P_\rmA\Big\|\frac{|\calB\cap\calC|P_\rmA+|\calM_\calB\cap\calC|P_\rmN}{|\calC|}\Big)+
|\calM_\calB\cap\calC|D\Big(P_\rmN\Big\|\frac{|\calB\cap\calC|P_\rmA+|\calM_\calB\cap\calC|P_\rmN}{|\calC|}\Big).
\end{align}
When $T=1$, The function $\tilde\lambda_1(P_\rmN,P_\rmA)$ specializes to $\mathrm{GJS}(P_\rmN,P_\rmA,M-2)$ (cf. \eqref{GJS}). This is because when $T=1$, $|\calB\cap\calC|=0,~|\calB\cap\calM_\calC|=|\calM_\calB\cap\calC|=1,~|\calM_\calB\cap\calM_\calC|=M-2$ and thus,
\begin{align}
\tilde\lambda_1(P_\rmN,P_\rmA)=D\Big(P_\rmA\Big\|\frac{P_\rmA+(M-2)P_\rmN}{M-1}\Big)+(M-2)D\Big(P_\rmN\Big\|\frac{P_\rmA+(M-2)P_\rmN}{M-1}\Big)
=\mathrm{GJS}(P_\rmN,P_\rmA,M-2).
\end{align}
We remark that when $\lambda\ge\tilde\lambda_1(P_\rmN,P_\rmA)$, it follows from the definition of $\rmL_\calB(\lambda,P_\rmN,P_\rmA)$ in \eqref{LB} that the false reject exponent $\rmL_\calB(\lambda,P_\rmN,P_\rmA)$ equals zero and the false reject probability vanishes asymptotically.

\subsection{Test Design and Preliminaries}
\label{test_most_T}

Our sequential test $\Phi_\mathrm{seq}=(\tau,\phi)$ consists of the random stopping time and the decision rule.
Given $(\lambda_1,\lambda_2)\in\bbR_+^2$ such that $\lambda_2<\lambda_1$, the stopping time $\tau$ is defined as follows:
\begin{align}\label{mostT_stoptime}
\tau:=\inf\Big\{k\ge n-1: \exists~\calC\in\calS~\mathrm{s.t.}~\rmS_\calC(\bx^k)\le\lambda_2,~\mathrm{and}~\min_{\calD\in\calS_\calC}\rmS_\calC(\bx^k)>\lambda_1,~\mathrm{or}~
\forall~\calC\in\calS~\mathrm{s.t.}~\rmS_\calC(\bx^k)\le\lambda_2\Big\},
\end{align}
where the scoring function $\rmS_\calC(\cdot)$ was defined in \eqref{def:scrore:sb}.
The minimal stopping time is set as $n-1$ for a predefined integer $n\in\bbN$ to avoid early stopping. At the stopping time $\tau$, we apply the following decision rule
\begin{align}\label{mostT_decision}
\phi(\bx^\tau)=\left\{
\begin{array}{cl}
\rmH_\calB&\mathrm{if}~\rmS_\calB(\bx^k)\le\lambda_2,~\mathrm{and}~\min_{\calC\in\calS_\calB}\rmS_\calC(\bx^k)>\lambda_1,\\
\rmH_\rmr&\mathrm{Otherwise}.
\end{array}
\right.
\end{align}

In the following, we show our sequential test satisfies the expected stopping time universality constraint under mild conditions when $n$ is sufficiently large, and characterize the exponential decay rates of three error probabilities. To present our results, we need the following two error exponent functions. Fix any $\calB\in\calS$. Define
\begin{align}\label{OmegaB}
\Omega_\calB(\lambda,P_\rmN,P_\rmA):=\min_{\calC\in\calS_\calB}\min_{\substack{\bQ\in\calP(\calX)^M: \\ \rmG_\calC(\bQ)\le\lambda}}&\sum\limits_{j\in\calM_\calB}D(Q_j||P_\rmN)+\sum\limits_{i\in\calB}D(Q_i||P_\rmA).
\end{align}
As we shall show, $\Omega_\calB(\lambda,P_\rmN,P_\rmA)$ characterizes the achievable false reject exponent.

\begin{lemma}\label{properties}
The error exponent functions satisfies the following properties.
\begin{itemize}
\item[(i)] The exponent $\Omega_\calB(\lambda,P_\rmN,P_\rmA)$ are non-increasing in $\lambda$.
\item[(ii)] The function $\Omega_\calB(\lambda,P_\rmN,P_\rmA)=0$ if $\lambda$ satisfies that
\begin{align}
\lambda\ge\min_{\calC\in\calS_\calB}\rmG_\calC(\tilde{\bP}^M)=\tilde\lambda_1(P_\rmN,P_\rmA).
\end{align}
When $\lambda=0$, $\Omega_\calB(\lambda,P_\rmN,P_\rmA)$ achieves the following maximum value:
\begin{align}\label{OmegaB0}
\Omega_\calB(0,P_\rmN,P_\rmA)
=\mathrm{LD}_\calB(P_\rmN,P_\rmA,M,|\calB|).
\end{align}
\end{itemize}
\end{lemma}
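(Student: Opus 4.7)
The plan is to treat the two parts separately, both relying on method-of-types arguments and properties of R\'enyi divergence.

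For part (i), for each fixed $\calC \in \calS_\calB$, the sublevel set $\{\bQ : \rmG_\calC(\bQ) \le \lambda\}$ grows monotonically with $\lambda$, so the inner minimum over this set is non-increasing in $\lambda$, and taking the outer minimum over $\calC$ preserves monotonicity.

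For the first claim of part (ii), I would evaluate $\Omega_\calB$ at the ground-truth tuple $\bQ^{*} = \tilde{\bP}^M$. Every KL term in the objective collapses to $D(P_\rmA\|P_\rmA) = 0$ or $D(P_\rmN\|P_\rmN) = 0$, and feasibility is witnessed by $\calC^{*} \in \arg\min_{\calC \in \calS_\calB} \rmG_\calC(\tilde{\bP}^M)$, for which $\rmG_{\calC^{*}}(\tilde{\bP}^M) = \tilde\lambda_1(P_\rmN,P_\rmA) \le \lambda$. Non-negativity of KL divergence then pins $\Omega_\calB(\lambda) = 0$.

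For the second claim ($\lambda = 0$), the constraint $\rmG_\calC(\bQ) \le 0$ together with non-negativity forces $\rmG_\calC(\bQ) = 0$, which by \eqref{G_B} and strict positivity of KL on distinct arguments forces $Q_l = Q'$ for every $l \in \calM_\calC$ and $Q_t = Q''$ for every $t \in \calC$ for some $Q',Q'' \in \calP(\calX)$. Partitioning $[M]$ into $\calB \cap \calC$, $\calB \cap \calM_\calC$, $\calM_\calB \cap \calC$, $\calM_\calB \cap \calM_\calC$ with respective sizes $a, b, c, d$, the objective becomes $a D(Q''\|P_\rmA) + b D(Q'\|P_\rmA) + c D(Q''\|P_\rmN) + d D(Q'\|P_\rmN)$. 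Separately optimizing over $Q'$ and $Q''$ via the variational form \eqref{renyi:variational} yields the closed form $c D_{a/(a+c)}(P_\rmA\|P_\rmN) + b D_{d/(b+d)}(P_\rmN\|P_\rmA)$ (with the natural conventions when $b=0$ or $c=0$, where the corresponding optimizer collapses to $P_\rmN$ or $P_\rmA$).

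The remaining step, which I expect to be the main obstacle, is to argue that the outer minimum over $\calC \in \calS_\calB$ is attained at $|\calC|=|\calB|$. When this holds, $a+c = |\calB|$ and $b+d = M-|\calB|$, and setting $t = a$ recovers precisely the formula $\mathrm{LD}_\calB(P_\rmN,P_\rmA,M,|\calB|)$ from \eqref{LDB:renyi}, giving the matching upper bound. Showing that choices with $|\calC| \ne |\calB|$ cannot yield a strictly smaller value requires tracking how the two R\'enyi-divergence terms evolve as $|\calC|$ shifts: the coefficients $(b,c)$ and the R\'enyi orders $\tfrac{a}{a+c},\tfrac{d}{b+d}$ change simultaneously in opposite directions, so one has to combine the monotonicity of $D_\alpha$ in its order (the discussion following \eqref{renyi:variational}) with a convexity/majorization argument over admissible pairs $(|\calC|,|\calC \cap \calB|)$ to conclude that the minimum is always achieved in the balanced regime $|\calC|=|\calB|$.
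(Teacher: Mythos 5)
Your part (i) and the first claim of part (ii) are correct and essentially complete: monotonicity of the sublevel sets gives (i), and evaluating the objective at $\bQ=\tilde{\bP}^M$ (which is feasible for the minimizing $\calC$ once $\lambda\ge\tilde\lambda_1(P_\rmN,P_\rmA)$) together with non-negativity of KL divergence gives the vanishing claim. Your reduction of $\Omega_\calB(0,P_\rmN,P_\rmA)$ via the variational form \eqref{renyi:variational} to the two-parameter family
\begin{align}
F(s,a):=(s-a)\,D_{\frac{a}{s}}(P_\rmA\|P_\rmN)+(\mu-a)\,D_{\frac{M-\mu-s+a}{M-s}}(P_\rmN\|P_\rmA),\qquad s=|\calC|,\quad a=|\calB\cap\calC|,\quad \mu=|\calB|,
\end{align}
is also correct. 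For reference, the paper offers no proof of Lemma \ref{properties} at all; the closest argument in the paper, \eqref{LDB}--\eqref{defLD}, only ever ranges over $\calC$ of the same cardinality as $\calB$.

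The step you flag as the main obstacle is a genuine gap, and it cannot be closed: the outer minimum over $\calC\in\calS_\calB$ is \emph{not} in general attained at $|\calC|=|\calB|$, so the identity \eqref{OmegaB0} is false for $\Omega_\calB$ as defined in \eqref{OmegaB}, where $\calS_\calB=\{\calC\in\calS:\calC\neq\calB\}$ ranges over all cardinalities $1,\dots,T$. Concretely, take $T=|\calB|=2$ and $\calC=\{i\}$ a singleton contained in $\calB$. The constraint $\rmG_\calC(\bQ)=0$ then only forces the $M-1$ distributions indexed by $\calM_\calC$ to coincide, and your own closed form gives $F(1,1)=D_{\frac{M-2}{M-1}}(P_\rmN\|P_\rmA)$, while $\mathrm{LD}_\calB(P_\rmN,P_\rmA,M,2)=\min\bigl\{2D_{\frac{M-4}{M-2}}(P_\rmN\|P_\rmA),\;D_{\frac{1}{2}}(P_\rmA\|P_\rmN)+D_{\frac{M-3}{M-2}}(P_\rmN\|P_\rmA)\bigr\}$. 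As $M\to\infty$ the former tends to $D(P_\rmN\|P_\rmA)$ while both candidates in the latter tend to quantities strictly exceeding $D(P_\rmN\|P_\rmA)$ (namely $2D(P_\rmN\|P_\rmA)$ and $D_{1/2}(P_\rmA\|P_\rmN)+D(P_\rmN\|P_\rmA)$), so for all sufficiently large $M$ one has $\Omega_\calB(0,P_\rmN,P_\rmA)\le F(1,1)<\mathrm{LD}_\calB(P_\rmN,P_\rmA,M,2)$. A similar failure occurs when $|\calB|<T$ via supersets $\calC\supsetneq\calB$ with $|\calC|=|\calB|+1$, which cost only $D_{1/2}(P_\rmA\|P_\rmN)$. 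Intuitively, forcing only one outlier's empirical distribution to merge with the nominal ones is cheaper than the events counted by $\mathrm{LD}_\calB$, so no convexity or majorization argument over $(|\calC|,|\calC\cap\calB|)$ can rescue the claim. The statement becomes true only if the outer minimization is restricted to $\calC$ with $|\calC|=|\calB|$ --- precisely the restriction the paper silently makes in the false-reject bound of Theorem \ref{at_most_T}, where it switches to $\calS_\calB(\mu)$ and writes $\Omega_\calB(\lambda_2,\mu,P_\rmN,P_\rmA)$. Under that restriction your computation already yields \eqref{OmegaB0} with no further work, since then $a+c=|\calB|$, $b+d=M-|\calB|$ and $b=c=|\calB|-a$ automatically.
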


\subsection{Main Results and Discussions}

\begin{theorem}\label{at_most_T}
Under any pair of distributions $(P_\rmN,P_\rmA)\in\calP(\calX)^2$ that are fully supported on the finite alphabet $\calX$, for any pair of positive real numbers $(\lambda_1,\lambda_2)\in\bbR_+^2$ such that $\lambda_2<\lambda_1<\tilde\lambda_1(P_\rmN,P_\rmA)$, our sequential test satisfies the expected stopping time universality constraint when $n$ is sufficiently large and ensures that
\begin{itemize}
\item[1)] for each $\calB\in\calS$, the exponents of the misclassification and false reject probabilities satisfy that
\begin{align}
E_\calB(\Phi_\mathrm{seq}|P_\rmN,P_\rmA)&\ge\lambda_1,\label{mostT:exponent}\\*
E_{\mathrm{fr},\calB}(\Phi_\mathrm{seq}|P_\rmN,P_\rmA)&\ge\Omega_\calB(\lambda_2,P_\rmN,P_\rmA),
\end{align}
\item[2)] the exponent of the false alarm probability satisfies
\begin{align}
E_{\mathrm{fa}}(\Phi_\mathrm{seq}|P_\rmN,P_\rmA)\ge\lambda_1.
\end{align}
\end{itemize}
\end{theorem}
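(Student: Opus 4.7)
The plan is to adapt the method-of-types argument used to prove Theorem~\ref{at_most_one} (see Appendix~\ref{proof:at_most_one}) to the case of arbitrary outlier-set sizes, by replacing singleton indices $i\in[M]$ with subsets $\calB\in\calS$ throughout. The proof splits into three parts: (i) verifying that the expected stopping time under every hypothesis is bounded by $n$ for $n$ sufficiently large, (ii) lower bounding the misclassification and false alarm exponents by $\lambda_1$, and (iii) lower bounding the false reject exponent by $\Omega_\calB(\lambda_2,P_\rmN,P_\rmA)$.

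For part (i), under $\rmH_\calB$ the $M$ sequences come from the true distribution tuple $\tilde\bP^M$ with $P_i=P_\rmA$ for $i\in\calB$ and $P_j=P_\rmN$ for $j\in\calM_\calB$. By the weak law of large numbers and continuity of the scoring function in~\eqref{def:scrore:sb}, $\rmS_\calB(\bx^k)\to\rmG_\calB(\tilde\bP^M)=0$ while $\rmS_\calD(\bx^k)\to\rmG_\calD(\tilde\bP^M)\ge\tilde\lambda_1(P_\rmN,P_\rmA)>\lambda_1$ for each $\calD\in\calS_\calB$, so the first stopping clause in~\eqref{mostT_stoptime} fires with overwhelming probability once $k$ is moderately large. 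A standard method-of-types tail bound yields $\bbP_\calB\{\tau>k\}\le\mathrm{poly}(k)\exp(-kc)$ for some $c>0$ controlled by the gap $\tilde\lambda_1-\lambda_1$, so $\mathbb{E}_\calB[\tau]=n-1+\sum_{k\ge n-1}\bbP_\calB\{\tau>k\}\le n$ for $n$ sufficiently large. Under $\rmH_\rmr$ all $\rmS_\calC(\bx^k)\to 0$, the second stopping clause fires, and the same argument gives $\mathbb{E}_\rmr[\tau]\le n$.

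Part (ii) rests on the algebraic identity, obtained by applying $\sum_jD(Q_j\|P)=\sum_jD(Q_j\|\bar Q)+|\text{group}|\cdot D(\bar Q\|P)$ separately within $\calB'$ and $\calM_{\calB'}$, which gives $\sum_{j\in\calM_{\calB'}}D(Q_j\|P_\rmN)+\sum_{i\in\calB'}D(Q_i\|P_\rmA)\ge\rmG_{\calB'}(\bQ)$, with the analogous inequality $\sum_{j\in[M]}D(Q_j\|P_\rmN)\ge\rmG_{\calB'}(\bQ)$ valid under the null reference. For the misclassification event under $\rmH_\calB$, an incorrect decision $\rmH_\calC$ with $\calC\in\calS_\calB$ forces $\min_{\calD\in\calS_\calC}\rmS_\calD(\bx^\tau)>\lambda_1$; since $\calB\in\calS_\calC$, this forces $\rmS_\calB(\bx^\tau)>\lambda_1$. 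The method-of-types bound on the event $\{\rmS_\calB(\bx^k)>\lambda_1\}$ under $\rmH_\calB$, combined with the identity (taking $\calB'=\calB$), yields exponent at least $\lambda_1$. For the false alarm event under $\rmH_\rmr$, the decision $\rmH_\calB$ forces $\rmS_\calC(\bx^\tau)>\lambda_1$ for any fixed $\calC\in\calS_\calB$, and the second form of the identity yields an exponent of at least $\lambda_1$ after a union bound over $\calB\in\calS$.

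For part (iii), a false reject decision $\rmH_\rmr$ under $\rmH_\calB$ together with the stopping rule forces the second stopping clause to be the one that triggers at $\tau$, since otherwise the first clause would produce either $\rmH_\calB$ or a misclassification. Hence at time $\tau$ one has $\rmS_\calC(\bx^\tau)\le\lambda_2$ for \emph{every} $\calC\in\calS$, which implies the weaker event $\{\exists\calC\in\calS_\calB:\rmS_\calC(\bx^\tau)\le\lambda_2\}$. A union bound over $\calC\in\calS_\calB$ combined with the method-of-types estimate for each $\{\rmS_\calC(\bx^k)\le\lambda_2\}$ under $\rmH_\calB$ gives $\bbP_\calB\{\text{false reject}\}\le|\calS_\calB|\cdot\mathrm{poly}(k)\exp(-k\Omega_\calB(\lambda_2,P_\rmN,P_\rmA))$, from which the claimed false reject exponent follows.

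The main technical obstacle lies in handling the random stopping time $\tau$ inside the method-of-types bounds, since all three error events are evaluated at the data-dependent $\tau$ rather than a deterministic $n$. I would address this via a union bound over $k\ge n-1$, using the polynomial growth in $k$ of the number of length-$k$ types together with the strict inequality $\lambda_1<\tilde\lambda_1(P_\rmN,P_\rmA)$ to ensure that the exponential decay dominates the polynomial prefactor, exactly as in the corresponding argument in Appendix~\ref{proof:at_most_one}.
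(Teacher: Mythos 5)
Your proposal follows essentially the same route as the paper's proof in Appendix~\ref{proof:at_most_T}: the same decomposition of $\{\tau>k\}$ via the negation of the two stopping clauses, the same observation that a wrong decision $\rmH_\calC$ forces $\rmS_\calB(\bx^\tau)>\lambda_1$ (and that a reject forces all scores below $\lambda_2$), the same algebraic identity reducing $\sum_{j\in\calM_\calB}D(Q_j\|P_\rmN)+\sum_{i\in\calB}D(Q_i\|P_\rmA)$ to mixture terms plus $\rmG_\calB(\bQ)$, and the same union bound over $k\ge n-1$ with polynomial type counting. The only cosmetic difference is that you union-bound over $\calC\in\calS_\calB$ in the false-reject step where the paper takes a maximum; both yield the exponent $\Omega_\calB(\lambda_2,P_\rmN,P_\rmA)$.
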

The achievability proof is provided in Appendix \ref{proof:at_most_T}, which generalizes the proof of  Theorem \ref{at_most_one} for the case of at most one outlier. Both the misclassification and the false reject exponents are increasing in $\lambda_1$, and the false alarm exponents is non-increasing in $\lambda_2$. Note that when $\lambda_2>\tilde\lambda_1(P_\rmN,P_\rmA)$, both the misclassification and the false reject exponents equal $0$.

We next discuss the influence of $M$ and $T$ on the asymptotical performance. As $M$ increases, it follows from \eqref{OmegaB} that the size of set $\calM_\calB$ increases and thus, the achievable error exponent increases. This is consistent with our intuition because with more samples, the nominal distribution can be estimated more accurately and thus, it is easier to identify the outliers. On the other hand, as $T$ increases, it follows from \eqref{OmegaB} that the size of set $\calS_\calB$ increases and thus, the false reject exponent decreases. This is consistent with our intuition because with more outliers, the estimation of the nominal distribution is less accurate. Thus, it is more challenging to identify the outliers.
When $T=1$, the test in \eqref{mostT_decision} specializes to the test in \eqref{mostone_decision} for the case of at most one outlier and Theorem \ref{at_most_T} specializes to Theorem \ref{at_most_one}.

Comparing Theorems \ref{seq_T_time} and \ref{at_most_T}, we obtain the penalty of not knowing the number  of outliers on the performance of a sequential test. Recall that in Theorem \ref{seq_T_time}, it is known that $T$ outliers exist while in Theorem \ref{at_most_T}, the number of outliers can be any number from $0$ to $T$. For fair comparison, we should consider the error probability under each non-null hypothesis and thus compare the misclassification exponent $\mathrm{LD}_\calB(P_\rmN,P_\rmA,M,|\calB|)$ in Theorem \ref{seq_T_time} with the minimal value of the misclassification and the false reject exponents $\min\{\lambda_1,\Omega_\calB(\lambda_2,P_\rmN,P_\rmA)\}$ in Theorem \ref{at_most_T}. For any $\lambda_2>0$, it follows from the definition of $\Omega_\calB(\lambda,P_\rmN,P_\rmA)$ in \eqref{OmegaB} and the result in \eqref{OmegaB0} that
\begin{align}
\min\{\lambda_1,\Omega_\calB(\lambda_2,P_\rmN,P_\rmA)\}&\le\Omega_\calB(\lambda_2,P_\rmN,P_\rmA)<\mathrm{LD}_\calB(P_\rmN,P_\rmA,M,|\calB|).
\end{align}
Thus, there is a penalty in the exponents of sequential test when the number of outliers is unknown.

Finally, we demonstrate the advantage of the sequential test over the fixed-length test in terms of the achievable Bayesian error exponent. For the fixed-length test, it follows from Theorem \ref{zhou:T} that the achievable Bayesian exponent satisfies
\begin{align}\label{bayesian:fix_mostT}
E_\mathrm{Bayesian}(\Phi_{\rm ZWH}|P_\rmN,P_\rmA)=\max_{\lambda\in[0,\tilde\lambda_1(P_\rmN,P_\rmA)]}\min_{\calB\in\calS}\min\big\{\lambda,\rmL_\calB(\lambda,P_\rmN,P_\rmA)\big\}.
\end{align}
For our sequential test, it follows from Theorem \ref{at_most_T} that the achievable Bayesian exponent satisfies
\begin{align}\label{bayesian:sequential_mostT}
E_\mathrm{Bayesian}(\Phi_\mathrm{seq}|P_\rmN,P_\rmA)=\max_{\substack{(\lambda_1,\lambda_2)\in\bbR_+^2:\\\lambda_2<\lambda_1<\tilde\lambda_1(P_\rmN,P_\rmA)}}
\min_{\calB\in\calS}\min\big\{\lambda_1,\Omega_\calB(\lambda_2,P_\rmN,P_\rmA)\big\}.
\end{align}
For the fixed-length test, the threshold $\lambda$ tradeoffs the false reject exponent and the homogeneous misclassification and false alarm exponents. To maximize the term $\min\{\lambda,\rmL_\calB(\lambda,P_\rmN,P_\rmA)\}$, $\lambda$ should be chosen as a moderate value such that $\lambda=\rmL_\calB(\lambda,P_\rmN,P_\rmA)$.
In contrast, our sequential test resolves such a tradeoff since the misclassification and false reject exponents are non-increasing in $\lambda_2$ and the misclassification and false alarm exponent are increasing in $\lambda_1$. This implies our sequential test has a larger Bayesian exponent by selecting an arbitrary $\lambda_2$ close to $0$ such that $\Omega_\calB(\lambda_2,P_\rmN,P_\rmA)>\lambda$ and an arbitrary $\lambda_1$ such that $\lambda_1>\lambda$, which leads to $E_\mathrm{Bayesian}(\Phi_\mathrm{seq}|P_\rmN,P_\rmA)>E_\mathrm{Bayesian}(\Phi_{\rm ZWH}|P_\rmN,P_\rmA)$. For example, when $(P_\rmN,P_\rmA)=\mathrm{Bern}(0.3,0.1)$ and $M=5$, the optimizer for \eqref{bayesian:fix_mostT} is $\lambda=0.021$ and the optimizer for \eqref{bayesian:sequential_mostT} is $(\lambda_1,\lambda_2)=(0.09,0.001)$. Thus, $\Omega_\calB(\lambda_2,P_\rmN,P_\rmA)=0.0748$, and $\rmL_\calB(\lambda,P_\rmN,P_\rmA)=0.0235$. It follows that $E_\mathrm{Bayesian}(\Phi_\mathrm{seq}|P_\rmN,P_\rmA)=0.0748>E_\mathrm{Bayesian}(\Phi_{\rm ZWH}|P_\rmN,P_\rmA)=0.021$. Thus, our sequential test can have strictly larger Bayesian exponent than the fixed-length test and demonstrates performance improvement as desired.

\section{Conclusion}
We revisited sequential outlier hypothesis testing when both the nominal and anomalous distributions are unknown. In particular, we proposed tests under universality constraints and derived bounds on the achievable error exponents. For the case of exactly one outlier, our results are tight and strengthened a previous result of \cite[Theorem 3.2]{li2017universal} by having a matching converse result and analytically demonstrating the advantage of our sequential test. We also generalized our results to the case of exactly multiple outliers and derived tight results. For the case of at most one outlier, we strengthened \cite[Theorem 3.2]{li2017universal} by proposing a sequential test, analyzing the achievable error exponents and showing that our sequential test has a larger Bayesian error exponent than the fixed-length test in \cite[Eq. (5)]{zhou2022second}. Furthermore, our sequential test resolved the tradeoff among the error probabilities of the fixed-length test in \cite[Eq. (5)]{zhou2022second}. Finally, we generalized our results to the case of at most multiple outliers and theoretically showed that there is a penalty in the misclassification exponents when the number of outliers is unknown.

There are several avenues for future research. Firstly, we only derived achievability results for sequential tests when the number of outliers is unknown. It would be worthwhile to derive matching converse results. To do so, one might combine the generalized Neyman-Pearson criterion \cite{zhou2020second} with the converse proof techniques for sequential tests when the number of outliers is known. Secondly, we derived the asymptotical performance in terms of error exponents. However, any practical application would only provide sequences of a limited length. Thus, it would be valuable to characterize the non-asymptotic performance for optimal sequential tests by generalizing the ideas in \cite{li2020second} for binary hypothesis testing. Thirdly, our results were established under the assumption of a discrete finite alphabet. In practical applications, the observed sequence could take values in a continuous alphabet. It would be rewarding to generalize our results to account for sequences generated from unknown probability density functions, potentially using the kernel methods in~\cite{gretton2012kernel,zhu2024exponentially}.
Fourthly, our sequential tests have exponential complexity with respect to the number of outliers. For practical applications, it is of great interest to propose low complexity tests that achieve theoretical benchmarks derived in this paper. Novel ideas such as clustering~\cite{bu2019linear,xiong2011group} and two-phase tests~\cite{bai2022achievable,diao2023achievable,diao2023classification} might be helpful. Finally, we assumed that the nominal samples are generated from the same nominal distribution and the outliers are generated from the same anomalous distribution. It would be beneficial to study the impact of distribution uncertainty on the performance of sequential tests. To do so, the proof techniques~\cite{hsu2020binary,pan2022asymptotics,boroumand2022mismatched} could be helpful.


\appendix

\subsection{Proof of Theorem \ref{seq_errorprob}}\label{proof:ep}

\subsubsection{Achievability Proof of Theorem \ref{seq_errorprob}}
\label{proof:ep_ach}

\subsubsubsection{Error Probability Universality Constraint}
We first show that our sequential test in Sec. \ref{test_errorprob} satisfies the error probability universality constraint. Fix any $\beta\in(0,1)$ and any pair of distributions $(P_\rmN,P_\rmA)$. For each $i\in[M]$, the misclassification probability under hypothesis $\rmH_i$ satisfies
\begin{align}
\psi_i(\phi^\mathrm{EP}|P_\rmN,P_\rmA)&=\bbP_i\{\phi(\bX^{\tau^\mathrm{EP}})\neq\rmH_i\}\\
&\le\sum_{k=1}^{\infty}\bbP_i\{\phi(\bX^k)\neq\rmH_i\}\label{random}\\
&\le\sum_{k=1}^{\infty}\bbP_i\big\{\rmS_i(\bX^k)>g(\beta,k)\big\}\label{ep_test}\\
&=\sum_{k=1}^{\infty}\sum_{\bx^k\in\calX^{Mk}: \rmS_i(\bx^k)>g(\beta,k)}P_\rmA(x^k_i)\times \Big(\prod_{j\in\calM_i}P_\rmN(x^k_j)\Big)\\
&=\sum_{k=1}^{\infty}\sum_{\bQ\in\calP_k(\calX)^M: \rmG_i(\bQ)>g(\beta,k)}P_\rmA(\calT_{Q_i}^k)\times \Big(\prod_{j\in\calM_i}P_\rmN(\calT_{Q_j}^k)\Big)\label{ep:type}\\
&\le\sum_{k=1}^{\infty}\sum_{\substack{\bQ\in\calP_k(\calX)^M:\\ \rmG_i(\bQ)>g(\beta,k)}}\exp\Big\{-k\Big(D(Q_i||P_\rmA)+\sum_{t\in\calM_i}D(Q_t||P_\rmN)\Big)\Big\}\label{upperbound4}\\
&\le\sum_{k=1}^{\infty}\sum_{\substack{\bQ\in\calP_k(\calX)^M:\\ \rmG_i(\bQ)>g(\beta,k)}}\exp\Big\{-k\Big(D(Q_i||P_\rmA)+(M-1)D\Big(\frac{\sum_{t\in\calM_i}Q_t}{M-1}\Big\|P_\rmN\Big)+\rmG_i(\bQ)\Big)\Big\}\label{equality}\\
&\le\sum_{k=1}^{\infty}\sum_{\substack{\bQ\in\calP_k(\calX)^M:\\ \rmG_i(\bQ)>g(\beta,k)}}\exp\Big\{-k\Big(D(Q_i||P_\rmA)+(M-1)D\Big(\frac{\sum_{t\in\calM_i}Q_t}{M-1}\Big\|P_\rmN\Big)+g(\beta,k)\Big)\Big\}\label{Gge}\\
&\le\sum_{k=1}^{\infty}\sum_{\substack{\bQ\in\calP_k(\calX)^M:\\ \rmG_i(\bQ)>g(\beta,k)}}\exp\{-kg(\beta,k)\}\label{nonneg1}\\
&\le\sum_{k=1}^{\infty}(k+1)^{M|\calX|}\exp\{-kg(\beta,k)\}\label{knumber}\\
&=\sum_{k=1}^{\infty}(k+1)^{M|\calX|}\beta(|\calX|-1)(k+1)^{-(M+1)|\calX|}\label{gk}\\
&\le\beta(|\calX|-1)\sum_{k=1}^{\infty}(k+1)^{-|\calX|}\\
&\le\beta(|\calX|-1)\int_{0}^{\infty}(u+1)^{-|\calX|}\mathrm{d}u\label{continuous}\\
&=\beta(|\calX|-1)\frac{1}{-|\calX|+1}(u+1)^{-|\calX|+1}\Big|_{u=0}^{u=\infty}\\
&=\beta,\label{beta_one}
\end{align}
where \eqref{random} follows from that stopping time $\tau^\mathrm{EP}$ is a random variable ranging from $1$ to infinity, \eqref{ep_test} follows from our sequential test in \eqref{ep:test}, \eqref{ep:type} follows from the definitions of $\rmG_i(\cdot)$ in \eqref{G_i} and $\rmS_i(\cdot)$ in Sec. \ref{test_errorprob}, \eqref{upperbound4} follows from the upper bound on the probability of the type class~\cite[Theorem 11.1.4]{cover2012elements}, \eqref{equality} follows from the following equation~\cite[Eq. (86)]{zhou2022second}:
\begin{align}\label{equation1}
\sum_{t\in\calM_i}D(Q_t||P_\rmN)=(M-1)D\bigg(\frac{\sum_{t\in\calM_i}Q_t}{M-1}\bigg\|P_\rmN\bigg)+\rmG_i(\bQ),
\end{align}
\eqref{Gge} follows from the constraint that $\rmG_i(\bQ)>g(\beta,k)$, \eqref{nonneg1} follows from the fact that KL divergence is non-negative,
\eqref{knumber} follows from \cite[Theorem 11.1.1]{cover2012elements} which implies that $|\calP_k(\calX)|\le(k+1)^{|\calX|}$, \eqref{gk} follows from the definition of $g(\beta,k)$ in \eqref{gbeta} and \eqref{continuous} follows from similarly to~\cite[Appendix C: 1)]{Ihwang2022sequential}.

\subsubsubsection{Achievable Misclassification Exponent}\label{ach_err}
Fix any $i\in[M]$. We now bound the misclassification exponent $E_i^\mathrm{EP}(\Phi|P_\rmN,P_\rmA)=\liminf_{\beta\to 0}\frac{-\log\beta}{\mathbb{E}_i[\tau^\mathrm{EP}]}$. To do so, we first define an upper bound $\tau_i$ on $\tau^\mathrm{EP}$ and show that $-\frac{\tau_i}{\log\beta}$ converges in probability. Subsequently, we demonstrate how the convergence in probability result leads to the desired convergence in mean result $-\frac{\log\beta}{\mathbb{E}[\tau_i]}$. Finally, an upper bound on $E_i^\mathrm{EP}(\Phi|P_\rmN,P_\rmA)$ is obtained by using the fact that $\tau^\mathrm{EP}\leq \tau_i$.

Recall the definitions of $\rmS_j(\cdot)$ in \eqref{def:scrore:si} and define the stopping time
\begin{align}\label{def:taui}
\tau_i:=\inf\big\{k\in\bbN:\forall~j\in\calM_i,~\rmS_j(\bx^k)>g(\beta,k)\big\}.
\end{align}

\begin{lemma}
\label{converge:p}
The random variable $-\frac{\tau_i}{\log\beta}$ converges to $\frac{1}{\mathrm{GJS}(P_\rmN,P_\rmA,M-2)}$ in probability.
\end{lemma}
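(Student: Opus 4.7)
The plan is to prove convergence in probability of $-\tau_i/\log\beta$ to $1/\gamma$, where I write $\gamma := \mathrm{GJS}(P_\rmN,P_\rmA,M-2)$ for brevity, by establishing two one-sided deviation bounds: for every $\eta \in (0,1)$, I aim to show $\bbP_i\{\tau_i > (1+\eta)(-\log\beta)/\gamma\} \to 0$ and $\bbP_i\{\tau_i \le (1-\eta)(-\log\beta)/\gamma\} \to 0$ as $\beta \to 0$.

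For the upper deviation, I set $n_\beta := \lceil (1+\eta)(-\log\beta)/\gamma \rceil$. Directly from \eqref{gbeta}, $g(\beta, n_\beta) \to \gamma/(1+\eta) < \gamma$ as $\beta \to 0$. Under $\bbP_i$, the weak law of large numbers combined with the continuity of $\rmG_j$ yields $\rmS_j(\bx^{n_\beta}) \to \gamma$ in probability for every $j \in \calM_i$, by the same reasoning that produced \eqref{defG}. Consequently $\bbP_i\{\rmS_j(\bx^{n_\beta}) \le g(\beta, n_\beta)\} \to 0$, and the definition of $\tau_i$ in \eqref{def:taui} together with a union bound across $j \in \calM_i$ gives
\begin{align*}
\bbP_i\{\tau_i > n_\beta\} \le \sum_{j \in \calM_i} \bbP_i\{\rmS_j(\bx^{n_\beta}) \le g(\beta, n_\beta)\} \to 0.
\end{align*}

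For the lower deviation, I set $m_\beta := \lfloor (1-\eta)(-\log\beta)/\gamma \rfloor$ and fix an arbitrary $j_0 \in \calM_i$. A union bound across times yields
\begin{align*}
\bbP_i\{\tau_i \le m_\beta\} \le \sum_{k=1}^{m_\beta} \bbP_i\{\forall j \in \calM_i:~\rmS_j(\bx^k) > g(\beta, k)\} \le \sum_{k=1}^{m_\beta} \bbP_i\{\rmS_{j_0}(\bx^k) > g(\beta, k)\}.
\end{align*}
Since $g(\beta, k) \ge g(\beta, m_\beta) \to \gamma/(1-\eta) > \gamma$ for every $k \le m_\beta$ while $\rmS_{j_0}(\bx^k) \to \gamma$ in probability, each summand is a large-deviation event. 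Applying the method of types as in \eqref{upperbound4}, I obtain $\bbP_i\{\rmS_{j_0}(\bx^k) > g(\beta, k)\} \le (k+1)^{M|\calX|} \exp(-k F(g(\beta, k)))$, where
\begin{align*}
F(g) := \inf_{\bQ \in \calP(\calX)^M:\,\rmG_{j_0}(\bQ) \ge g} \Big\{ D(Q_i \| P_\rmA) + \sum_{l \in \calM_{j_0} \setminus \{i\}} D(Q_l \| P_\rmN) \Big\},
\end{align*}
with $Q_{j_0}$ freely set to $P_\rmN$ because it does not enter $\rmG_{j_0}$. The function $F$ is continuous and satisfies $F(g) > 0$ precisely when $g > \gamma$. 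Parametrizing $\alpha := k/(-\log\beta)$, the exponent $k F(g(\beta, k))$ equals $(-\log\beta)\,\alpha F(1/\alpha)(1+o(1))$, and the map $\alpha \mapsto \alpha F(1/\alpha)$ is continuous and strictly positive on the compact range $\alpha \in (0,(1-\eta)/\gamma]$ (extended by $+\infty$ where $1/\alpha$ exceeds $\max_\bQ \rmG_{j_0}(\bQ)$). Hence there exists $c(\eta) > 0$ such that $k F(g(\beta, k)) \ge c(\eta)(-\log\beta)$ uniformly over $k \in [1, m_\beta]$, so each summand is $O(\beta^{c(\eta)} \poly(-\log\beta))$ and the total sum vanishes as $\beta \to 0$.

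The principal obstacle is the lower-deviation step, where the union bound spans roughly $(-\log\beta)$ time indices; the key technical point is verifying that the Sanov rate function $F$ evaluated along the trajectory $g(\beta, k)$ delivers a uniform positive exponent strong enough to dominate the polynomial prefactor $(k+1)^{M|\calX|}$ throughout $k \le m_\beta$. The overall scheme parallels the analogous convergence-in-probability lemma for sequential binary classification in~\cite[Appendix C]{Ihwang2022sequential}, adapted to the outlier-hypothesis structure in which the free index $Q_{j_0}$ must be optimized out before invoking the large-deviation bound.
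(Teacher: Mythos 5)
Your proof is correct, but it follows a genuinely different route from the paper's. The paper (i) first shows that $\tau_i\ge -\log(\beta(|\calX|-1))/(M\log(M-1))$ deterministically, because every score $\rmS_j$ is bounded by $M\log(M-1)$ while $g(\beta,k)$ exceeds that bound for smaller $k$; (ii) applies the law of large numbers at the \emph{random} time $\tau_i$ to conclude $\rmS_j(\bX^{\tau_i})\to\mathrm{GJS}(P_\rmN,P_\rmA,M-2)$ for $j\in\calM_i$; and (iii) sandwiches $-\log\beta/\tau_i$ via the two crossing conditions $\min_{j\in\calM_i}\rmS_j(\bX^{\tau_i})>g(\beta,\tau_i)$ and $\min_{j\in\calM_i}\rmS_j(\bX^{\tau_i-1})\le g(\beta,\tau_i-1)$, both thresholds being asymptotic to $-\log\beta/\tau_i$. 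You instead establish two one-sided deviation bounds at \emph{deterministic} times: the late-stopping tail needs only the WLLN at the single time $n_\beta$, while the early-stopping tail needs a union bound over all $k\le m_\beta$ combined with a Sanov-type estimate whose exponent must be uniformly positive over that entire range. The paper's sandwich buys freedom from any uniform large-deviations estimate for early stopping (the lower bound on $\tau_i$ falls out of the crossing condition plus convergence of the scores), at the price of invoking the LLN at a random time, which it justifies only informally via step (i). Your version works exclusively with deterministic times and actually yields an exponential-in-$(-\log\beta)$ bound on the early-stopping probability, stronger than the lemma requires; its cost is the uniformity claim for $\alpha\mapsto\alpha F(1/\alpha)$, which you only sketch --- lower semicontinuity of $F$, monotonicity of $g(\beta,\cdot)$ in $k$, and compactness of the $\alpha$-range after discarding the indices $k$ with empty constraint set (that empty-constraint regime is exactly the paper's step (i)) are what make it rigorous.
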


\begin{proof}
We first show that $\tau_i\to\infty$ as $\beta\to 0$. Given integer $\tilk\in\bbN$, it follows that
\begin{align}
\bbP_i\{\tau_i\le \tilk\}&=\bbP_i\big\{\forall~j\in\calM_i,\rmS_j(\bX^{\tau_i})>g(\beta,\tau_i),~\tau_i\le \tilk\big\}+\bbP_i\big\{\exists~j\in\calM_i,\rmS_j(\bX^{\tau_i})\le g(\beta,\tau_i),~\tau_i\le \tilk\big\}\\
&=\bbP_i\big\{\forall~j\in\calM_i,\rmS_j(\bX^{\tau_i})>g(\beta,\tau_i),~\tau_i\le \tilk\big\}\label{deftau}\\
&\le\bbP_i\big\{\forall~j\in\calM_i,\tilk\rmS_j(\bX^{\tau_i})>-\log\big(\beta(|\calX|-1)\big)\big\}\label{gtaui}\\
&\le\mathbb{I}\big\{\tilk M\log(M-1)>-\log\big(\beta(|\calX|-1)\big)\big\},\label{logM}
\end{align}
where \eqref{deftau} follows from the definition of $\tau_i$ in \eqref{def:taui}, which implies that $\bbP_i\big\{\exists~j\in\calM_i,\rmS_j(\bx^{\tau_i})\le g(\beta,\tau_i),\tau_i\le \tilk\big\}=0$, \eqref{gtaui} follows from the definition of $g(\beta,k)$ in \eqref{gbeta} which implies $g(\beta,\tau_i)\ge g(\beta,\tilk)\ge\frac{-\log(\beta(|\calX|-1))}{\tilk}$ and \eqref{logM} follows from the definition of $\rmS_j(\cdot)$ in Sec. \ref{test_errorprob} and the fact that for all $j\in\calM_i$,
\begin{align}
\sum\limits_{t\in\calM_j}D\bigg(Q_t\bigg\|\frac{\sum_{l\in\calM_j}Q_l}{M-1}\bigg)&\le \sum\limits_{t\in\calM_j}\sum_{x\in\calX}Q_t(x)\log\bigg((M-1)\frac{Q_t(x)}{\sum_{l\in\calM_j}Q_l(x)}\bigg)\\
&\le\sum\limits_{t\in\calM_j}\sum_{x\in\calX}Q_t(x)\log(M-1)\\
&\le M\log(M-1).
\end{align}
Thus, it follows from \eqref{logM} that $\bbP_i\{\tau_i\le \tilk\}=0$ if $\tilk<\frac{-\log\big(\beta(|\calX|-1)\big)}{M\log(M-1)}$. As a result, when $\beta\to 0$, $\frac{-\log(\beta(|\calX|-1))}{M\log(M-1)}\to\infty$ and thus $\tau_i\to\infty$.

We next show that $\rmS_j(\bx^{\tau_i})$ converges to $\mathrm{GJS}(P_\rmN,P_\rmA,M-2)$ as $\tau_i\to\infty$. It follows from the weak law of large numbers that under hypothesis $\rmH_i$, $\hatT_{x_i^{\tau_i}}\to P_\rmA$ and for each $t\in\calM_i$, $\hatT_{x_t^{\tau_i}}\to P_\rmN$. Using the continuity of KL-divergence, when $\tau_i\to\infty$, for each $j\in\calM_i$,
\begin{align}
\rmS_j(\bx^{\tau_i})&\to D\bigg(P_\rmA\bigg\|\frac{P_\rmA+(M-2)P_\rmN}{M-1}\bigg)+(M-2)D\bigg(P_\rmN\bigg\|\frac{P_\rmA+(M-2)P_\rmN}{M-1}\bigg)\\
&=\mathrm{GJS}(P_\rmN,P_\rmA,M-2).\label{SGJS}
\end{align}

Finally, we show that $-\frac{\log \beta}{\tau_i}$ converges in probability to the desired value using the definition of $\tau_i$ and the above two claims. It follows from the definition of $g(\beta,k)$ in \eqref{gbeta} that both $g(\beta,k)$ and $g(\beta,k-1)$ tend to $-\frac{\log\beta}{k}$ when $k\to\infty$ and $\beta\to 0$. Recall the definition of the stopping time $\tau_i$ in \eqref{def:taui}, we have
\begin{align}
\min_{j\in\calM_i}\rmS_j(\bx^{\tau_i})&>g(\beta,\tau_i),\label{ge}\\
\min_{j\in\calM_i}\rmS_j(\bx^{\tau_i-1})&\le g(\beta,\tau_i-1)\label{le}.
\end{align}
Since $\tau_i\to\infty$ as $\beta\to 0$ (our first claim in the proof), it follows that
\begin{align}
g(\beta,\tau_i)&\to-\frac{\log\beta}{\tau_i}\label{taui},\\
g(\beta,\tau_i-1)&\to-\frac{\log\beta}{\tau_i}\label{taui-1}.
\end{align}
Combining \eqref{SGJS}, \eqref{le} and \eqref{taui-1}, we have
\begin{align}
\lim\limits_{\beta\to 0}\bbP_i\Big\{-\frac{\log\beta}{\tau_i}\ge\mathrm{GJS}(P_\rmN,P_\rmA,M-2)\Big\}\to 1.
\end{align}
Analogously, combining \eqref{SGJS}, \eqref{ge} and \eqref{taui}, we have
\begin{align}
\lim\limits_{\beta\to 0}\bbP_i\Big\{-\frac{\log\beta}{\tau_i}\le\mathrm{GJS}(P_\rmN,P_\rmA,M-2)\Big\}\to 1.
\end{align}
Consequently, for any $\varepsilon>0$,
\begin{align}\label{prob_conver}
\lim\limits_{\beta\to 0}\bbP_i\Big\{\Big|-\frac{\tau_i}{\log\beta}-\frac{1}{\mathrm{GJS}(P_\rmN,P_\rmA,M-2)}\Big|\ge\varepsilon\Big\}=0.
\end{align}
\end{proof}

In order to show that $-\frac{\log \beta}{\tau_i}$ converges in mean using Lemma \ref{converge:p}, we need the following lemma, which helps prove that the sequence of random variables $-\frac{\tau_i}{\log\beta}$ is uniformly integrable as $\beta\to 0$.
\begin{lemma}\label{tauj_ge}
When $k$ is sufficiently large, there exists a positive real number $c\in\bbR_+$ such that $\bbP_i\{\tau_i\ge k\}\le\frac{1}{\beta}\exp\{-ck\}$.
\end{lemma}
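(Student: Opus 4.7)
The plan is to reduce $\{\tau_i \ge k\}$ to a single-time-step event at time $k-1$, apply the method of types as in \eqref{ep:type}--\eqref{upperbound4}, and analyze the resulting rate function as a function of the threshold. From the definition \eqref{def:taui}, the event $\{\tau_i \ge k\}$ implies that at time $k-1$ at least one $j \in \calM_i$ satisfies $\rmS_j(\bX^{k-1}) \le g(\beta, k-1)$; a union bound therefore gives $\bbP_i\{\tau_i \ge k\} \le (M-1)\max_{j \in \calM_i} \bbP_i\{\rmS_j(\bX^{k-1}) \le g(\beta, k-1)\}$.

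Fix any $j \in \calM_i$. Partitioning by types and invoking the upper bound on type-class probabilities (exactly as in the steps leading to \eqref{upperbound4}) yields
\begin{align*}
\bbP_i\{\rmS_j(\bX^{k-1}) \le g(\beta, k-1)\} \le k^{M|\calX|} \exp\{-(k-1)\Lambda_j(g(\beta, k-1))\},
\end{align*}
where $\Lambda_j(\alpha) := \min_{\bQ \in \calP(\calX)^M:\,\rmG_j(\bQ) \le \alpha}\big[D(Q_i\|P_\rmA) + \sum_{t \in \calM_i} D(Q_t\|P_\rmN)\big]$. At $\alpha = 0$, the constraint $\rmG_j(\bQ) = 0$ forces a common value $Q_t = V$ for all $t \in \calM_j$; since $Q_j$ is unconstrained, setting $Q_j = P_\rmN$ and applying the variational form \eqref{renyi:variational} gives $\Lambda_j(0) = D_{(M-2)/(M-1)}(P_\rmN\|P_\rmA) > 0$. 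Continuity of the objective together with compactness of the feasible set then imply that, for any $\epsilon \in (0, \Lambda_j(0))$, there exists $\alpha_1 > 0$ such that $\Lambda_j(\alpha) \ge \Lambda_j(0) - \epsilon$ whenever $\alpha \le \alpha_1$.

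To finish, choose a constant $c \in \big(0,\, \min\{\Lambda_j(0) - \epsilon,\, \alpha_1\}\big)$ and split on the magnitude of $g(\beta, k-1)$. When $g(\beta, k-1) \le \alpha_1$ (which, by the explicit form in \eqref{gbeta}, holds once $k$ exceeds a constant multiple of $\log(1/\beta)$ up to lower-order terms), combining the type-counting bound with $\Lambda_j \ge \Lambda_j(0) - \epsilon$ and absorbing the polynomial $k^{M|\calX|}$ into the exponential slack $\Lambda_j(0) - \epsilon - c$ yields $\bbP_i\{\tau_i \ge k\} \le \exp\{-ck\} \le \beta^{-1}\exp\{-ck\}$ for all sufficiently large $k$. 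When instead $g(\beta, k-1) > \alpha_1$, one necessarily has $k \le \log(1/\beta)/\alpha_1 + O(\log k)$, so $c k \le \log(1/\beta)$ because $c \le \alpha_1$; hence the trivial bound $\bbP_i\{\tau_i \ge k\} \le 1$ already satisfies $1 \le \beta^{-1}\exp\{-ck\}$.

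The main obstacle is the coupled dependence of $g(\beta, k-1)$ on both $\beta$ and $k$: in the transitional regime $k \asymp \log(1/\beta)$ the threshold need not be small and the large-deviations exponent $\Lambda_j(g(\beta, k-1))$ may even vanish, so no pure $\exp\{-ck\}$ bound is available. The $\beta^{-1}$ prefactor in the lemma precisely compensates for this regime, and the two-case analysis above is the natural way to bridge it with the asymptotic regime $k \gg \log(1/\beta)$ in which type counting supplies genuine exponential decay; this uniform tail bound is exactly what is needed downstream to establish uniform integrability of $-\tau_i/\log\beta$.
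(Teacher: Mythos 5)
Your argument is correct in substance, but it takes a genuinely different route from the paper's. Both proofs start from the same reduction --- $\{\tau_i\ge k\}$ forces some $j\in\calM_i$ to satisfy $\rmS_j(\bX^{k-1})\le g(\beta,k-1)$, followed by a union bound and the method of types --- but they diverge in how they extract a $\beta$-uniform exponent from a threshold $g(\beta,k-1)$ that need not be small. You bound the probability directly by the rate function $\Lambda_j(g(\beta,k-1))$, compute $\Lambda_j(0)=D_{\frac{M-2}{M-1}}(P_\rmN\|P_\rmA)>0$, invoke right-continuity of $\Lambda_j$ at $0$, and split on whether $g(\beta,k-1)\le\alpha_1$: in the small-threshold regime the rate function supplies genuine exponential decay, and in the complementary regime $k\lesssim\log(1/\beta)/\alpha_1$ the prefactor $\beta^{-1}$ alone dominates $e^{ck}$. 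The paper instead keeps a single bound valid in both regimes: it uses $\rmS_i(\bX^{k-1})\ge 0$ to pass to the event $\{\rmS_j(\bX^{k-1})\le g(\beta,k-1)+\rmS_i(\bX^{k-1})\}$, splits on whether $\rmS_j(\bX^{k-1})\ge\alpha$ for a fixed $\alpha\in(0,\mathrm{GJS}(P_\rmN,P_\rmA,M-2))$, and on the first branch bounds $\bbP_i\{\rmS_i(\bX^{k-1})\ge\alpha-g(\beta,k-1)\}$, where the factor $e^{(k-1)g(\beta,k-1)}=\mathrm{poly}(k)/(\beta(|\calX|-1))$ is precisely the source of the $1/\beta$; the second branch is controlled by the strictly positive exponent $\calH_i(P_\rmN,P_\rmA,\alpha)$. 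Your version is more transparent about why the $\beta^{-1}$ prefactor is unavoidable (the transitional regime $k\asymp\log(1/\beta)$), at the price of a continuity-of-the-value-function argument that the paper's decomposition avoids. One imprecision to fix: in your second case, the assertion that ``$ck\le\log(1/\beta)$ because $c\le\alpha_1$'' ignores the $(M+1)|\calX|\log k$ term inside $g(\beta,k-1)$; the inequality does hold for all sufficiently large $k$, but only because you chose $c$ \emph{strictly} below $\alpha_1$, so that the slack $(\alpha_1-c)k$ absorbs the logarithmic correction --- with $c=\alpha_1$ the trivial bound would fall short by a polynomial factor in $k$, so this step deserves an explicit sentence.
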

\begin{proof}

For any positive real number $\alpha\in\bbR_+$, it follows from the definition of $\tau_i$ in \eqref{def:taui} that
\begin{align}
\bbP_i\{\tau_i\ge k\}
&\le\bbP_i\Big\{\exists~j\in\calM_i,~\rmS_j(\bX^{k-1})\le g(\beta,k-1)\Big\}\label{tauidef}\\
&\le\sum\limits_{j\in\calM_i}\bbP_i\Big\{\rmS_j(\bX^{k-1})\le g(\beta,k-1)\Big\}\\
&\le\sum\limits_{j\in\calM_i}\bbP_i\Big\{\rmS_j(\bX^{k-1})\le g(\beta,k-1)+\rmS_i(\bX^{k-1})\Big\}\label{nonneg}\\
&\le\sum\limits_{j\in\calM_i}\bbP_i\Big\{\rmS_j(\bX^{k-1})\le g(\beta,k-1)+\rmS_i(\bX^{k-1}),~\rmS_j(\bX^{k-1})\ge\alpha\Big\}+\sum\limits_{j\in\calM_i}\bbP_i\big\{\rmS_j(\bX^{k-1})<\alpha\big\},\label{twoalpha}
\end{align}
where \eqref{nonneg} follows from the definition of $\rmS_i(\cdot)$ in \eqref{def:scrore:si} that is a linear combination of KL divergence, which indicates $\rmS_i(\bx^k)\ge 0$.

When $0<\alpha<\mathrm{GJS}(P_\rmN,P_\rmA,M-2)$, the first term of \eqref{twoalpha} can be upper bounded as follows:
\begin{align}
\nn&\sum\limits_{j\in\calM_i}\bbP_i\Big\{\rmS_j(\bX^{k-1})\le g(\beta,k-1)+\rmS_i(\bX^{k-1}),~\rmS_j(\bX^{k-1})\ge\alpha\Big\}\\*
&\le(M-1)\bbP_i\Big\{\alpha\le g(\beta,k-1)+ \rmS_i(\bX^{k-1})\Big\}\label{Sj}\\
&\le(M-1)\bbP_i\Big\{\rmS_i(\bX^{k-1})\ge\alpha-g(\beta,k-1)\Big\}\\
&\le(M-1)k^{M|\calX|}\exp\big\{-(k-1)\big(\alpha-g(\beta,k-1)\big)\big\}\label{Si}\\
&\le\frac{M-1}{\beta(|\calX|-1)}k^{(2M+1)|\calX|}\exp\big\{-(k-1)\alpha\big\},\label{alpha}
\end{align}
where \eqref{Sj} follows since $\rmS_j(\bx^{k-1})\ge\alpha$ for sufficiently large $k$ and \eqref{Si} follows analogously to steps leading to \eqref{knumber} and \eqref{alpha} follows from the definition of $g(\beta,k-1)$ in \eqref{gbeta}.

Given any pair of distributions $(P_\rmN,P_\rmA)\in\calP(\calX)^2$ and any $\alpha\in\bbR_+$, define the following function:
\begin{align}
\calH_i(P_\rmN,P_\rmA,\alpha):=\min_{j\in\calM_i}\min_{\substack{\bQ\in\calP(\calX)^M: \\\rmG_j(\bQ)<\alpha}}D(Q_i||P_\rmA)+\sum_{t\in\calM_i}D(Q_t||P_\rmN),
\end{align}
which is strictly positive when $0<\alpha<\mathrm{GJS}(P_\rmN,P_\rmA,M-2)$.
The second term of \eqref{twoalpha} can be upper bounded as follows:
\begin{align}
\sum\limits_{j\in\calM_i}\bbP_i\big\{\rmS_j(\bX^{k-1})<\alpha\big\}&\le(M-1)\max_{j\in\calM_i}\bbP_i\big\{\rmS_j(\bX^{k-1})<\alpha\big\}\\
&\le(M-1)\max_{j\in\calM_i}\sum_{\substack{\bQ\in\calP_k(\calX)^M:\\ \rmG_j(\bQ)<\alpha}}\exp\Big\{-(k-1)\Big(D(Q_i||P_\rmA)+\sum_{t\in\calM_i}D(Q_t||P_\rmN)\Big)\Big\}\label{upperbound6}\\
&\le(M-1)k^{M|\calX|}\exp\big\{-(k-1)\calH_i(P_\rmN,P_\rmA,\alpha)\big\}\label{alpha2}\\
&\le\frac{M-1}{\beta}k^{M|\calX|}\exp\big\{-(k-1)\calH_i(P_\rmN,P_\rmA,\alpha)\big\},\label{fracbeta}
\end{align}
where \eqref{upperbound6} follows from similarly to \eqref{upperbound4} and \eqref{alpha2} follows from the fact that the size of the set of all types of length $n$ satisfies $|\calP_k(\calX)|\le(k+1)^{|\calX|}$~\cite[Theorem 11.1.1]{cover2012elements}.

Finally, combining \eqref{alpha} and \eqref{fracbeta}, we conclude that when $0<\alpha<\mathrm{GJS}(P_\rmN,P_\rmA,M-2)$ and $k$ is sufficiently large, there exists a positive constant $0<c<\min\{\alpha,\calH_i(P_\rmN,P_\rmA,\alpha)\}$ such that
\begin{align}
\bbP_i\{\tau_i\ge k\}\le\frac{1}{\beta}\exp\{-ck\}.
\end{align}
\end{proof}

\begin{lemma}\label{lemma:uni_int}
The sequence of random variables $\{-\frac{\tau_i}{\log\beta}\}_{\beta\in (0,1)}$ is uniformly integrable.
\begin{proof}
Fix $c$ in Lemma \ref{tauj_ge}. Given a sufficiently large positive real number $v$ such that $cv-1>0$, it follows that
\begin{align}
\mathbb{E}_i\bigg[-\frac{\tau_i}{\log\beta}\mathbb{I}\Big\{-\frac{\tau_i}{\log\beta}\ge v\Big\}\bigg]
&=-\frac{1}{\log\beta}\mathbb{E}_i\Big[\tau_i\mathbb{I}\big\{\tau_i\ge -v\log\beta\big\}\Big]\\
&=-\frac{1}{\log\beta}\sum_{k=1}^{\infty}\bbP_i\Big\{\tau_i\mathbb{I}\big\{\tau_i\ge -v\log\beta\big\}\ge k\Big\}\\
&\le-\frac{1}{\log\beta}(-v\log\beta)\bbP_i\big\{\tau_i\ge -v\log\beta\big\}-\frac{1}{\log\beta}\sum_{k=\lceil-v\log\beta\rceil}^{\infty}\bbP_i\big\{\tau_i\ge k\big\}\label{kgev}\\
&\le\frac{v}{\beta}\exp\{cv\log\beta\}-\frac{1}{\beta\log\beta}\sum_{k=\lceil-v\log\beta\rceil}^{\infty}\exp\{-ck\}\label{lemma}\\
&\le v\beta^{cv-1}-\frac{1}{\beta\log\beta}\cdot\frac{\beta^{cv}}{1-\exp\{-c\}}\\
&\le\beta^{cv-1}\Big(v-\frac{1}{\log\beta(1-\exp\{-c\})}\Big)
\end{align}
where \eqref{kgev} follows from considering two cases when $k<\lceil-v\log\beta\rceil$ and $k\ge\lceil-v\log\beta\rceil$ and \eqref{lemma} follows from Lemma \ref{tauj_ge} when $v$ is sufficiently large. Since $\beta\in(0,1)$, when $v$ is sufficiently large such that $cv-1>0$, for any $\varepsilon>0$, we have
\begin{align}
\mathbb{E}_i\bigg[-\frac{\tau_i}{\log\beta}\mathbb{I}\Big\{-\frac{\tau_i}{\log\beta}\ge v\Big\}\ge\varepsilon\bigg]\to0.
\end{align}
Furthermore, it follows from the definition of $\tau_i$ in \eqref{def:taui} that $\tau_i>0$ and thus,
\begin{align}
\mathbb{E}_i\bigg[-\frac{\tau_i}{\log\beta}\mathbb{I}\Big\{\frac{\tau_i}{\log\beta}\ge v\Big\}\ge\varepsilon\bigg]=-\frac{1}{\log\beta}\mathbb{E}_i\Big[\tau_i\mathbb{I}\big\{\tau_i\le v\log\beta\big\}\ge\varepsilon\Big]\to0.
\end{align}
Therefore, for any $\varepsilon>0$, when $v$ is sufficiently large such that $cv-1>0$,
\begin{align}\label{uni_inte}
\mathbb{E}_i\bigg[\Big|-\frac{\tau_i}{\log\beta}\Big|\mathbb{I}\Big\{\Big|-\frac{\tau_i}{\log\beta}\Big|\ge v\Big\}\ge\varepsilon\bigg]\to0,
\end{align}
which indicates the sequence of random variables $-\frac{\tau_i}{\log\beta}$ are uniformly integrable.
\end{proof}
\end{lemma}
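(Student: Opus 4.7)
The plan is to verify the standard criterion for uniform integrability, namely to show $\lim_{v\to\infty}\sup_{\beta}\mathbb{E}_i\!\left[|Y_\beta|\,\mathbb{I}\{|Y_\beta|\geq v\}\right]=0$, where $Y_\beta:=-\tau_i/\log\beta$. Two simplifications come for free: first, since $\tau_i\in\mathbb{N}$ and $\log\beta<0$ for $\beta\in(0,1)$, the variable $Y_\beta$ is strictly positive almost surely, so absolute values can be dropped; second, the event $\{Y_\beta\geq v\}$ is equivalent to $\{\tau_i\geq -v\log\beta\}$. Consequently, the task reduces to bounding
\begin{align*}
\mathbb{E}_i\!\left[Y_\beta\,\mathbb{I}\{Y_\beta\geq v\}\right]=-\frac{1}{\log\beta}\,\mathbb{E}_i\!\left[\tau_i\,\mathbb{I}\{\tau_i\geq -v\log\beta\}\right].
\end{align*}

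The key step is to exploit the sharp exponential tail bound $\mathbb{P}_i\{\tau_i\geq k\}\leq\beta^{-1}e^{-ck}$ provided by Lemma \ref{tauj_ge}. I would apply the tail-sum identity for non-negative integer-valued random variables, $\mathbb{E}[Z\mathbb{I}\{Z\geq N\}]=N\,\mathbb{P}\{Z\geq N\}+\sum_{k\geq N}\mathbb{P}\{Z\geq k+1\}$, with $N=\lceil -v\log\beta\rceil$. Plugging in the tail bound, the first summand is $\leq N\beta^{-1}e^{-cN}$ and the geometric remainder is $\leq \beta^{-1}e^{-cN}/(1-e^{-c})$. Since $e^{-cN}\leq \beta^{cv}$, dividing by $-\log\beta$ produces the estimate
\begin{align*}
\mathbb{E}_i\!\left[Y_\beta\,\mathbb{I}\{Y_\beta\geq v\}\right]\;\leq\;\beta^{cv-1}\!\left(v+\frac{1}{(-\log\beta)(1-e^{-c})}\right),
\end{align*}
which tends to $0$ as $\beta\to 0$ provided $cv>1$, i.e., once $v$ is large enough.

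The main obstacle I anticipate is the transition from a pointwise $\beta\to 0$ bound to a genuinely uniform one, since the factor $1/|\log\beta|$ blows up as $\beta\uparrow 1$ and the tail bound in Lemma \ref{tauj_ge} only holds for sufficiently large $k$. I plan to handle this by splitting the index set into a small neighborhood of $0$, where the exponential decay $\beta^{cv-1}$ dominates and crushes the logarithmic denominator, and a bounded-away-from-$0$ regime where $\tau_i$ has uniformly bounded moments (because the threshold $g(\beta,k)$ in \eqref{gbeta} is uniformly bounded above on any such set), making the truncated expectation trivially small once $v$ is large. Combining both regimes yields uniform integrability, which, together with the convergence in probability from Lemma \ref{converge:p}, will upgrade to convergence in mean of $-\tau_i/\log\beta$ and in turn feed into the exponent computation in Appendix \ref{ach_err}.
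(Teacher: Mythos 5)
Your core computation is the same as the paper's: both apply the tail-sum identity for the non-negative integer-valued $\tau_i$ truncated at $N=\lceil -v\log\beta\rceil$, invoke the tail bound $\bbP_i\{\tau_i\ge k\}\le \beta^{-1}e^{-ck}$ from Lemma~\ref{tauj_ge}, sum the geometric series, and arrive at a bound of the form $\beta^{cv-1}\big(v+\tfrac{1}{|\log\beta|(1-e^{-c})}\big)$ that vanishes as $\beta\to 0$ once $cv>1$. Your additional worry about uniformity over all of $\beta\in(0,1)$ is well taken --- indeed, since $\tau_i\ge 1$ always, $-\tau_i/\log\beta\ge -1/\log\beta\to\infty$ deterministically as $\beta\uparrow 1$, so the family cannot be uniformly integrable near $\beta=1$, and your proposed patch (bounded moments of $\tau_i$ on a set bounded away from $0$) does not cure the blow-up of the $1/|\log\beta|$ factor there. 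However, this does not affect the paper: the lemma is only invoked in Lemma~\ref{converge:mean} in the limit $\beta\to 0$, and what both you and the paper actually establish --- that the truncated expectation vanishes as $\beta\to 0$ for every sufficiently large $v$ --- is exactly what that application requires.
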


Using Lemmas \ref{converge:p} to \eqref{uni_inte}, we obtain the desired convergence in mean result.

\begin{lemma}
\label{converge:mean}
The random variable $-\frac{\tau_i}{\log\beta}$ converges to $\frac{1}{\mathrm{GJS}(P_\rmN,P_\rmA,M-2)}$ in mean, i.e.,
\begin{align}\label{converge_mean}
\lim\limits_{\beta\to 0}\mathbb{E}_i\bigg[\Big|-\frac{\tau_i}{\log\beta}-\frac{1}{\mathrm{GJS}(P_\rmN,P_\rmA,M-2)}\Big|\bigg]=0.
\end{align}
\end{lemma}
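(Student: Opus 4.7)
The plan is to deduce the $L^1$ convergence asserted in \eqref{converge_mean} directly from the two preceding lemmas via a standard measure-theoretic result, namely Vitali's convergence theorem. Recall that Vitali's theorem states that if a sequence of random variables converges in probability to a (necessarily finite, in our case constant) limit and the sequence is uniformly integrable, then the convergence also holds in $L^1$. So my plan is simply to verify the two hypotheses of Vitali's theorem for the family $\big\{-\tfrac{\tau_i}{\log\beta}\big\}_{\beta\in(0,1)}$ indexed by $\beta\downarrow 0$, and then invoke the theorem.

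First, I would cite Lemma \ref{converge:p}, which already establishes that $-\tfrac{\tau_i}{\log\beta}\to\tfrac{1}{\mathrm{GJS}(P_\rmN,P_\rmA,M-2)}$ in $\bbP_i$-probability as $\beta\to 0$. Since the limit is a deterministic positive constant (under the standing assumption that $(P_\rmN,P_\rmA)$ are fully supported and distinct, so that $\mathrm{GJS}(P_\rmN,P_\rmA,M-2)>0$), this supplies the first hypothesis of Vitali's theorem.

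Next, I would invoke Lemma \ref{lemma:uni_int}, which, via the tail estimate $\bbP_i\{\tau_i\ge k\}\le\tfrac{1}{\beta}\exp\{-ck\}$ from Lemma \ref{tauj_ge}, establishes that the family $\big\{-\tfrac{\tau_i}{\log\beta}\big\}_{\beta\in(0,1)}$ is uniformly integrable: for every $\varepsilon>0$ there exists $v$ large enough that $\mathbb{E}_i\big[\,\big|-\tfrac{\tau_i}{\log\beta}\big|\,\mathbb{I}\{|-\tfrac{\tau_i}{\log\beta}|\ge v\}\,\big]<\varepsilon$ uniformly in $\beta$. This supplies the second hypothesis of Vitali's theorem.

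With both hypotheses in hand, Vitali's convergence theorem immediately yields
\begin{align*}
\lim_{\beta\to 0}\mathbb{E}_i\bigg[\Big|-\tfrac{\tau_i}{\log\beta}-\tfrac{1}{\mathrm{GJS}(P_\rmN,P_\rmA,M-2)}\Big|\bigg]=0,
\end{align*}
which is exactly \eqref{converge_mean}. I do not anticipate any real obstacle here since the two non-trivial ingredients (convergence in probability and uniform integrability) have both been carefully proven in the preceding lemmas; the only remaining step is the clean invocation of Vitali's theorem. The subsequent use of this lemma in the achievability proof is then to combine it with the sandwich $\tau^{\mathrm{EP}}\le\tau_i$ (to be verified from the definitions \eqref{tau:ep} and \eqref{def:taui}) and the identity $-\log\beta/\mathbb{E}_i[\tau_i]\to \mathrm{GJS}(P_\rmN,P_\rmA,M-2)$ obtained by dividing through, to conclude the advertised lower bound on $E_i^\mathrm{EP}(\Phi^\mathrm{EP}|P_\rmN,P_\rmA)$.
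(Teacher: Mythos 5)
Your proposal is correct and follows essentially the same route as the paper: the paper's proof of Lemma \ref{converge:mean} is precisely the standard proof of the Vitali convergence theorem written out inline (a three-way decomposition of the expectation handled via Lemma \ref{converge:p} for convergence in probability and Lemma \ref{lemma:uni_int} for uniform integrability), so citing Vitali's theorem directly is just a more compact packaging of the identical argument with the identical inputs.
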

\begin{proof}
Fix any $(\varepsilon_1,\varepsilon_2)\in\bbR^2$ such that $\varepsilon_1<\varepsilon_2$. It follows that
\begin{align}
\nn&\mathbb{E}_i\bigg[\Big|-\frac{\tau_i}{\log\beta}-\frac{1}{\mathrm{GJS}(P_\rmN,P_\rmA,M-2)}\Big|\bigg]\\*
\nn&=\mathbb{E}_i\bigg[\Big|-\frac{\tau_i}{\log\beta}-\frac{1}{\mathrm{GJS}(P_\rmN,P_\rmA,M-2)}\Big|
\mathbb{I}\Big\{\Big|-\frac{\tau_i}{\log\beta}-\frac{1}{\mathrm{GJS}(P_\rmN,P_\rmA,M-2)}\Big|>\varepsilon_2\Big\}\bigg]\\*
\nn&\quad+\mathbb{E}_i\bigg[\Big|-\frac{\tau_i}{\log\beta}-\frac{1}{\mathrm{GJS}(P_\rmN,P_\rmA,M-2)}\Big|
\mathbb{I}\Big\{\Big|-\frac{\tau_i}{\log\beta}-\frac{1}{\mathrm{GJS}(P_\rmN,P_\rmA,M-2)}\Big|<\varepsilon_1\Big\}\bigg]\\*
&\quad+\mathbb{E}_i\bigg[\Big|-\frac{\tau_i}{\log\beta}-\frac{1}{\mathrm{GJS}(P_\rmN,P_\rmA,M-2)}\Big|
\mathbb{I}\Big\{\varepsilon_1\le\Big|-\frac{\tau_i}{\log\beta}-\frac{1}{\mathrm{GJS}(P_\rmN,P_\rmA,M-2)}\Big|\le\varepsilon_2\Big\}\bigg]\label{inmean:three}
\end{align}
As $\beta\to 0$, the first term of \eqref{inmean:three} satisfies
\begin{align}
\nn&\mathbb{E}_i\bigg[\Big|-\frac{\tau_i}{\log\beta}-\frac{1}{\mathrm{GJS}(P_\rmN,P_\rmA,M-2)}\Big|
\mathbb{I}\Big\{\Big|-\frac{\tau_i}{\log\beta}-\frac{1}{\mathrm{GJS}(P_\rmN,P_\rmA,M-2)}\Big|>\varepsilon_2\Big\}\bigg]\\*
&\le\mathbb{E}_i\bigg[\Big(\Big|-\frac{\tau_i}{\log\beta}\Big|+\frac{1}{\mathrm{GJS}(P_\rmN,P_\rmA,M-2)}\Big)
\mathbb{I}\Big\{\Big|-\frac{\tau_i}{\log\beta}-\frac{1}{\mathrm{GJS}(P_\rmN,P_\rmA,M-2)}\Big|>\varepsilon_2\Big\}\bigg]\\*
\nn&\le\mathbb{E}_i\bigg[\Big|-\frac{\tau_i}{\log\beta}\Big|\mathbb{I}\Big\{\Big|-\frac{\tau_i}{\log\beta}-\frac{1}{\mathrm{GJS}(P_\rmN,P_\rmA,M-2)}\Big|>\varepsilon_2\Big\}\bigg]\\*
&\quad+\frac{1}{\mathrm{GJS}(P_\rmN,P_\rmA,M-2)}\bbP_i\Big\{\Big|-\frac{\tau_i}{\log\beta}-\frac{1}{\mathrm{GJS}(P_\rmN,P_\rmA,M-2)}\Big|>\varepsilon_2\Big\}\\
&\to 0,\label{inmean:first}
\end{align}
where \eqref{inmean:first} follows since i) $-\frac{\tau_i}{\log\beta}$ converges in probability in Lemma \ref{converge:p}, and ii) $-\frac{\tau_i}{\log\beta}$ is uniformly integrable in Lemma \ref{lemma:uni_int} by letting $\varepsilon_2$ sufficiently large such that $c\varepsilon_2-1>0$.

The second term of \eqref{inmean:three} satisfies
\begin{align}
\nn&\mathbb{E}_i\bigg[\Big|-\frac{\tau_i}{\log\beta}-\frac{1}{\mathrm{GJS}(P_\rmN,P_\rmA,M-2)}\Big|
\mathbb{I}\Big\{\Big|-\frac{\tau_i}{\log\beta}-\frac{1}{\mathrm{GJS}(P_\rmN,P_\rmA,M-2)}\Big|<\varepsilon_1\Big\}\bigg]\\*
&\le\varepsilon_1\bbP_i\Big\{\Big|-\frac{\tau_i}{\log\beta}-\frac{1}{\mathrm{GJS}(P_\rmN,P_\rmA,M-2)}\Big|<\varepsilon_1\Big\}\\
&\le\varepsilon_1.\label{inmean:second}
\end{align}

Similarly, as $\beta\to 0$, the third term of \eqref{inmean:three} satisfies
\begin{align}
\nn&\mathbb{E}_i\bigg[\Big|-\frac{\tau_i}{\log\beta}-\frac{1}{\mathrm{GJS}(P_\rmN,P_\rmA,M-2)}\Big|
\mathbb{I}\Big\{\varepsilon_1\le\Big|-\frac{\tau_i}{\log\beta}-\frac{1}{\mathrm{GJS}(P_\rmN,P_\rmA,M-2)}\Big|\le\varepsilon_2\Big\}\bigg]\\*
&\le\varepsilon_2\bbP_i\Big\{\varepsilon_1\le\Big|-\frac{\tau_i}{\log\beta}-\frac{1}{\mathrm{GJS}(P_\rmN,P_\rmA,M-2)}\Big|\le\varepsilon_2\Big\}\\
&\le\varepsilon_2\bbP_i\Big\{\Big|-\frac{\tau_i}{\log\beta}-\frac{1}{\mathrm{GJS}(P_\rmN,P_\rmA,M-2)}\Big|\ge\varepsilon_1\Big\}.\label{inmean:third}
\end{align}
By the convergence in probability of $-\frac{\tau_i}{\log\beta}$ and letting $\varepsilon_1\to 0$, both \eqref{inmean:second} and \eqref{inmean:third} tend to zero.

The proof of Lemma \ref{converge:mean} is completed by combining \eqref{inmean:three}, \eqref{inmean:first}, \eqref{inmean:second} and \eqref{inmean:third}.
\end{proof}

Using Lemma \ref{converge:mean}, it follows that for each $i\in[M]$, under hypothesis $\rmH_i$, the misclassification exponent of our sequential test satisfies
\begin{align}
E_i^\mathrm{EP}(\Phi|P_\rmN,P_\rmA)&=\liminf_{\beta\to 0}\frac{-\log\beta}{\mathbb{E}_i[\tau^\mathrm{EP}]}\\*
&\ge\liminf_{\beta\to 0}\frac{-\log\beta}{\mathbb{E}_i[\tau_i]}\label{tauep_le}\\
&=\mathrm{GJS}(P_\rmN,P_\rmA,M-2),\label{ei:ep}
\end{align}
where \eqref{tauep_le} follows from the definitions of $\tau^\mathrm{EP}$ in \eqref{tau:ep} and $\tau_i$ in \eqref{def:taui} which imply that $\tau^\mathrm{EP}\le\tau_i$ and \eqref{ei:ep} follows from the result in \eqref{converge_mean}. The achievability proof of Theorem \ref{seq_errorprob} is now completed.

\subsubsection{Converse Proof of Theorem \ref{seq_errorprob}}\label{proof:ep_con}

Given $(p,q)\in(0,1)^2$, define the binary KL-divergence as follows:
\begin{align}\label{binaryKL}
d(p,q):=p\log\frac{p}{q}+(1-p)\log\frac{1-p}{1-q}.
\end{align}
The first-order derivative of $d(p,q)$ on $q$ is:
\begin{align}
\frac{\partial d(p,q)}{\partial q}=\frac{q-p}{q(1-q)}.
\label{fd:bkl}
\end{align}
Thus, $d(p,q)$ is increasing in $q$ when $q>p$ and decreasing in $q$ when $p>q$.

Fix any $i\in[M]$ and any $j\in\calM_i$.
Recall the definition of $\bbP_i(\cdot):=\Pr\{\cdot|\rmH_i\}$ where $X_i^\tau$ is generated i.i.d. from the anomalous distribution $P_\rmA$ and for each $j\in\calM_i$, $X_j^\tau$ is generated i.i.d. from the nominal distribution $P_\rmN$. Similarly we define $\tilde\bbP_j(\cdot):=\Pr\{\cdot|\rmH_j\}$ when the pair of nominal and anomalous distributions is $(\tilP_\rmN,\tilP_\rmA)$. Define the event $\calW:=\{\phi(\bX^\tau)=\rmH_i\}$.
For any two pairs of distributions $(P_\rmN,P_\rmA)\in\calP(\calX)^2$ and $(\tilP_\rmA,\tilP_\rmN)\in\calP(\calX)^2$, for any sequential test $\Phi=(\tau,\phi)$, we have
\begin{align}
d\big(\bbP_i(\calW),\tilde\bbP_j(\calW)\big)&\le D(\bbP_i||\tilde\bbP_j)|_{\calF_\tau}\label{DP}\\*
&=\mathbb{E}_i\bigg[\sum_{k\in[\tau]}\log\frac{P_i(\bX_k)}{\tilP_j(\bX_k)}\bigg]\label{KL}\\
&=\mathbb{E}_i\bigg[\sum_{k\in[\tau]}\sum_{\substack{t\in[M]:\\t\neq i, t\neq j}}\log\frac{P_\rmN(X_{t,k})}{\tilP_\rmN(X_{t,k})}+\sum_{k\in[\tau]}\log\frac{P_\rmA(X_{i,k})}{\tilP_\rmN(X_{i,k})}+
\sum_{k\in[\tau]}\log\frac{P_\rmN(X_{j,k})}{\tilP_\rmA(X_{j,k})}\bigg]\\
&\le(M-2)\mathbb{E}_i[\tau]D(P_\rmN||\tilP_\rmN)+\mathbb{E}_i[\tau]D(P_\rmA||\tilP_\rmN)+\mathbb{E}_i[\tau]D(P_\rmN||\tilP_\rmA)\label{Doob},
\end{align}
where $\bbP_i|_{\calF_\tau}=\sum_{k\in[\tau]}P_i(\bX_k)$ and $\tilde\bbP_j|_{\calF_\tau}=\sum_{k\in[\tau]}P_j(\bX_k)$, \eqref{DP} follows from the data processing inequality of divergence \cite[Theorem 2.8.1]{cover2012elements}, \eqref{KL} follows from the definition of KL divergence in \eqref{def:KL}~\cite[P. 172]{polyanskiy2014lecture}, and \eqref{Doob} follows from Doob’s Optional Stopping Theorem \cite{klenke2014optional} since the sequence $\sum_{k\in[n]}\log\frac{P_i(\bX_k)}{\tilP_j(\bX_k)}$ is a martingale for any integer $n\in\bbN$.

For any sequential test $\Phi$ satisfying the error probability universality constraint in \eqref{constraint1:ep} and any pair of distributions $(P_\rmN,P_\rmA)\in\calP(\calX)^2$, it follows that $\psi_i(\Phi|P_\rmN,P_\rmA)\to 0$ as $\beta\to 0$. Thus,
\begin{align}
\bbP_i(\calW)&=\bbP_i(\phi(\bX^\tau)=\rmH_i)=1-\psi_i(\Phi|P_\rmN,P_\rmA)\to 1,\\*
\tilde\bbP_j(\calW)&=\tilde\bbP_j(\phi(\bX^\tau)=\rmH_i)\le\psi_j(\Phi|\tilP_\rmA,\tilP_\rmN)\to 0.
\end{align}
As a result, $\bbP_i(\calW)>\tilde\bbP_j(\calW)$ and $1-\psi_i(\Phi|P_\rmN,P_\rmA)>\psi_j(\Phi|\tilP_\rmA,\tilP_\rmN)$. Furthermore, as $\beta\to 0$, it follows that
\begin{align}
d\big(\bbP_i(\calW),\tilde\bbP_j(\calW)\big)&\ge d\big(1-\psi_i(\Phi|P_\rmN,P_\rmA),\psi_j(\Phi|\tilP_\rmA,\tilP_\rmN)\big)\label{d_func}\\*
&=-\log\psi_j(\Phi|\tilP_\rmA,\tilP_\rmN)\label{huslemma2}\\
&\ge-\log\beta.\label{beta}
\end{align}
where \eqref{d_func} follows since $d(p,q)$ is decreasing in $q$ when $p>q$, \eqref{huslemma2} follows since i) $\lim\limits_{(p,q)\to(0,0)}\frac{d(1-p,q)}{-\log q}=1$~\cite[Lemma 2]{Ihwang2022sequential} and ii) when $\beta\to 0$, $\psi_i(\Phi|P_\rmN,P_\rmA)\to 0, \psi_j(\Phi|\tilP_\rmA,\tilP_\rmN)\to 0$, and \eqref{beta} follows from the definition of error probability universality constraint in \eqref{constraint1:ep}.

Combining \eqref{Doob} and \eqref{beta}, we have
\begin{align}
-\log\beta\le&(M-2)\mathbb{E}_i[\tau]D(P_\rmN||\tilP_\rmN)+\mathbb{E}_i[\tau]D(P_\rmA||\tilP_\rmN)+\mathbb{E}_i[\tau]D(P_\rmN||\tilP_\rmA).
\end{align}
Therefore, for any sequential test $\Phi=(\tau,\phi)$ satisfying the error probability universality constraint and any two pairs of distributions $(P_\rmN,P_\rmA)\in\calP(\calX)^2$ and $(\tilP_\rmA,\tilP_\rmN)\in\calP(\calX)^2$, the type-$i$ error exponent satisfies
\begin{align}
E_i^\mathrm{EP}(\Phi|P_\rmN,P_\rmA)\le(M-2)D(P_\rmN||\tilP_\rmN)+D(P_\rmA||\tilP_\rmN)+D(P_\rmN||\tilP_\rmA).\label{profcon:ei}
\end{align}
Since \eqref{profcon:ei} is true for all $(\tilP_\rmA,\tilP_\rmN)\in\calP(\calX)^2$, we can minimize the right hand side of \eqref{profcon:ei} with respect to $(\tilP_\rmA,\tilP_\rmN)$ by letting $\tilP_\rmA=P_\rmN$ and optimizing $\tilP_\rmN$. Thus,
\begin{align}
E_i^\mathrm{EP}(\Phi|P_\rmN,P_\rmA)&\le\min_{Q\in\calP(\calX)}(M-2)D(P_\rmN||Q)+D(P_\rmA||Q)\\
&=\mathrm{GJS}(P_\rmN,P_\rmA,M-2),
\end{align}
where the last step follows from the variational form of generalized Jensen-Shannon Divergence in \eqref{GJS:variational}.

\subsection{Proof of Theorem \ref{seq_stoptime}}\label{proof:est}

\subsubsection{Achievability Proof of Theorem \ref{seq_stoptime}}\label{proof:est_ach}

\subsubsubsection{Expected Stopping Time Universality Constraint}\label{proof:constraint}
We first prove our sequential test in Sec. \ref{test_stoptime} satisfies the expected stopping time universality constraint. For each $i\in[M]$, the average stopping time can be expressed as the following form:
\begin{align}\label{expected}
\mathbb{E}_i[\tau]=\sum\limits_{k=1}^{\infty}\bbP_i\{\tau> k\}=n-1+\sum\limits_{k=n-1}^{\infty}\bbP_i\{\tau> k\}.
\end{align}
Given $k\in\bbN$ such that $k\ge n-1$, we upper bound the term $\sum\limits_{k=n-1}^{\infty}\bbP_i\{\tau^\mathrm{EST}> k\}$ as follows,
\begin{align}
\bbP_i\{\tau^\mathrm{EST}>k\}
&\le\bbP_i\big\{\rmS_i(\bX^k)\ge f(k)\big\}\label{Sige}\\
&\le\sum_{\bQ\in\calP_k(\calX)^M: \rmG_i(\bQ)\ge f(k)}\exp\Big\{-kD(Q_i||P_\rmA)-k\sum_{j\in\calM_i}D(Q_j||P_\rmN)\Big\}\label{upperbound1}\\
&\le\sum_{\bQ\in\calP_k(\calX)^M: \rmG_i(\bQ)\ge f(k)}\exp\Big\{-k\Big(D(Q_i||P_\rmA)+(M-1)D\Big(\frac{\sum_{t\in\calM_i}Q_t}{M-1}\Big\|P_\rmN\Big)+f(k)\Big)\Big\}\label{sum_Qj}\\
&\le\sum_{\bQ\in\calP_k(\calX)^M: \rmG_i(\bQ)\ge f(k)}\exp\{-kf(k)\}\label{positive}\\
&\le (k+1)^{M|\calX|}(k+1)^{-(M+1)|\calX|}\label{typeclass}\\
&=(k+1)^{-|\calX|},
\end{align}
where \eqref{Sige} follows from the definition of $\tau^\mathrm{EST}$ in \eqref{tau:est}, 
\eqref{upperbound1} follows from the similar manner as \eqref{upperbound4}, \eqref{sum_Qj} follows from the equation in \eqref{equation1} and the constraint $\rmG_i(\bQ)\ge f(k)$, \eqref{positive} follows since KL divergence is nonnegative, and \eqref{typeclass} follows from \cite[Theorem 11.1.1]{cover2012elements} which implies that $|\calP_k(\calX)|\le(k+1)^{|\calX|}$.

Thus, for $n\ge 2$, we have
\begin{align}
\sum\limits_{k=n-1}^{\infty}\bbP_i\{\tau^\mathrm{EST}> k\}\le\sum\limits_{k=n-1}^{\infty}(k+1)^{-|\calX|}
\le&\int_{n-2}^{\infty}(u+1)^{-|\calX|}\mathrm{d}u\\*
=&\frac{1}{-|\calX|+1}(u+1)^{-|\calX|+1}\Big|_{u=n-2}^{u=\infty}\\*
=&\frac{(n-1)^{-(|\calX|-1)}}{|\calX|-1}\le 1.\label{sum:k+1}
\end{align}

Combining \eqref{sum:k+1} and \eqref{expected}, we conclude that $\mathbb{E}_i[\tau^\mathrm{EST}]\le n$ for each $i\in[M]$.

\subsubsubsection{Achievable Misclassification Exponent}\label{proof:est_exp}
Recall the definition of $f(k)=\tfrac{(M+1)|\calX|\log (k+1)}{k}$.
Fix any $i\in[M]$. Given any pair of distributions $(P_\rmN,P_\rmA)\in\calP(\calX)^2$ and any $n\in\bbN$, define the error exponent function
\begin{align}\label{deltai}
\Delta_i(n,P_\rmN,P_\rmA):=\min_{j\in\calM_i}\min_{\substack{\bQ\in\calP_n(\calX)^M: \\\rmG_j(\bQ)\le f(n)}}D(Q_i||P_\rmA)+\sum_{t\in\calM_i}D(Q_t||P_\rmN).
\end{align}

We upper bound the misclassification probability as follows:
\begin{align}
\psi_i(\phi^\mathrm{EST}|P_\rmN,P_\rmA)
&=\bbP_i\{\phi(\bX^{\tau^\mathrm{EST}})\neq \rmH_i\}\label{est:tau}\\
&=\bbP_i\Big\{\exists~j\in\calM_i~\mathrm{s.t.}~\rmS_j(\bX^{\tau^\mathrm{EST}})\le f(\tau)\Big\}\\
&\le\bbP_i\Big\{\exists~j\in\calM_i~\mathrm{and}~k\ge n-1~\mathrm{s.t.}~\rmS_j(\bX^k)\le f(k)\Big\}\\
&\le \sum\limits_{j\in\calM_i}\sum_{k=n-1}^{\infty}\bbP_i\Big\{\rmS_j(\bX^k)\le f(k)\Big\}\\
&\le (M-1)\max_{j\in\calM_i}\sum_{k=n-1}^{\infty}\bbP_i\{\rmS_j(\bX^k)\le f(k)\}\\
&\le (M-1)\max_{j\in\calM_i}\sum_{k=n-1}^{\infty}\sum_{\bQ\in\calP_k(\calX)^M: \rmG_j(\bQ)\le f(k)}\exp\Big\{-k\Big(D(Q_i||P_\rmA)+\sum_{t\in\calM_i}D(Q_t||P_\rmN)\Big)\Big\}\label{type_beta}\\
&\le (M-1)\sum_{k=n-1}^{\infty}\exp\Big\{-k\Big(\Delta_i(k,P_\rmN,P_\rmA)-\frac{M|\calX|\log(k+1)}{k}\Big)\Big\}\label{number1}\\
&\le (M-1)\sum_{k=n-1}^{\infty}\exp\Big\{-k\Big(\Delta_i(n-1,P_\rmN,P_\rmA)-\frac{M|\calX|\log n}{n-1}\Big)\Big\}\label{sum_k},
\end{align}
where \eqref{type_beta} follows from the same argument as \eqref{upperbound4}, \eqref{number1} follows from the same argument as \eqref{knumber}, and \eqref{sum_k} follows from the function $\Delta_i(n,P_\rmN,P_\rmA)$ is increasing in $n$ and $\tfrac{M|\calX|\log n}{n-1}$ is decreasing in $n$.

As $n\to\infty,f(n-1)\rightarrow 0$ and $\tfrac{M|\calX|\log n}{n-1}$ tends to $0$. Thus,
\begin{align}
\lim\limits_{n\to\infty}\Delta_i(n-1,P_\rmN,P_\rmA)
&=\min_{(Q_1,Q_2)\in\calP(\calX)^2}D(Q_1||P_\rmA)+(M-2)D(Q_1||P_\rmN)+D(Q_2||P_\rmN)\label{nto0}\\
&=\min_{Q\in\calP(\calX)}D(Q||P_\rmA)+(M-2)D(Q||P_\rmN)\label{Q2}\\
&=D_{\frac{M-2}{M-1}}(P_\rmN||P_\rmA)\label{asym:exponent},
\end{align}
where \eqref{nto0} follows from i) the constraint in \eqref{deltai} when $f(n-1)\rightarrow 0$, i.e, $\rmG_j(\bQ)\le f(n)=0$, which leads to the fact that $\rmG_j(\bQ)=0$ since KL divergence is non-negative, and ii) the definition in \eqref{G_i} that $\rmG_j(\bQ)=0$ which implies $Q_j=Q_1$ for all $j\in\calM_i$ with an arbitrary $Q_1\in\calP(\calX)$, \eqref{Q2} follows by letting $Q_2=P_\rmN$ and \eqref{asym:exponent} follows from the variational form of the R\'{e}nyi Divergence in \eqref{renyi:variational}.

Using \eqref{asym:exponent}, as $n\to\infty$, \eqref{sum_k} can be upper bounded as follows:
\begin{align}
(M-1)\sum_{k=n-1}^{\infty}\exp\Big\{-kD_{\frac{M-2}{M-1}}(P_\rmN||P_\rmA)\Big\}=
(M-1)\frac{\exp\Big\{-(n-1)D_{\frac{M-2}{M-1}}(P_\rmN||P_\rmA)\Big\}}{1-\exp\Big\{-D_{\frac{M-2}{M-1}}(P_\rmN||P_\rmA)\Big\}}.
\end{align}
Thus, the misclassification exponent satisfies
\begin{align}
E_i^\mathrm{EST}(\Phi^\mathrm{EST}|P_\rmN,P_\rmA)&\ge\liminf_{n\to\infty}\Big\{\frac{n-1}{n}D_{\frac{M-2}{M-1}}(P_\rmN||P_\rmA)-\frac{\log(M-1)}{n}+\frac{1}{n}\log\Big(1-\exp\big\{D_{\frac{M-2}{M-1}}(P_\rmN||P_\rmA)\big\}\Big)\Big\}\\
&=D_{\frac{M-2}{M-1}}(P_\rmN||P_\rmA).
\end{align}

\subsubsection{Converse Proof of Theorem \ref{seq_stoptime}}\label{proof:est_con}
Fix any $j\in[M]$ and any $i\in\calM_j$. Recall the definition of $\bbP_i(\cdot)$ and $\tilde\bbP_j(\cdot):=\Pr\{\cdot|\rmH_j\}$ below \eqref{fd:bkl}. We next recall the definition of the binary KL-divergence in \eqref{binaryKL}.
Define the event $\calW:=\{\phi(\bX^\tau)=\rmH_i\}$. For any two pairs of distributions $(P_\rmN,P_\rmA)\in\calP(\calX)^2$ and $(\tilP_\rmA,\tilP_\rmN)\in\calP(\calX)^2$, for any sequential test $\Phi=(\tau,\phi)$ satisfying the expected stopping time universality constraint, we have
\begin{align}
d\big(\bbP_i(\calW),\tilde\bbP_j(\calW)\big)&\le(M-2)\mathbb{E}_i[\tau]D(P_\rmN||\tilP_\rmN)+\mathbb{E}_i[\tau]D(P_\rmA||\tilP_\rmN)+\mathbb{E}_i[\tau]D(P_\rmN||\tilP_\rmA)\label{Doob2}\\
&\le(M-2)nD(P_\rmN||\tilP_\rmN)+nD(P_\rmA||\tilP_\rmN)+nD(P_\rmN||\tilP_\rmA)\label{Etau2},
\end{align}
where \eqref{Doob2} follows from the similar argument to \eqref{Doob} and \eqref{Etau2} follows from the definition of expected stopping time universality constraint in \eqref{constraint1:est}.

Fix any sequential test $\Phi$ with positive error exponents under any pair of distributions $(P_\rmN,P_\rmA)\in\calP(\calX)^2$. It follows from \eqref{def:exponent:est} that $\psi_i(\Phi|P_\rmN,P_\rmA)\to 0$ as $n\to\infty$. Furthermore, using the steps analogously to those leading to the result in \eqref{huslemma2} in Appendix \ref{proof:ep_con}, as $n\to\infty$, we have
\begin{align}
d\big(\bbP_i(\calW),\tilde\bbP_j(\calW)\big)&\ge-\log\psi_j(\Phi|\tilP_\rmA,\tilP_\rmN).\label{dlebeta}
\end{align}
Combining \eqref{Etau2} and \eqref{dlebeta}, we obtain that
\begin{align}
-\log\psi_j(\Phi|\tilP_\rmA,\tilP_\rmN)\le(M-2)nD(P_\rmN||\tilP_\rmN)+nD(P_\rmA||\tilP_\rmN)+nD(P_\rmN||\tilP_\rmA).
\end{align}
Therefore,  under any two pairs of distributions $(P_\rmN,P_\rmA)\in\calP(\calX)^2$ and $(\tilP_\rmA,\tilP_\rmN)\in\calP(\calX)^2$, the type-$j$ error exponent satisfies
\begin{align}
E_j^\mathrm{EST}(\Phi|\tilP_\rmA,\tilP_\rmN)\le(M-2)D(P_\rmN||\tilP_\rmN)+D(P_\rmA||\tilP_\rmN)+D(P_\rmN||\tilP_\rmA).\label{con:ej}
\end{align}
Since \eqref{con:ej} is true for all $(P_\rmN,P_\rmA)\in\calP(\calX)^2$, we can minimize the right hand side of \eqref{con:ej} with respect to $(P_\rmN,P_\rmA)$ by letting $P_\rmA=\tilP_\rmN$ and optimizing $P_\rmN$. Thus,
\begin{align}\label{con:est_one}
E_j^\mathrm{EST}(\Phi|\tilP_\rmA,\tilP_\rmN)&\le\min_{Q\in\calP(\calX)}(M-2)D(Q||\tilP_\rmN)+D(Q||\tilP_\rmA)\\
&=D_{\frac{M-2}{M-1}}(\tilP_\rmN||\tilP_\rmA),
\end{align}
where the last step follows from the variational form of R\'{e}nyi Divergence in \eqref{renyi:variational}.

\subsection{ Proof of Theorem \ref{seq_T_time}}\label{proof:T}

\subsubsection{Achievability Proof of Theorem \ref{seq_T_time}}\label{proof:T_ach}

\subsubsubsection{Expected Stopping Time Universality Constraint}
Fix any $\calB\in\calS(T)$. The average stopping time can be expressed as the following form:
\begin{align}
\mathbb{E}_\calB[\tau]=\sum\limits_{k=1}^{\infty}\bbP_\calB\{\tau> k\}=n-1+\sum\limits_{k=n-1}^{\infty}\bbP_\calB\{\tau> k\}.
\end{align}

Analogously to the steps in \eqref{Sige}-\eqref{typeclass}, we have
\begin{align}
\nn&\bbP_\calB\{\tau>k\}\\*
&\le\bbP_\calB\big\{\rmS_\calB(\bX^k)\ge f(k)\big\}\label{SB}\\
&=\sum_{\bQ\in\calP_k(\calX)^M: \rmG_\calB(\bQ)\ge f(k)}\Big(\prod_{i\in\calB}P_\rmA(\calT_{Q_i}^k)\Big)\times \Big(\prod_{j\in\calM_\calB}P_\rmN(\calT_{Q_j}^k)\Big)\label{type3}\\
&\le\sum_{\bQ\in\calP_k(\calX)^M: \rmG_\calB(\bQ)\ge f(k)}\exp\Big\{-k\sum_{i\in\calB}D(Q_i||P_\rmA)-k\sum_{j\in\calM_\calB}D(Q_j||P_\rmN)\Big\}\\
&\le\sum_{\bQ\in\calP_k(\calX)^M: \rmG_\calB(\bQ)\ge f(k)}\exp\bigg\{-k\bigg(T\cdot D\Big(\frac{\sum_{t\in\calB}Q_t}{T}\Big\|P_\rmA\Big)+(M-T)D\Big(\frac{\sum_{t\in\calM_\calB}Q_t}{M-T}\Big\|P_\rmN\Big)+f(k)\bigg)\bigg\}\label{sum_QB}\\
&\le\sum_{\bQ\in\calP_k(\calX)^M: \rmG_\calB(\bQ)\ge f(k)}\exp\{-kf(k)\}\label{kfk}\\
&=(k+1)^{-|\calX|},
\end{align}
where \eqref{SB} follows from the definition of $\tau$ in \eqref{tau:exactlyT}, \eqref{type3} follows from the definitions of $\rmG_\calB(\cdot)$ in \eqref{G_B} and $\rmS_\calB(\cdot)$ in Sec. \ref{test:exactlyT}, and \eqref{sum_QB} follows from the constraint $\rmG_\calB(\bQ)\ge f(k)$ and the following equations:
\begin{align}\label{equation2}
\sum_{j\in\calM_\calB}D(Q_j||P_\rmN)&=(M-T)D\Big(\frac{\sum_{t\in\calM_\calB}Q_t}{M-T}\Big\|P_\rmN\Big)
+\sum_{j\in\calM_\calB}D\Big(Q_j\Big\|\frac{\sum_{t\in\calM_\calB}Q_t}{M-T}\Big),\\
\sum_{i\in\calB}D(Q_i||P_\rmA)&=T\cdot D\Big(\frac{\sum_{l\in\calB}Q_l}{T}\Big\|P_\rmA\Big)
+\sum_{i\in\calB}D\Big(Q_i\Big\|\frac{\sum_{l\in\calB}Q_l}{T}\Big),
\end{align}
which follow from \eqref{equation1}. Similarly to the results in \eqref{sum:k+1}, it follows that
\begin{align}
\sum\limits_{k=n-1}^{\infty}\bbP_\calB\{\tau>k\}\le\sum\limits_{k=n-1}^{\infty}(k+1)^{-|\calX|}\le 1,
\end{align}
and thus,
\begin{align}
\mathbb{E}_\calB[\tau]= n-1+\sum\limits_{k=n-1}^{\infty}\bbP_\calB\{\tau> k\}\le n.
\end{align}

\subsubsubsection{Achievable Misclassification Exponent}
Fix any $\calB\in\calS(T)$. Recall the definitions of set $\calM_\calB=[M]\backslash\calB$ and $\calS_\calB(T):=\{\calC\in\calS(T):\calC\neq\calB\}$ and the threshold $f(k)=\tfrac{(M+1)|\calX|\log (k+1)}{k}$. We need the following definitions to present our proof. Given any pair of distributions $(P_\rmN,P_\rmA)$, for each $\calB\in\calS(T)$, define
\begin{align}
\Delta_\calB(n,P_\rmN,P_\rmA):=\min_{\calC\in\calS_\calB(T)}\min_{\substack{\bQ\in\calP_n(\calX)^M: \\\rmG_\calC(\bQ)\le f(n)}}\sum_{i\in\calB}D(Q_i||P_\rmA)+\sum_{t\in\calM_\calB}D(Q_t||P_\rmN).
\end{align}

Analogously to the steps in \eqref{est:tau}-\eqref{sum_k}, we upper bound the misclassification probability under hypothesis $\rmH_\calB$ as follows:
\begin{align}
\nn&\psi_\calB(\Phi|P_\rmN,P_\rmA)\\*
&=\bbP_\calB\Big\{\exists~\calC\in\calS_\calB(T)~\mathrm{s.t.}~\rmS_\calC(\bX^\tau)\le f(\tau)\Big\}\\
&\le\bbP_\calB\Big\{\exists~\calC\in\calS_\calB(T),~k\ge n-1~\mathrm{s.t.}~\rmS_\calC(\bX^k)\le f(k)\Big\}\\
&\le (|\calS(T)|-1)\sum_{k=n-1}^{\infty}\exp\Big\{-k\Big(\Delta_\calB(n-1,P_\rmN,P_\rmA)-\frac{M|\calX|\log n}{n-1}\Big)\Big\}.\label{sum_k3}
\end{align}
As $n\to\infty$, $f(n-1)\rightarrow 0$, $\frac{M|\calX|\log n}{n-1}\to 0$ and thus,
\begin{align}
\nn&\lim\limits_{n\to\infty}\Delta_\calB(n-1,P_\rmN,P_\rmA)\\*
&=\min_{\calC\in\calS_\calB(T)}\min_{(Q_1,Q_2)\in\calP(\calX)^2}|\calB\cap\calC|D(Q_1||P_\rmA)+|\calB\cap\calM_\calC| D(Q_2||P_\rmA)+(M-|\calB\cup\calC|)D(Q_2||P_\rmN)+|\calC\cap\calM_\calB|D(Q_1||P_\rmN)\label{LDB}\\
&=\min_{t\in\bbN:t\le T-1}\min_{(Q_1,Q_2)\in\calP(\calX)^2}\big((T-t)D(Q_1||P_\rmA)+(M-2T+t)D(Q_1||P_\rmN)\big)+\big((T-t)D(Q_2||P_\rmN)+tD(Q_2||P_\rmA)\big)\label{LD_t}\\
&=\mathrm{LD}_\calB(P_\rmN,P_\rmA,M,T),\label{defLD}
\end{align}
where \eqref{LD_t} follows because we use $t$ to denote $|\calB\cap\calC|$ for each $\calC\in\calS_\calB(T)$ where $0\le t\le T-1$ and thus $M-|\calB\cup\calC|=M-2T+t$ and $|\calB\cap\calM_\calC|=|\calC\cap\calM_\calB|=T-t$, and \eqref{defLD} follows from the definition of $\mathrm{LD}_\calB(P_\rmN,P_\rmA,M,T)$ in \eqref{LDB:renyi}.
Thus, it follows from \eqref{sum_k3} and \eqref{defLD} that
\begin{align}
E_\calB(\Phi|P_\rmN,P_\rmA)\ge\mathrm{LD}_\calB(P_\rmN,P_\rmA,M,T).
\end{align}

\subsubsection{Converse Proof of Theorem \ref{seq_T_time}}\label{proof:T_con}
For each $\calB\in\calS(T)$, recall the definition of $\bbP_\calB(\cdot):=\Pr\{\cdot|\rmH_\calB\}$ where for each $i\in\calB$, $X_i^\tau$ is generated i.i.d. from the anomalous distribution $P_\rmA$ and for each $j\in\calM_\calB$, $X_j^\tau$ is generated i.i.d. from the nominal distribution $P_\rmN$. Similarly we define $\tilde\bbP_\calC(\cdot):=\Pr\{\cdot|\rmH_\calC\}$ when the pair of nominal and anomalous distributions is $(\tilP_\rmN,\tilP_\rmA)$.
We next recall the definition of the binary KL-divergence in \eqref{binaryKL} and its increasing and decreasing in Appendix \ref{proof:ep_con} that $d(p,q)$ is increasing in $q$ when $q>p$ and decreasing in $q$ when $p>q$.

Fix any $\calC\in\calS(T)$ and any $\calB\in\calS_\calC(T)$. Define the event $\calW:=\{\phi(\bx^\tau)=\rmH_\calB\}$. For any two pairs of distributions $(P_\rmN,P_\rmA)\in\calP(\calX)^2$ and $(\tilP_\rmA,\tilP_\rmN)\in\calP(\calX)^2$, and any sequential test $\Phi=(\tau,\phi)$ satisfying the expected stopping time universality constraint, we have
\begin{align}
\nn&d\big(\bbP_\calB(\calW),\tilde\bbP_\calC(\calW)\big)\\*
&\le D(\bbP_\calB||\tilde\bbP_\calC)|_{\calF_\tau}\label{DP4}\\*
&=\mathbb{E}_\calB\bigg[\sum_{k\in[\tau]}\log\frac{P_\calB(\bX_k)}{\tilP_\calC(\bX_k)}\bigg]\label{KL4}\\
&=\mathbb{E}_\calB\bigg[\sum_{k\in[\tau]}\sum_{t\in\calM_{\calB\cup\calC}}\log\frac{P_\rmN(X_{t,k})}{\tilP_\rmN(X_{t,k})}+\sum_{k\in[\tau]}\sum_{i\in\calB\cap\calM_\calC}\log\frac{P_\rmA(X_{i,k})}{\tilP_\rmN(X_{i,k})}
+\sum_{k\in[\tau]}\sum_{j\in\calC\cap\calM_\calB}\log\frac{P_\rmN(X_{j,k})}{\tilP_\rmA(X_{j,k})}+\sum_{k\in[\tau]}\sum_{l\in\calB\cap\calC}\log\frac{P_\rmA(X_{l,k})}{\tilP_\rmA(X_{l,k})}\bigg]\\
&\le\mathbb{E}_\calB[\tau]\Big((M-|\calB\cup\calC|)D(P_\rmN||\tilP_\rmN)+|\calB\cap\calM_\calC|D(P_\rmA||\tilP_\rmN)
+|\calC\cap\calM_\calB|D(P_\rmN||\tilP_\rmA)+|\calB\cap\calC|D(P_\rmA||\tilP_\rmA)\Big)\label{Doob4}\\
&\le n\Big((M-|\calB\cup\calC|)D(P_\rmN||\tilP_\rmN)+|\calB\cap\calM_\calC|D(P_\rmA||\tilP_\rmN)
+|\calC\cap\calM_\calB|D(P_\rmN||\tilP_\rmA)+|\calB\cap\calC|D(P_\rmA||\tilP_\rmA)\Big)\label{Etau3},
\end{align}
where $\bbP_\calB|_{\calF_\tau}=\sum_{k\in[\tau]}P_\calB(\bX_k)$ and $\tilde\bbP_\calC|_{\calF_\tau}=\sum_{k\in[\tau]}P_\calC(\bX_k)$, \eqref{DP4} follows from the data processing inequality of divergence \cite[Theorem 2.8.1]{cover2012elements}, \eqref{KL4} follows from the definition of KL divergence in \eqref{def:KL}~\cite[P. 172]{polyanskiy2014lecture}, \eqref{Doob4} follows from Doob’s Optional Stopping Theorem \cite{klenke2014optional} since the sequence $\sum_{k\in[n]}\log\frac{P_\calB(\bX_k)}{\tilP_\calC(\bX_k)}$ is a martingale for any integer $n\in\bbN$ and \eqref{Etau3} follows from the definition of expected stopping time universality constraint in \eqref{constraint2:est}.

Fix any sequential test $\Phi$ with positive error exponents under any pair of distributions $(P_\rmN,P_\rmA)\in\calP(\calX)^2$. It follows from \eqref{def:exponent:exactT} that $\psi_\calB(\Phi|P_\rmN,P_\rmA)\to 0$ as $n\to\infty$. Thus, we have
\begin{align}
\bbP_\calB(\calW)&=\bbP_\calB(\phi(\bx^\tau)=\calB)=1-\psi_\calB(\Phi|P_\rmN,P_\rmA)\to 1,\\*
\tilde\bbP_\calC(\calW)&=\tilde\bbP_\calC(\phi(\bx^\tau)=\calB)\le\psi_\calC(\Phi|\tilP_\rmA,\tilP_\rmN)\to 0.
\end{align}
Thus, we obtain $\bbP_\calB(\calW)>\tilde\bbP_\calC(\calW)$ and $1-\psi_\calB(\Phi|P_\rmN,P_\rmA)>\psi_\calC(\Phi|\tilP_\rmA,\tilP_\rmN)$. Furthermore, as $n\to\infty$, we have
\begin{align}
d\big(\bbP_\calB(\calW),\tilde\bbP_\calC(\calW)\big)&\ge d\big(1-\psi_\calB(\Phi|P_\rmN,P_\rmA),\psi_\calC(\Phi|\tilP_\rmA,\tilP_\rmN)\big)\label{d_func3}\\*
&=-\log\psi_\calC(\Phi|\tilP_\rmA,\tilP_\rmN),\label{huslemma23}
\end{align}
where \eqref{d_func3} follows since $d(p,q)$ is decreasing in $q$ when $p>q$ and \eqref{huslemma23} follows from the similar manner to \eqref{huslemma2}.

Combining \eqref{Etau3} and \eqref{huslemma23}, we obtain that
\begin{align}
\nn&-\log\psi_\calC(\Phi|\tilP_\rmA,\tilP_\rmN) \\* &\le n\min_{\calB\in\calS_\calC(T)}(M-|\calB\cup\calC|)D(P_\rmN||\tilP_\rmN)+|\calB\cap\calM_\calC|D(P_\rmA||\tilP_\rmN)
+|\calC\cap\calM_\calB|D(P_\rmN||\tilP_\rmA)+|\calB\cap\calC|D(P_\rmA||\tilP_\rmA).
\end{align}
Therefore, under any two pairs of distributions $(P_\rmN,P_\rmA)\in\calP(\calX)^2$ and $(\tilP_\rmA,\tilP_\rmN)\in\calP(\calX)^2$, the error exponent satisfies
\begin{align}
\nn&E_\calC(\Phi|\tilP_\rmA,\tilP_\rmN)\\*
&\le\min_{\calB\in\calS_\calC(T)}(M-|\calB\cup\calC|)D(P_\rmN||\tilP_\rmN)+|\calC\cap\calM_\calB|D(P_\rmN||\tilP_\rmA)+|\calB\cap\calM_\calC|D(P_\rmA||\tilP_\rmN)
+|\calB\cap\calC|D(P_\rmA||\tilP_\rmA).\label{con:ec}
\end{align}
Since \eqref{con:ec} is true for all $(P_\rmN,P_\rmA)\in\calP(\calX)^2$, we can minimize the right hand side of \eqref{con:ec} with respect to $(P_\rmN,P_\rmA)$ and obtain
\begin{align}
\nn E_\calC(\Phi|\tilP_\rmA,\tilP_\rmN)&\le\min_{(Q_1,Q_2)\in\calP(\calX)^2}\min_{\calB\in\calS_\calC(T)}(M-|\calB\cup\calC|)D(Q_1||\tilP_\rmN)
+|\calC\cap\calM_\calB|D(Q_1||\tilP_\rmA)\\*
&\qquad\qquad\qquad\qquad\qquad+|\calB\cap\calM_\calC|D(Q_2||\tilP_\rmN)+|\calB\cap\calC|D(Q_2||\tilP_\rmA)\\
&=\mathrm{LD}_\calC(\tilP_\rmN,\tilP_\rmA,M,T).
\end{align}

\subsection{Proof of Theorem \ref{at_most_one}}\label{proof:at_most_one}

\subsubsection{Expected Stopping Time Universality Constraint}

We first prove our sequential test in Sec. \ref{test_most_one} satisfies the expected stopping time universality constraint with $n$. For each $i\in[M]$, the average stopping time under hypothesis $\rmH_i$ can be expressed as the following form:
\begin{align}\label{Etau:i}
\mathbb{E}_i[\tau]=\sum\limits_{k=1}^{\infty}\bbP_i\{\tau> k\}=n-1+\sum\limits_{k=n-1}^{\infty}\bbP_i\{\tau> k\}.
\end{align}

We next upper bound the term $\sum\limits_{k=n-1}^{\infty}\bbP_i\{\tau> k\}$.
For simplicity, define $\calA=\{t\ge n-1:\exists~l\in[M]~\mathrm{s.t.}~\rmS_l(\bx^t)\le\lambda_2~\mathrm{and}~\min_{j\in\calM_l}\rmS_j(\bx^t)>\lambda_1\}$ and $\calD=\{t\ge n-1:\forall~j\in[M],~\rmS_j(\bx^t)\le\lambda_2\}$. From the definition of stopping time $\tau$ in \eqref{mostone_stoptime},  we have $\tau=\inf\{\tilt:\tilt\in\calA\cup\calD\}$. Given any $k\ge n-1$, $\tau>k$ implies that $k\in\calA^c\cap\calD^c$.
Thus, we have
\begin{align}
\sum\limits_{k=n-1}^{\infty}\bbP_i\{\tau>k\}
&\le\sum\limits_{k=n-1}^{\infty}\bbP_i\{k\in\calA^c\cap\calD^c\}\\
&\le\sum\limits_{k=n-1}^{\infty}\bbP_i\{k\in\calA^c\}\\
&\le\sum\limits_{k=n-1}^{\infty}\bbP_i\Big\{\forall~l\in[M],~\rmS_l(\bX^k)>\lambda_2~\mathrm{or}~\min_{j\in\calM_l}\rmS_j(\bX^k)\le\lambda_1\Big\}\\
&\le\sum\limits_{k=n-1}^{\infty}\bbP_i\big\{\forall~l\in[M],~\rmS_l(\bX^k)>\lambda_2\big\}+\sum\limits_{k=n-1}^{\infty}\bbP_i\Big\{\forall~l\in[M],~\min_{j\in\calM_l}\rmS_j(\bX^k)\le\lambda_1\Big\}.\label{mostone:taugek}
\end{align}

The first term of \eqref{mostone:taugek} can be upper bounded as follows:
\begin{align}
\nn&\sum\limits_{k=n-1}^{\infty}\bbP_i\big\{\forall~l\in[M],~\rmS_l(\bX^k)>\lambda_2\big\}\\
&\le\sum\limits_{k=n-1}^{\infty}\bbP_i\big\{\rmS_i(\bX^k)>\lambda_2\big\}\\
&=\sum\limits_{k=n-1}^{\infty}\sum_{\bQ\in\calP_k(\calX)^M: \rmG_i(\bQ)>\lambda_2}P_\rmA(\calT_{Q_i}^k)\times \Big(\prod_{t\in\calM_i}P_\rmN(\calT_{Q_t}^k)\Big)\label{type2}\\
&\le\sum\limits_{k=n-1}^{\infty}\sum_{\bQ\in\calP_k(\calX)^M: \rmG_i(\bQ)>\lambda_2}\exp\Big\{-k\Big(D(Q_i||P_\rmA)+\sum_{t\in\calM_i}D(Q_t||P_\rmN)\Big)\Big\}\label{upperbound8}\\
&\le\sum\limits_{k=n-1}^{\infty}\sum_{\bQ\in\calP_k(\calX)^M: \rmG_i(\bQ)>\lambda_2}\exp\Big\{-k\Big(D(Q_i||P_\rmA)+(M-1)D\Big(\frac{\sum_{t\in\calM_i}Q_t}{M-1}\Big\|P_\rmN\Big)+\lambda_2\Big\}\label{equality2}\\
&\le\sum\limits_{k=n-1}^{\infty}\exp\Big\{-k\Big(\lambda_2-\frac{M|\calX|\log(k+1)}{k}\Big)\Big\}\label{tauk:numer}\\
&\le\sum\limits_{k=n-1}^{\infty}\exp\Big\{-k\Big(\lambda_2-\frac{M|\calX|\log n}{n-1}\Big)\Big\},\label{decrease4}\\
&=\frac{\exp\Big\{-(n-1)\Big(\lambda_2-\frac{M|\calX|\log n}{n-1}\Big)\Big\}}{1-\exp\Big\{-\Big(\lambda_2-\frac{M|\calX|\log n}{n-1}\Big)\Big\}},\label{one:taugek1}
\end{align}
where \eqref{type2} follows from the definitions of $\rmG_i(\cdot)$ in \eqref{G_i} and $\rmS_i(\cdot)$ in Sec. \ref{test_most_one}, \eqref{upperbound8} follows from the upper bound on the probability of the type class \cite[Theorem 11.1.4]{cover2012elements}, \eqref{equality2} follows from the constraint $\rmG_i(\bQ)>\lambda_2$ and the equation in \eqref{equation1}, \eqref{tauk:numer} follows from the steps analogously to those leading to the result in \eqref{knumber}, and \eqref{decrease4} follows from $\frac{M|\calX|\log(k+1)}{k}$ is decreasing with $k$.
Therefore, for each $i\in[M]$, $\sum\limits_{k=n-1}^{\infty}\bbP_i\{\forall~l\in[M],~\rmS_l(\bX^k)>\lambda_2\}\le\frac{1}{2}$ when $n$ is sufficiently large.

The second term of \eqref{mostone:taugek} can be upper bounded as follows:
\begin{align}
&\sum\limits_{k=n-1}^{\infty}\bbP_i\Big\{\forall~l\in[M],~\min_{j\in\calM_l}\rmS_j(\bX^k)\le\lambda_1\Big\}\\
&\le\sum\limits_{k=n-1}^{\infty}\bbP_i\Big\{\min_{j\in\calM_i}\rmS_j(\bX^k)\le\lambda_1\Big\}\\
&\le\sum\limits_{k=n-1}^{\infty}\bbP_i\{\exists~j\in \calM_i~\mathrm{s.t.}~\rmS_j(\bX^k)\le\lambda_1\}\\
&\le(M-1)\sum\limits_{k=n-1}^{\infty}\exp\Big\{-k\Big(\Omega_i(\lambda_1,P_\rmN,P_\rmA)-\frac{M|\calX|\log n}{n-1}\Big)\Big\}\label{taugek:omega}\\
&=(M-1)\frac{\exp\Big\{-(n-1)\Big(\Omega_i(\lambda_1,P_\rmN,P_\rmA)-\frac{M|\calX|\log n}{n-1}\Big)\Big\}}{1-\exp\Big\{-\Big(\Omega_i(\lambda_1,P_\rmN,P_\rmA)-\frac{M|\calX|\log n}{n-1}\Big)\Big\}},\label{one:taugek2}
\end{align}
where \eqref{taugek:omega} follows from the steps analogously to those leading to the result in \eqref{sum_k} and the definition of $\Omega_i(\lambda,P_\rmN,P_\rmA)$ in \eqref{Omegai}. Therefore, we have for each $i\in[M]$, $\sum\limits_{k=n-1}^{\infty}\bbP_i\{\forall~l\in[M]~\mathrm{s.t.}~\min_{j\in\calM_l}\rmS_j(\bX^k)\le\lambda_1\}\le\frac{1}{2}$ when $n$ is sufficiently large and $0<\lambda_1<\mathrm{GJS}(P_\rmN,P_\rmA,M-2)$.

Combining \eqref{mostone:taugek}, \eqref{one:taugek1} and \eqref{one:taugek2}, we have $\sum\limits_{k=n-1}^{\infty}\bbP_i\{\tau>k\}\le 1$. Therefore, it follows from \eqref{Etau:i} that $\mathbb{E}_i[\tau]\le n$ for each $i\in[M]$ when $n$ is sufficiently large and $0<\lambda_1<\mathrm{GJS}(P_\rmN,P_\rmA,M-2)$.

We next bound the expected stopping time under the null hypothesis. As discussed below \eqref{Etau:i}, similarly to the results in \eqref{one:taugek1}, the term $\sum\limits_{k=n-1}^{\infty}\bbP_\rmr\{\tau> k\}$ can be upper bounded as follows:
\begin{align}
\sum_{k=n-1}^{\infty}\bbP_\rmr\{\tau>k\}
&\le\sum\limits_{k=n-1}^{\infty}\bbP_\rmr\{k\in\calA^c\cap\calD^c\}\\
&\le\sum\limits_{k=n-1}^{\infty}\bbP_\rmr\{k\in\calD^c\}\\
&\le\sum\limits_{k=n-1}^{\infty}\bbP_\rmr\big\{\exists~j\in[M]:~\rmS_j(\bX^k)>\lambda_2\big\}\\
&\le\sum_{j\in[M]}\sum\limits_{k=n-1}^{\infty}\bbP_\rmr\big\{\rmS_j(\bX^k)>\lambda_2\big\}\\
&\le\sum_{j\in[M]}\sum_{k=n-1}^{\infty}\sum_{\bQ\in\calP_k(\calX)^M: \rmG_j(\bQ)>\lambda_2}\exp\Big\{-k\sum_{l\in[M]}D(Q_l||P_\rmN)\Big\}\\
&= \sum_{j\in[M]}\sum_{k=n-1}^{\infty}\sum_{\bQ\in\calP_k(\calX)^M: \rmG_j(\bQ)>\lambda_2}\exp\Big\{-k\Big(D(Q_j||P_\rmN)+\sum_{t\in\calM_j}D(Q_t||P_\rmN)\Big)\Big\}\\
&\le \sum_{j\in[M]}\sum_{k=n-1}^{\infty}\sum_{\substack{\bQ\in\calP_k(\calX)^M:\rmG_j(\bQ)>\lambda_2}}\exp\Big\{-k\Big(D(Q_j||P_\rmN)+(M-1)
D\Big(\frac{\sum_{t\in\calM_j}Q_t}{M-1}\Big\|P_\rmN\Big)+\lambda_2\Big\}\\
&\le M\sum_{k=n-1}^{\infty}\exp\Big\{-k\Big(\lambda_2-\frac{M|\calX|\log n}{n-1}\Big)\Big\}\label{decrease1}\\
&= M\frac{\exp\Big\{-(n-1)\Big(\lambda_2-\frac{M|\calX|\log n}{n-1}\Big)\Big\}}{1-\exp\Big\{-\Big(\lambda_2-\frac{M|\calX|\log n}{n-1}\Big)\Big\}}.\label{mostone:Er}
\end{align}

Thus, it follows that $\sum\limits_{k=n-1}^{\infty}\bbP_\rmr\{\tau> k\}\le 1$ when $n$ is sufficiently large.
The average stopping time under hypothesis $\rmH_\rmr$ can also be expressed as the following form:
\begin{align}\label{Etau:r}
\mathbb{E}_\rmr[\tau]=\sum\limits_{k=1}^{\infty}\bbP_\rmr\{\tau> k\}=n-1+\sum\limits_{k=n-1}^{\infty}\bbP_\rmr\{\tau> k\}.
\end{align}
Therefore, we have $\mathbb{E}_\rmr[\tau]\le n$ when $n$ is sufficiently large for any positive $\lambda_2\in\bbR_+$.

\subsubsection{Achievable Error Exponents}

Fix $i\in[M]$. The misclassification probability can be upper bounded as follows:
\begin{align}
\psi_i(\Phi|P_\rmN,P_\rmA)
&=\bbP_i\{\phi(\bX^\tau)\neq\{\rmH_i,\mathrm{H_r}\}\}\\
&\le\sum_{k=n-1}^{\infty}\bbP_i\Big\{\exists~j\in\calM_i:~\rmS_j(\bX^k)\le\lambda_2~\mathrm{and}~\min_{t\in\calM_j}\rmS_t(\bX^k)>\lambda_1\Big\}\\
&\le\sum_{k=n-1}^{\infty}\bbP_i\big\{\exists~j\in\calM_i,~\forall~t\in\calM_j:~\rmS_t(\bX^k)>\lambda_1\big\}\\
&\le(M-1)\sum_{k=n-1}^{\infty}\bbP_i\big\{\rmS_i(\bX^k)>\lambda_1\big\}\\
&\le(M-1)\frac{\exp\Big\{-(n-1)\Big(\lambda_1-\frac{M|\calX|\log n}{n-1}\Big)\Big\}}{1-\exp\Big\{-\Big(\lambda_1-\frac{M|\calX|\log n}{n-1}\Big)\Big\}}\label{mostone:mis},
\end{align}
where \eqref{mostone:mis} follows from the steps analogously to those leading to the result in \eqref{one:taugek1}.
As $n\to\infty$, $\frac{M|\calX|\log n}{n-1}\to 0$ and thus, \eqref{mostone:mis} tends to $(M-1)\frac{\exp\{-(n-1)\lambda_1\}}{1-\exp\{-\lambda_1\}}$. Therefore, we have the misclassification exponent satisfies
\begin{align}
E_i(\Phi|P_\rmN,P_\rmA)\ge\lambda_1.
\end{align}

Furthermore, the false reject probability can be upper bounded as follows:
\begin{align}
\zeta_i(\Phi|P_\rmN,P_\rmA)&=\bbP_i\{\phi(\bX^\tau)=\mathrm{H_r}\}\\
&\le\bbP_i\Big\{\forall~j\in[M],~\rmS_j(\bX^\tau)\le\lambda_2\Big\}\\
&\le\bbP_i\Big\{\forall~j\in\calM_i,~\rmS_j(\bX^\tau)\le\lambda_2\Big\}\\
&\le\frac{\exp\Big\{-(n-1)\Big(\Omega_i(\lambda_2,P_\rmN,P_\rmA)-\frac{M|\calX|\log n}{n-1}\Big)\Big\}}{1-\exp\Big\{-\Big(\Omega_i(\lambda_2,P_\rmN,P_\rmA)-\frac{M|\calX|\log n}{n-1}\Big)\Big\}}\label{mostone:fr},
\end{align}
where \eqref{mostone:fr} follows from the steps analogously to those leading to the result in \eqref{sum_k} and the definition of $\Omega_i(\lambda,P_\rmN,P_\rmA)$ in \eqref{Omegai}.
As $n\to\infty$, $\frac{M|\calX|\log n}{n-1}\to 0$ and thus, \eqref{mostone:fr} tends to $\frac{\exp\{-(n-1)\Omega_i(\lambda_2,P_\rmN,P_\rmA)\}}{1-\exp\{-\Omega_i(\lambda_2,P_\rmN,P_\rmA)\}}$. Therefore, we have false reject exponent satisfies
\begin{align}
E_{\mathrm{fr},i}(\Phi|P_\rmN,P_\rmA)\ge\Omega_i(\lambda_2,P_\rmN,P_\rmA).
\end{align}

Finally, the false alarm probability can be upper bounded as follows:
\begin{align}
\mathrm{P_{fa}}(\Phi|P_\rmN,P_\rmA)&=\bbP_\rmr\{\phi(\bX^\tau)\neq\mathrm{H_r}\}\\
&\le\sum_{k=n-1}^{\infty}\bbP_\rmr\{\exists~j\in[M],~\rmS_j(\bX^k)>\lambda_1\}\\
&\le\sum_{k=n-1}^{\infty}\sum_{j\in[M]}\bbP_\rmr\{\rmS_j(\bX^k)>\lambda_1\}\\
&=M\frac{\exp\Big\{-(n-1)\Big(\lambda_1-\frac{M|\calX|\log n}{n-1}\Big)\Big\}}{1-\exp\Big\{-\Big(\lambda_1-\frac{M|\calX|\log n}{n-1}\Big)\Big\}}\label{mostone:fa},
\end{align}
where \eqref{mostone:fa} follows from the steps analogously to those leading to the result in \eqref{mostone:Er}.
As $n\to\infty$, $\frac{M|\calX|\log n}{n-1}\to 0$ and thus, \eqref{mostone:fa} tends to $M\frac{\exp\{-(n-1)\lambda_1\}}{1-\exp\{-\lambda_1\}}$. Therefore, we have the false alarm exponent satisfies
\begin{align}
E_{\mathrm{fa}}(\Phi|P_\rmN,P_\rmA)\ge\lambda_1.
\end{align}

\subsection{Justification of \eqref{compare:exactT}}\label{proof:Tcompare}
Fix any set $\calB\in\calS(T)$. Given any pair of distributions $(P_\rmN,P_\rmA)\in\calP(\calX)^2$, recall the error exponent of the fixed-length test as follows
\begin{align}
E_\calB(\Phi_{\rm LNV}|P_\rmN,P_\rmA)&=\min_{\calC\in\calS_\calB(T)}\min_{\substack{\bQ\in\calP(\calX)^M:\\\rmG_{\mathrm{Li},\calB}(\bQ)\ge\rmG_{\mathrm{Li},\calC}(\bQ)}}
\sum\limits_{j\in\calM_\calB}D(Q_j||P_\rmN)+\sum\limits_{i\in\calB}D(Q_i||P_\rmA).
\end{align}
Given any pair of distributions $(P,Q)\in\calP(\calX)^2$, for any set $\calC\in\calS_\calB(T)$, recall the tuple of distributions $\bQ=(Q_1,\ldots,Q_M)$ where $Q_i=P$ for all $i\in\calC$ and $Q_j=Q$ for all $j\in\calM_\calC$ satisfies the constraint $\rmG_{\mathrm{Li},\calB}(\bQ)\ge\rmG_{\mathrm{Li},\calC}(\bQ)$. Thus, we have
\begin{align}
\nn&E_\calB(\Phi_{\rm LNV}|P_\rmN,P_\rmA)\\*
&\le\min_{\calC\in\calS_\calB(T)}\min_{(P,Q)\in\calP(\calX)^2}|\calB\cap\calC|D(P||P_\rmA)+|\calB\cap\calM_\calC| D(Q||P_\rmA)+(M-|\calB\cup\calC|)D(Q||P_\rmN)+|\calC\cap\calM_\calB|D(P||P_\rmN)\\
&=\mathrm{LD}_\calB(P_\rmN,P_\rmA,M,T)\label{compare}\\
&\le E_\calB(\Phi_\mathrm{seq}|P_\rmN,P_\rmA),\label{leE}
\end{align}
where \eqref{compare} follows from the steps in \eqref{LDB}-\eqref{defLD} and \eqref{leE} follows from the result in Theorem \ref{seq_T_time}.

%
%
%

\subsection{Proof of Theorem \ref{at_most_T}}\label{proof:at_most_T}
\subsubsection{Expected Stopping Time Universality Constraint}
We first prove our sequential test in Sec. \ref{test_most_T} satisfies the expected stopping time universality constraint with $n$. Fix any $\mu\in[T]$ and $\calB\in\calS(\mu)$. The average stopping time under hypothesis $\rmH_\calB$ can be expressed as the following form:
\begin{align}\label{Etau:T}
\mathbb{E}_\calB[\tau]=\sum\limits_{k=1}^{\infty}\bbP_\calB\{\tau> k\}=n-1+\sum\limits_{k=n-1}^{\infty}\bbP_\calB\{\tau> k\}.
\end{align}

We next upper bound the term $\sum\limits_{k=n-1}^{\infty}\bbP_\calB\{\tau> k\}$. For simplicity, define $\calF=\{t\ge n-1:\exists~\calC\in\calS~\mathrm{s.t.}~\rmS_\calC(\bx^t)\le\lambda_2,~\mathrm{and}~\min_{\calD\in\calS_\calC}\rmS_\calC(\bx^t)>\lambda_1\}$ and $\calR=\{t\ge n-1:\forall~\calC\in\calS~\mathrm{s.t.}~\rmS_\calC(\bx^t)\le\lambda_2\}$. As discussed below \eqref{Etau:i}, given $k\ge n-1$, it follows from the definition of stopping time in \eqref{mostT_stoptime} that the event $\tau>k$ implies that $k\in\calF^c\cap\calR^c$. Thus, we have
\begin{align}
\sum\limits_{k=n-1}^{\infty}\bbP_\calB\{\tau>k\}
&\le\sum\limits_{k=n-1}^{\infty}\bbP_\calB\{k\in\calF^c\cap\calR^c\}\\
&\le\sum\limits_{k=n-1}^{\infty}\bbP_\calB\{k\in\calF^c\}\\
&\le\sum\limits_{k=n-1}^{\infty}\bbP_\calB\Big\{\forall~\calC\in\calS,~\rmS_\calC(\bX^k)>\lambda_2,~\mathrm{or}~\min_{\calD\in\calS_\calC}\rmS_\calC(\bX^k)\le\lambda_1\Big\}\\
&\le\sum\limits_{k=n-1}^{\infty}\bbP_\calB\big\{\forall~\calC\in\calS,~\rmS_\calC(\bX^k)>\lambda_2\big\}+
\sum\limits_{k=n-1}^{\infty}\bbP_\calB\Big\{\forall~\calC\in\calS,~\min_{\calD\in\calS_\calC}\rmS_\calD(\bX^k)\le\lambda_1\Big\}.\label{mostT:taugek}
\end{align}
The first term of \eqref{mostT:taugek} can be upper bounded as follows:
\begin{align}
\nn&\sum\limits_{k=n-1}^{\infty}\bbP_\calB\{\forall~\calC\in\calS:\rmS_\calC(\bX^k)>\lambda_2\}\\
&\le\sum_{k=n-1}^{\infty}\bbP_\calB\{\rmS_\calB(\bX^k)>\lambda_2\}\label{SBge}\\
&\le\sum_{k=n-1}^{\infty}\sum_{\substack{\bQ\in\calP_k(\calX)^M:\rmG_\calB(\bQ)>\lambda_2}}
\exp\Big\{-k\Big(\sum\limits_{j\in\calM_\calB}D(Q_j||P_\rmN)+\sum\limits_{i\in\calB}D(Q_i||P_\rmA)\Big)\Big\}\label{mostT:upperbound1}\\
&\le\sum_{k=n-1}^{\infty}\sum_{\substack{\bQ\in\calP_k(\calX)^M:\\\rmG_\calB(\bQ)>\lambda_2}}
\exp\bigg\{-k\bigg(s\cdot D\Big(\frac{\sum_{t\in\calB}Q_t}{s}\Big\|P_\rmA\Big)+(M-s)D\Big(\frac{\sum_{t\in\calM_\calB}Q_t}{M-s}\Big\|P_\rmN\Big)+\lambda_2\bigg)\bigg\}\label{mostT:sumGB}\\
&\le\sum_{k=n-1}^{\infty}\exp\Big\{-k\Big(\lambda_2-\frac{M|\calX|\log n}{n-1}\Big)\Big\}\label{decrease7}\\
&=\frac{\exp\Big\{-(n-1)\Big(\lambda_2-\frac{M|\calX|\log n}{n-1}\Big)\Big\}}{1-\exp\Big\{-\Big(\lambda_2-\frac{M|\calX|\log n}{n-1}\Big)\Big\}},\label{mostT:tauk1}
\end{align}
where \eqref{mostT:upperbound1} follows from the upper bound on the probability of the type class \cite[Theorem 11.1.4]{cover2012elements} and the definitions of $\rmG_\calB(\cdot)$ in \eqref{G_B} and $\rmS_\calB(\cdot)$ in Sec. \ref{test_most_T}, \eqref{mostT:tauk1} follows from the constraint $\rmG_\calB(\bQ)>\lambda_2$ and the following equations:
\begin{align}
\sum_{j\in\calM_\calB}D(Q_j||P_\rmN)&=(M-\mu)D\Big(\frac{\sum_{t\in\calM_\calB}Q_t}{M-\mu}\Big\|P_\rmN\Big)
+\sum_{j\in\calM_\calB}D\Big(Q_j\Big\|\frac{\sum_{t\in\calM_\calB}Q_t}{M-\mu}\Big),\label{T:sumPN}\\
\sum_{i\in\calB}D(Q_i||P_\rmA)&=\mu\cdot D\Big(\frac{\sum_{l\in\calB}Q_l}{\mu}\Big\|P_\rmA\Big)
+\sum_{i\in\calB}D\Big(Q_i\Big\|\frac{\sum_{l\in\calB}Q_l}{\mu}\Big),
\end{align}
which follow from \eqref{equation1}, and \eqref{decrease7} follows from the similar manner to \eqref{decrease4}.
Using \eqref{mostT:tauk1}, we have
\begin{align}\label{mostT:tauk12}
\sum\limits_{k=n-1}^{\infty}\bbP_\calB\{\forall~\calC\in\calS,~\rmS_\calC(\bX^k)>\lambda_2\}\le\frac{1}{2},
\end{align}
when $n$ is sufficiently large.

Given any $\lambda\in\bbR_+$ and any pair of distributions $(P_\rmN,P_\rmA)\in\calP(\calX)^2$, define
\begin{align}
\omega_\calB(\lambda,P_\rmN,P_\rmA):=\min_{\calC\in\calS_\calB}\min_{\substack{\bQ\in\calP(\calX)^M: \\ \rmG_\calC(\bQ)\le\lambda}}\sum\limits_{j\in\calM_\calB}D(Q_j||P_\rmN)+\sum\limits_{i\in\calB}D(Q_i||P_\rmA),
\end{align}
which is non-increasing in $\lambda$. In particular, $\omega_\calB(\lambda,P_\rmN,P_\rmA)=0$ if $\lambda>\tilde\lambda_1(P_\rmN,P_\rmA)$ where $\tilde\lambda_1(P_\rmN,P_\rmA)$ is defined in \eqref{tillambda1}.
The second term of \eqref{mostT:taugek} can be upper bounded as follows:
\begin{align}
\sum\limits_{k=n-1}^{\infty}\bbP_\calB\Big\{\forall~\calC\in\calS,~\min_{\calD\in\calS_\calC}\rmS_\calD(\bX^k)\le\lambda_1\Big\}
&\le\sum\limits_{k=n-1}^{\infty}\bbP_\calB\Big\{\min_{\calD\in\calS_\calB}\rmS_\calD(\bX^k)\le\lambda_1\Big\}\\
&=(|\calS|-1)\frac{\exp\Big\{-(n-1)\Big(\omega_\calB(\lambda_1,P_\rmN,P_\rmA)-\frac{M|\calX|\log n}{n-1}\Big)\Big\}}{1-\exp\Big\{-\Big(\omega_\calB(\lambda_1,P_\rmN,P_\rmA)-\frac{M|\calX|\log n}{n-1}\Big)\Big\}},\label{mostT:tauk2}
\end{align}
where \eqref{mostT:tauk2} follows from the steps analogously to those leading to the result in \eqref{one:taugek2}. Using \eqref{mostT:tauk2}, we have
\begin{align}\label{mostT:tauk22}
\sum\limits_{k=n-1}^{\infty}\bbP_\calB\big\{\forall~\calC\in\calS~\mathrm{s.t.}~\min_{\calD\in\calS_\calC}\rmS_\calC(\bX^k)\le\lambda_1\big\}\le\frac{1}{2},
\end{align}
when $n$ is sufficiently large and $0<\lambda_1<\tilde\lambda_1(P_\rmN,P_\rmA)$.

Combining \eqref{mostT:taugek}, \eqref{mostT:tauk12} and \eqref{mostT:tauk22}, we have $\sum\limits_{k=n-1}^{\infty}\bbP_i\{\tau>k\}\le 1$. Therefore, it follows from \eqref{Etau:T} that $\mathbb{E}_\calB[\tau]\le n$ when $n$ is sufficiently large and $0<\lambda_1<\tilde\lambda_1(P_\rmN,P_\rmA)$.

Similarly, we have the average stopping time under hypothesis $\rmH_\rmr$ as follows:
\begin{align}\label{Ertau:T}
\mathbb{E}_\rmr[\tau]=n-1+\sum\limits_{k=n-1}^{\infty}\bbP_\rmr\{\tau> k\}.
\end{align}
As discussed below \eqref{Etau:T}, similarly to the results in \eqref{mostT:tauk1}, the term $\sum\limits_{k=n-1}^{\infty}\bbP_\rmr\{\tau> k\}$ can be upper bounded as follows:
\begin{align}
\nn&\sum\limits_{k=n-1}^{\infty}\bbP_\rmr\{\tau> k\}\\*
&\le\sum\limits_{k=n-1}^{\infty}\bbP_\rmr\{k\in\calF^c\cap\calR^c\}\\
&\le\sum\limits_{k=n-1}^{\infty}\bbP_\rmr\{k\in\calR^c\}\\
&\le\sum\limits_{k=n-1}^{\infty}\bbP_\rmr\{\exists~\calC\in\calS~\mathrm{s.t.}~\rmS_\calC(\bX^k)>\lambda_2\}\\
&\le\sum\limits_{\calC\in\calS}\sum\limits_{k=n-1}^{\infty}\bbP_\rmr\{\rmS_\calC(\bX^k)>\lambda_2\}\\
&\le\sum\limits_{\calC\in\calS}\sum\limits_{k=n-1}^{\infty}\sum_{\substack{\bQ\in\calP_k(\calX)^M:\\\rmG_\calC(\bQ)>\lambda_2}}
\exp\Big\{-k\Big(\sum\limits_{j\in[M]}D(Q_j||P_\rmN)\Big)\Big\}\\
&\le\sum\limits_{\calC\in\calS}\sum\limits_{k=n-1}^{\infty}\sum_{\substack{\bQ\in\calP_k(\calX)^M:\\\rmG_\calC(\bQ)>\lambda_2}}
\exp\bigg\{-k\bigg(|\calC|\cdot D\Big(\frac{\sum_{t\in\calC}Q_t}{|\calC|}\Big\|P_\rmN\Big)+(M-|\calC|)D\Big(\frac{\sum_{t\in\calM_\calC}Q_t}{M-|\calC|}\Big\|P_\rmN\Big)+\lambda_2\bigg)\bigg\}\label{mostT:sumG}\\
&\le|\calS|\sum\limits_{k=n-1}^{\infty}\exp\Big\{-k\Big(\lambda_2-\frac{M|\calX|\log n}{n-1}\Big)\Big\}\label{decrease5}\\
&=|\calS|\frac{\exp\Big\{-(n-1)\Big(\lambda_2-\frac{M|\calX|\log n}{n-1}\Big)\Big\}}{1-\exp\Big\{-\Big(\lambda_2-\frac{M|\calX|\log n}{n-1}\Big)\Big\}}\label{mostT:taur},
\end{align}
where \eqref{mostT:sumG} follows the following equations:
\begin{align}
\sum_{j\in\calM_\calC}D(Q_j||P_\rmN)&=(M-|\calC|)D\Big(\frac{\sum_{t\in\calM_\calC}Q_t}{M-|\calC|}\Big\|P_\rmN\Big)
+\sum_{j\in\calM_\calC}D\Big(Q_j\Big\|\frac{\sum_{t\in\calM_\calC}Q_t}{M-|\calC|}\Big),\\
\sum_{i\in\calC}D(Q_i||P_\rmN)&=|\calC|\cdot D\Big(\frac{\sum_{l\in\calC}Q_l}{|\calC|}\Big\|P_\rmN\Big)
+\sum_{i\in\calC}D\Big(Q_i\Big\|\frac{\sum_{l\in\calC}Q_l}{|\calC|}\Big),
\end{align}
and \eqref{decrease5} follows from the steps analogously to those leading to the result in \eqref{decrease4}.
Thus, we have $\sum\limits_{k=n-1}^{\infty}\bbP_\rmr\{\tau> k\}\le 1$ when $n$ is sufficiently large. Therefore, it follows from \eqref{Ertau:T} that $\mathbb{E}_\rmr[\tau]\le n$ when $n$ is sufficiently large for any positive $\lambda_2\in\bbR_+$.

\subsubsection{Achievable Error Exponents}

Fix any $\calB\in\calS$. Recall the definition of $\calS_\calB=\{\calC\in\calS:\calC\neq\calB\}$.
The misclassification probability can be upper bounded as follows:
\begin{align}
\nn&\psi_\calB(\Phi|P_\rmN,P_\rmA)\\*
&=\bbP_\calB\{\Phi(\bX^\tau)\neq\{\rmH_\calB,\mathrm{H_r}\}\}\\
&\le\bbP_\calB\{\exists~\calC\in\calS_\calB:\rmS_\calC(\bX^\tau)\le\lambda_2,\;\mathrm{and}\;\min_{\calD\in\calS_\calC}\rmS_\calD(\bX^\tau)>\lambda_1\}\label{twoterms}\\
&\le\sum_{k=n-1}^{\infty}\bbP_\calB\{\exists~\calC\in\calS_\calB,~\forall~\calD\in\calS_\calC:\rmS_\calD(\bX^k)>\lambda_1\}\\
&\le|\calS_\calB|\sum_{k=n-1}^{\infty}\bbP_\calB\{\rmS_\calB(\bX^k)>\lambda_1\}\\
&=|\calS_\calB|\frac{\exp\Big\{-(n-1)\Big(\lambda_1-\frac{M|\calX|\log n}{n-1}\Big)\Big\}}{1-\exp\Big\{-\Big(\lambda_1-\frac{M|\calX|\log n}{n-1}\Big)\Big\}},\label{mostT:first}
\end{align}
where \eqref{mostT:first} follows from the steps analogously to those leading to the result in \eqref{mostT:tauk1}.
\begin{align}
E_\calB(\Phi|P_\rmN,P_\rmA)\ge\lambda_1
\end{align}

The false reject probability can be upper bounded as follows:
\begin{align}
\zeta_\calB(\Phi|P_\rmN,P_\rmA)
&=\bbP_\calB\{\Phi(\bX^\tau)=\mathrm{H_r}\}\\
&=\sum_{k=n-1}^{\infty}\bbP_\calB\big\{\forall~\calC\in\calS,~\rmS_\calC(\bX^k)\le\lambda_2\big\}\\
&\le\sum_{k=n-1}^{\infty}\bbP_\calB\{\forall~\calC\in\calS_\calB(\mu),~\rmS_\calC(\bX^k)\le\lambda_2\}\\
&\le\sum_{k=n-1}^{\infty}\max_{\calC\in\calS_\calB(\mu)}\bbP_\calB\{\rmS_\calC(\bX^k)\le\lambda_2\}\\
&\le\frac{\exp\Big\{-(n-1)\Big(\Omega_\calB(\lambda_2,\mu,P_\rmN,P_\rmA)+\frac{M|\calX|\log n}{n-1}\Big)\Big\}}{1-\exp\Big\{-\Big(\Omega_\calB(\lambda_2,\mu,P_\rmN,P_\rmA)+\frac{M|\calX|\log n}{n-1}\Big)\Big\}},\label{mostT:fr}
\end{align}
where \eqref{mostT:fr} follows from the steps analogously to those leading to the result in \eqref{mostone:fr} and the definition of $\Omega_\calB(\lambda_2,\mu,P_\rmN,P_\rmA)$ in \eqref{OmegaB}.
Since $\frac{M|\calX|\log n}{n-1}\to 0$ as $n\to\infty$, we obtain the false reject exponent satisfies
\begin{align}
E_{\mathrm{fr},\calB}(\Phi|P_\rmN,P_\rmA)\ge\Omega_\calB(\lambda_2,\mu,P_\rmN,P_\rmA).
\end{align}

The false alarm probability can be upper bounded as follows:
\begin{align}
\mathrm{P_{fa}}(\Phi|P_\rmN,P_\rmA)&=\bbP_\rmr\{\Phi(\bX^\tau)\neq\mathrm{H_r}\}\\
&\le\bbP_\rmr\{\exists~\calC\in\calS,~\rmS_\calC(\bX^\tau)>\lambda_1\}\\
&\le\sum_{k=n-1}^{\infty}\sum\limits_{\calC\in\calS}\bbP_\rmr\{\rmS_\calC(\bX^k)>\lambda_1\}\\
&\le|\calS|\frac{\exp\Big\{-(n-1)\Big(\lambda_1+\frac{M|\calX|\log n}{n-1}\Big)\Big\}}{1-\exp\Big\{-\Big(\lambda_1+\frac{M|\calX|\log n}{n-1}\Big)\Big\}},\label{mostT:fa}
\end{align}
where \eqref{mostT:fa} follows from the steps analogously to those leading to the result in \eqref{mostT:taur}.
Since $\frac{M|\calX|\log n}{n-1}\to 0$ as $n\to\infty$, we obtain the false alarm exponent satisfies
\begin{align}
E_\mathrm{fa}(\Phi|P_\rmN,P_\rmA)\ge\lambda_1.
\end{align}

\bibliographystyle{IEEEtran}
\bibliography{IEEEfull_paper}

\end{document}